\documentclass[twocolumn,notitlepage,floatfix,reprint,superscriptaddress]{revtex4-2}

\usepackage[lmargin=.7in,rmargin=.7in,tmargin=.6in,bmargin=0.8in]{geometry}
\usepackage{graphicx}
\usepackage{amsmath}
\usepackage{amsthm}
\usepackage{amssymb}
\usepackage{latexsym}
\usepackage{array}
\usepackage{hyperref}
\usepackage{amsfonts}
\usepackage{dsfont}
\usepackage{mathrsfs}
\usepackage{verbatim}
\usepackage{bbold}
\usepackage[normalem]{ulem}
\usepackage{upgreek}
\usepackage{makecell}
\usepackage{adjustbox}
\usepackage{algorithm}
\usepackage{algpseudocode}
\usepackage{color}
\usepackage{bm}
\usepackage{times}
\usepackage{booktabs}
\usepackage{multirow}

\newcommand{\bra}[1]{\left<#1\right|}
\newcommand{\ket}[1]{\left|#1\right>}
\newcommand{\abs}[1]{\left|#1\right|}

\newcommand{\ketbra}[2]{\ket{#1}\!\!\bra{#2}}
\newcommand{\E}{\mathbb{E}}

\theoremstyle{plain}
\newtheorem{theorem}{Theorem}
\newtheorem{proposition}{Proposition}

\newtheorem{corollary}{Corollary}
\newtheorem{definition}{Definition}
\newtheorem{remark}{Remark}
\newtheorem{result}{Result}

\DeclareMathOperator{\tr}{Tr}
\DeclareMathOperator{\Var}{Var}

\makeatletter
\renewcommand\part[1]{%
  \clearpage
  \onecolumngrid
  \section*{#1}
  
}
\makeatother

\makeatletter
\let\origaddcontentsline\addcontentsline
\renewcommand{\addcontentsline}[3]{}
\makeatother

\begin{document}

\part{}

\title{Causal Fisher-Information Inequalities: Classical Causal Model Falsification and Metrological Advantage}

\author{Jeongho~Bang}\email{jbang@yonsei.ac.kr}
\affiliation{Institute for Convergence Research and Education in Advanced Technology, Yonsei University, Seoul 03722, Republic of Korea}
\affiliation{Department of Quantum Information, Yonsei University, Incheon 21983, Republic of Korea}

\author{Su-Yong~Lee}\email{suyong2@add.re.kr}
\affiliation{Agency for Defense Development, Daejeon 34186, Republic of Korea}
\affiliation{Weapon Systems Engineering, ADD school, University of Science and Technology, Daejeon, 34060, Republic of Korea}

\date{March 20, 2026}

\begin{abstract}
Fisher-information inequalities have recently been used as operational witnesses of nonclassical metrological behavior, but their physical meaning is often tied to a particular narrative, such as, segmented dynamics or discrete trajectories. We show that a broader interpretation is available and, in fact, more natural: once an experiment is assumed to admit a classical causal model specified by a directed acyclic graph, conditional independences, and modular parameter dependence, the corresponding Fisher informations are forced to obey causal Fisher-information inequalities (CFIIs). The backbone result is a causal-path series law: for an additive causal parameter that propagates through a classical path $A\to C\to B$, the inverse Fisher information behaves as an information resistance and must add in series. Consequently, any CFII violation is a rigorous falsification of the entire classical causal model class. We then show that the violation is automatically a metrological resource certificate, because it implies a precision that no member of the classical causal class can attain. The gain mechanism is identified as Fisher-information synergy, i.e. off-diagonal score correlations that classical modularity forbids. A single-qubit coherent-rotation example demonstrates the deterministic CFII violation, estimator-level achievability of the resulting gain, robustness against split-optimized classical benchmarks, and a chain-amplified advantage in long causal decompositions. Finally, an AI-assisted adversarial finite-data stress test shows that the witness remains certifiable under the realistic visibility loss and readout error, while optimized modular classical causal models saturate but do not cross the CFII frontier.
\end{abstract}

\maketitle


Precision metrology is governed by how rapidly outcome probabilities deform under a parameter change. In the operational language of classical Fisher information (FI), this deformation directly sets the asymptotic Cram\'er--Rao limit~\cite{Frieden2004,Cramer1946,Helstrom1976}. Quantum metrology traditionally frames advantage relative to the quantum Fisher information or to explicit classical benchmarks~\cite{BraunsteinCaves1994,Paris2009,Demkowicz2020,Liu2020}. An experimentally observed FI advantage therefore has a precise meaning only after the benchmark model has been specified: it shows that the data beat the precision allowed by that model, not that they beat every conceivable classical description. In particular, many practically relevant FI inequalities are valid only when the process is assumed to decompose into identifiable segments, contexts, or intermediate mediators with independent local parameter dependences.

The decomposition requirements just described are therefore not merely technical assumptions about an estimator or a convenient dynamical narrative. They assert, explicitly or implicitly, which variables mediate the process, which conditional independences hold among those variables, and which local mechanisms are allowed to carry dependence on the unknown parameter. We make these ``causal'' contents explicit. A classical causal explanation is specified by a Bayesian network (possibly with latent variables), whose factorization induces conditional-independence relations among the relevant variables~\cite{Pearl2009}. In a metrological problem, compatibility with such a model means more than reproducing a single marginal distribution: the whole parameterized family must arise from graph-compatible local kernels whose allowed $\theta$-dependence is modular. Consequently, the score functions, i.e. the derivatives of the log-likelihoods with respect to $\theta$, and the corresponding Fisher informations must respect the graph-induced modularity and the assigned local parameter dependence. This converts FI inequalities into causal-model criteria. The proper interpretation of a violation is sharp: the observed statistics cannot be generated by any member of the assumed classical causal model class.

This perspective yields a unified program. First, derive FI inequalities from a causal model class rather than an ad hoc dynamical narrative. Second, use violations to falsify the corresponding classical causal explanations. Third, reinterpret the same violations as resource certificates: the information that exceeds the causal-model frontier is precisely the information that can lower the estimator variance relative to that frontier.

We note that because the main text is intentionally compact, see also the supplementary information (SI) that develops the above-described program in full generality. The main text--to--SI correspondence map is provided in Methods.

\section*{RESULTS}

\subsection*{Causal Fisher-information inequalities as causal-model criteria}

We consider parameterized experiments specified by conditional distributions $p(x|s,\theta)$, where $\theta$ is an unknown parameter and $s$ labels an experimental context such as a measurement setting, intermediate intervention, control choice, or segmentation strategy. Throughout, a context is simply an experimentally specified way of producing a record $x$ from which information about $\theta$ may be extracted. For a regular discrete model, the local score and FI are
\begin{eqnarray}
\ell_s(x|\theta)=\partial_\theta\ln p(x|s,\theta),\qquad
F_s(\theta)=\E[\ell_s(X|\theta)^2],
\label{eq:scoreFI_basic}
\end{eqnarray}
with the analogous integral expression for continuous records. If data are gathered independently across contexts, the total FI is additive across contexts, so the effective information is determined by both the single-context statistics and the sample allocation. This operational layer is completely general and already accommodates quantum experiments through the Born rule $p(x | s,\theta)=\tr{\hat{M}_{x|s}\hat{\rho}(\theta)}$, where $\hat{M}_{x|s}$ is the positive operator-valued measure element associated with outcome $x$ in context $s$, and $\hat{\rho}(\theta)$ is the parameter-dependent quantum state before measurement.

A classical causal model class $\mathcal{M}$ is then introduced by a directed acyclic graph (DAG), possibly with latent variables, together with a specification of how the parameter enters the local modules. Here ``modular'' means that, after the graph is fixed, each local conditional kernel has its own allowed $\theta$-dependence and this dependence is not shared arbitrarily across other kernels. Compatibility with $\mathcal{M}$ means that the observed statistics arise as marginals of a joint model that factorizes according to the DAG and respects this prescribed modular parameterization. The crucial observation is that Fisher information is a functional of the probability model; thus, the graph-induced conditional independences and modular factorization translate into model-dependent constraints on FI.

\begin{theorem}[CFII principle]
Let $\mathcal{M}$ be a classical causal model class defined by a DAG, its implied conditional independences, and regularity assumptions ensuring that parameter dependence factorizes consistently with the graph. Then, there exists a family of inequalities among the Fisher informations associated with the relevant contexts or parameter segments, and every operational model compatible with $\mathcal{M}$ must satisfy them. We call these constraints causal Fisher-information inequalities (CFIIs). Any observed violation falsifies $\mathcal{M}$.
\end{theorem}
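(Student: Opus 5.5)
The statement is structural, so I will prove it in two parts. First I show that every model in $\mathcal{M}$ has a score function that splits into local scores. Then I show that this splitting forces concrete inequalities; the path law $A\to C\to B$ is the representative case that gives the family its content. The falsification clause follows at once by contraposition: if all members of $\mathcal{M}$ satisfy the inequalities, observed Fisher informations that violate one cannot come from any member.

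\emph{Step 1 (score decomposition).} Fix a DAG $G$ with nodes $V_1,\dots,V_n$, some possibly latent. Compatibility means $p(v|\theta)=\prod_i p_i(v_i|\mathrm{pa}_i,\theta_i)$, where modularity assigns the dependence $\theta_i=\theta_i(\theta)$ to the kernels in a set $\Theta$. Regularity (differentiability and dominated convergence) lets me differentiate under sums, which gives
\begin{equation}
\ell(v|\theta)=\sum_{i\in\Theta}\ell_i(v_i|\mathrm{pa}_i,\theta),\qquad \E[\ell_i\,|\,\mathrm{pa}_i]=0 .
\end{equation}
Take $i<j$ in a topological order. Then $\mathrm{pa}_j$ is measurable with respect to the non-descendants of $j$, and this set contains $V_i$ and $\mathrm{pa}_i$. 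Conditioning on it gives $\E[\ell_i\ell_j]=0$. Hence the full-data Fisher information is $\sum_i F_i$, where $F_i=\E[\ell_i^2]$: the cross (``synergy'') terms vanish. For an observed marginal $X$, the observed score is $\E[\ell(V)|X]$. By Jensen's inequality (equivalently, the data-processing inequality for Fisher information), $F_X\le \sum_i F_i$.

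\emph{Step 2 (explicit inequality, series law).} Take the path $A\to C\to B$ with additive parameter $\theta=\theta_1+\theta_2$, where $\theta_1$ enters only $p(c|a)$ and $\theta_2$ enters only $p(b|c)$. Contexts that intervene on or read out the mediator $C$ identify the segment informations $F_1$ (from $\theta_1$ on the $A\to C$ module) and $F_2$ (from $\theta_2$ on $C\to B$). The end-to-end record $B$ has score $\ell_B=\E[\ell_1+\ell_2\,|\,B]$. I then use the Markov property $A\perp B\,|\,C$ together with the Cramér--Rao bound applied to the two-stage estimation problem. An unbiased local estimate of $\theta$ from $B$ can be written as the sum of an estimate of $\theta_1$ through the channel $C\to B$ and an estimate of $\theta_2$. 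Because the scores are orthogonal (Step 1), the variances add, which gives the series law
\begin{equation}
F_{A\to B}^{-1}\ \ge\ F_{1}^{-1}+F_{2}^{-1}.
\end{equation}
If that route proves awkward, a cleaner alternative is to reparametrize $\theta_1=\theta-\theta_2$ and apply the multiparameter Cramér--Rao bound with the $2\times2$ Fisher matrix, which is block diagonal with entries $F_1,F_2$ by Step 1. Taking the Schur complement along the direction $(1,1)$ gives the harmonic sum, and data processing to $B$ only lowers it.

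The main obstacle is Step 2: making precise which contexts operationally measure $F_1$ and $F_2$, so that the inequality relates observable quantities. This requires an identifiability assumption, namely that segment contexts access the same modular kernels as the end-to-end context. I would state that assumption explicitly as part of ``regularity.'' Finally, I would note that in Step 1 the key fact is the vanishing of off-diagonal score correlations. The general family of CFIIs is then generated by the same Schur-complement/data-processing argument applied to any partition of the modules along directed paths, and the falsification statement follows by contraposition.
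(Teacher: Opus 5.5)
Your proposal is correct and follows essentially the paper's argument. Modular factorization makes the cross-score terms vanish, so the Fisher matrix of $(C,B)$ is diagonal. The multiparameter Cram\'er--Rao bound along $u=(1,1)^T$ (your Schur complement after reparametrizing $\theta_1=\theta-\theta_2$) then gives the harmonic sum. Discarding $C$ is a $\theta$-independent coarse-graining that cannot raise the effective Fisher information, and falsification is the contrapositive (Supplementary Theorems~S4, S5, S7).

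Do drop your first heuristic in Step~2. An estimator built from $B$ alone does not in general split into a sum of segment estimators with uncorrelated errors, so ``the variances add'' is unjustified there. The Schur-complement route you give as the alternative is rigorous and is exactly the paper's proof.
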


\begin{proof}[Proof sketch]---The DAG factorization constrains the joint score structure. The modules that are conditionally independent contribute scores that are orthogonal in expectation, while the hidden or discarded variables appear only through the parameter-independent stochastic post-processing. The first property bounds how Fisher information can compose across modules; the second prevents discarded variables from increasing endpoint information. Any inequality obtained from these two ingredients is therefore a necessary condition for membership in the model class.
\end{proof}

\noindent Formal statement and extended discussion: Supplementary Theorem~S1 and Supplementary Notes~II.D--II.E.

Theorem 1 reframes the familiar FI inequalities as instances of a broader causal logic. Rather than asking whether a process really follows a trajectory, we specify the variables, conditional independences, and modular parameter dependences that define a trajectory-like classical causal class, and then test if the observed family of distributions belongs to that classical causal model. The algebraic FI bound may look identical to a traditional series inequality, but its antecedent is now explicit: if the measured FIs fall outside the allowed region, then no model with that DAG, those independences, and that modular parameterization can explain the data.

\subsection*{Causal-path CFII: inverse FI as information resistance}

The archetypal case is a classical causal path $A \to C \to B$, where $A$ is an input or preparation label, $C$ is an intermediate classical mediator, and $B$ is the endpoint record. The path hypothesis asserts both conditional independence and parameter modularity:
\begin{eqnarray}
 p(c,b | a,\theta_{ac}, \theta_{cb})=p(c | a,\theta_{ac}) \, p(b | c,\theta_{cb}),
\label{eq:pathfactor}
\end{eqnarray}
with an additive total parameter
\begin{eqnarray}
\theta_{ab} = \theta_{ac} + \theta_{cb}.
\label{eq:additive}
\end{eqnarray}
This covers the natural situation in which the parameter corresponds to accumulated time, phase, action, or another additive generator. The modularity statement in Eq.~\eqref{eq:pathfactor} is important: $\theta_{ac}$ appears only in the upstream kernel, $\theta_{cb}$ appears only in the downstream kernel, and the endpoint record is obtained through the mediator rather than through an additional direct parameter-dependent channel.

\begin{theorem}[Causal-path CFII]
Under the path factorization in Eq.~\eqref{eq:pathfactor} and the additive causal parameter in Eq.~\eqref{eq:additive}, the effective Fisher information for estimating $\theta_{ab}$ from the endpoint record $B$ alone obeys
\begin{eqnarray}
\left(F_{ab}^{(B)}\right)^{-1}\geq F_{ac}^{-1}+F_{cb}^{-1}.
\label{eq:pathCFII}
\end{eqnarray}
Here, $F_{ac}$ is the FI of the upstream module, $F_{cb}$ is the downstream FI averaged over the upstream distribution, and $F_{ab}^{(B)}$ is the effective FI for the sum parameter extracted from the endpoint data. All three quantities are local FIs evaluated at the chosen parameter values; the displayed inverse form assumes positive FI, with singular cases understood by the usual limiting convention.
\end{theorem}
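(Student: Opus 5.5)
The plan is to read $F_{ab}^{(B)}$ as the Cram\'er--Rao-effective information for the sum direction of the two-parameter family $\theta=(\theta_{ac},\theta_{cb})$. The argument then chains three ingredients: a block-diagonal Fisher matrix for the full record $(C,B)$, forced by modularity; a matrix data-processing inequality for discarding $C$; and monotonicity of matrix inversion. Concretely, let $J^{(B)}(\theta)$ be the $2\times 2$ Fisher matrix of the endpoint distribution $p(b|a,\theta)=\sum_c p(c|a,\theta_{ac})\,p(b|c,\theta_{cb})$. With $u=(1,1)^{T}$, the bound on any locally unbiased estimator of $\theta_{ab}=u^{T}\theta$ is $u^{T}(J^{(B)})^{-1}u$, with the difference $\theta_{ac}-\theta_{cb}$ treated as a nuisance. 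I would take this as the definition, $(F_{ab}^{(B)})^{-1}:=u^{T}(J^{(B)})^{-1}u$, equivalently the inverse of the Schur complement of $J^{(B)}$ in the coordinates (sum, difference).

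\emph{Step 1 (modular scores are orthogonal).} By Eq.~\eqref{eq:pathfactor}, the joint score of $(C,B)$ splits as $\ell_{ac}(c)=\partial_{\theta_{ac}}\ln p(c|a,\theta_{ac})$ and $\ell_{cb}(b|c)=\partial_{\theta_{cb}}\ln p(b|c,\theta_{cb})$.
\begin{itemize}
\item Regularity gives $\E[\ell_{cb}(B|C)\,|\,C]=0$.
\item Hence $\E[\ell_{ac}\ell_{cb}]=\E_C\!\left[\ell_{ac}(C)\,\E[\ell_{cb}|C]\right]=0$.
\end{itemize}
The joint Fisher matrix is therefore $J^{(CB)}=\mathrm{diag}(F_{ac},F_{cb})$. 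Here $F_{cb}=\sum_c p(c|a,\theta_{ac})F_{cb}(c)$ is exactly the upstream-averaged downstream FI named in the statement.

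\emph{Step 2 (discarding the mediator).} The endpoint score is the conditional expectation of the joint score, $s^{(B)}=\E[s^{(CB)}\,|\,B]$. Consequently
\[
J^{(CB)}-J^{(B)}=\E\!\left[\mathrm{Cov}\!\left(s^{(CB)}\,\middle|\,B\right)\right]\succeq 0 .
\]
This is the matrix form of the ``discarded variables cannot increase information'' ingredient of Theorem~1.

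\emph{Step 3 (inversion and contraction).} If $J^{(B)}\succ 0$, then $J^{(B)}\preceq J^{(CB)}$ implies $(J^{(B)})^{-1}\succeq (J^{(CB)})^{-1}$, since operator inversion is antitone on positive matrices. Contracting both sides with $u$ gives
\[
(F_{ab}^{(B)})^{-1}\ge u^{T}\mathrm{diag}(F_{ac}^{-1},F_{cb}^{-1})\,u=F_{ac}^{-1}+F_{cb}^{-1},
\]
which is Eq.~\eqref{eq:pathCFII}.

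The main obstacle is the singular case, which is in fact generic here. When the endpoint statistics depend only on the sum, $J^{(B)}$ has rank one and the inverse in Step 3 does not exist. I would handle this by regularization. Replace $J^{(B)}$ by $J^{(B)}+\epsilon I\preceq J^{(CB)}+\epsilon I$, run Step 3, and let $\epsilon\to 0$; this is the ``usual limiting convention''. Alternatively, use the variational characterization
\[
(u^{T}J^{-1}u)^{-1}=\inf_{v:\,u^{T}v=1} v^{T}Jv ,
\]
which is monotone in $J$ and well defined for singular $J$ provided $u$ lies in the range of $J$.

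It remains to check that in the rank-one case this definition reproduces the naive single-parameter FI along the sum direction. If $p(b)$ depends on $\theta$ only through $\theta_{ab}$, then $J^{(B)}=f\,uu^{T}$, and the variational formula gives exactly $f$. I would verify this explicitly, so that the theorem covers both the nuisance-parameter reading and the direct reading of the effective FI.
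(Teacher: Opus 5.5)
Your proposal is correct and follows the paper's proof essentially step for step: the cross term vanishes, so the Fisher matrix of $(C,B)$ is $\mathrm{diag}(F_{ac},F_{cb})$; the endpoint score is $s^{(B)}=\E[s^{(CB)}\,|\,B]$, which gives the matrix data-processing inequality; inversion reverses the order; and contracting with $u=(1,1)^T$ yields the bound. Your variational handling of the rank-one endpoint matrix is a genuine addition. The paper's monotonicity corollary assumes positive-definite Fisher matrices, and it covers the singular case only through a limiting-convention remark, even though that case is exactly the one-parameter endpoint used in its examples.
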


\begin{proof}[Proof sketch]---First, imagine that the intermediate value $C$ is accessible together with the endpoint record. Because the logarithm of the joint likelihood splits into an upstream term and a downstream term, the two score components are uncorrelated, so the Fisher information matrix of $(C,B)$ is diagonal in the module parameters $(\theta_{ac},\theta_{cb})$. Applying the multi-parameter Cram\'er--Rao bound to the sum direction $u=(1,1)^T$ yields a harmonic bound: the information about $\theta_{ab}$ in the joint record is at most $(F_{ac}^{-1}+F_{cb}^{-1})^{-1}$. Second, the actual endpoint record $B$ is obtained from $(C,B)$ by forgetting $C$, which is a parameter-independent coarse-graining. The Fisher information cannot increase under such causal data processing, so the same harmonic upper bound applies to $B$ alone.
\end{proof}

\noindent Full proof: Supplementary Theorem~S5. The $K$-step extension is Supplementary Corollary~S2.

Eq.~\eqref{eq:pathCFII} is the core series law of this work. If we define an ``information resistance'' as
\begin{eqnarray}
R:=F^{-1}, 
\label{eq:info_resistance}
\end{eqnarray}
then a classical causal path behaves like passive resistors in series:
\begin{eqnarray}
R_{ab}^{(B)}\geq R_{ac}+R_{cb}.
\label{eq:resistance}
\end{eqnarray}
The same logic extends immediately to a $K$-step chain $X_0 \to X_1 \to \cdots \to X_K$ with modular segment parameters $\theta_{j-1,j}$ and additive total $\theta_{0K}=\sum_{j=1}^{K}\theta_{j-1,j}$, giving
\begin{eqnarray}
\left(F_{0K}^{(X_K)}\right)^{-1} \geq \sum_{j=1}^{K}F_{j-1,j}^{-1}.
\label{eq:chainCFII}
\end{eqnarray}
Thus, the inverse Fisher information accumulates across every classical bottleneck in the chain.

\subsection*{Violation means impossibility of the classical causal model}

A CFII is not merely a useful bound; it is a logical constraint defining the feasible Fisher-information region of a causal model class. This makes falsification immediate.

\begin{theorem}[Violation implies causal-model impossibility]
Fix a classical causal model class $\mathcal{M}$ and a CFII that holds for all operational models compatible with $\mathcal{M}$. If an experiment produces Fisher informations that violate the inequality at some parameter value, then the observed statistics are incompatible with $\mathcal{M}$.
\end{theorem}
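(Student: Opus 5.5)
The argument is a contrapositive (modus tollens) built directly on the hypothesis that the CFII holds for every operational model compatible with $\mathcal{M}$. I would first make the objects precise. An operational model is the parameterized family $\{p(x|s,\theta)\}_{s,\theta}$, and each relevant Fisher information $F_s(\theta)$, or the effective FI such as $F_{ab}^{(B)}$, is a deterministic functional of that family in a neighbourhood of the chosen parameter value, via Eq.~\eqref{eq:scoreFI_basic}. A CFII is then a statement of the form $\Phi(F_1(\theta_0),\dots,F_n(\theta_0))\ge 0$, for instance Eq.~\eqref{eq:pathCFII} or Eq.~\eqref{eq:chainCFII}. By hypothesis, this inequality holds for every model in $\mathcal{M}$ and every admissible $\theta_0$.

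The proof then runs by contradiction. Suppose the observed statistics were compatible with $\mathcal{M}$. Then there is a DAG-factorizing joint model, with the prescribed modular $\theta$-dependence, whose marginals reproduce $p(x|s,\theta)$ for all contexts $s$ and for $\theta$ in a neighbourhood of $\theta_0$. The key step is that FI depends only on the observed marginal family, not on the latent realization. So the Fisher informations computed from the data equal those computed from this member of $\mathcal{M}$. The CFII must therefore hold at $\theta_0$, which contradicts the observed violation. Hence no member of $\mathcal{M}$ reproduces the statistics.

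The main care point is the matching step. Compatibility must mean agreement of the whole parameterized family on a neighbourhood of $\theta_0$, not just agreement at the single point. Otherwise the derivatives, and hence the FIs, would not be pinned down. The argument must also use the same identification of the segment parameters, e.g.\ $\theta_{ab}=\theta_{ac}+\theta_{cb}$ and the meaning of $F_{ac}$ and $F_{cb}$, as in the CFII's derivation. I would state this explicitly as part of the definition of compatibility, taken from the setup preceding Theorem 1, and note that the regularity assumptions make each FI well defined. I would also handle singular (zero-FI) cases by the limiting convention of Theorem 2. No further estimates are needed.
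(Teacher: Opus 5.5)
Your proposal is correct and follows the paper's proof: assume the observed family is compatible with $\mathcal{M}$, note that the CFII must then hold for that operational model at the given parameter value, and contradict the observed violation. Your neighbourhood-matching and identification caveats are already built into the paper's setup. Its compatibility definition requires the marginals to agree for all contexts and parameter values, and the CFII is stated directly for the operational model's Fisher informations, so the paper treats the matching step as automatic.
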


\begin{proof}[Proof sketch]---The proof is the contrapositive of the definition. A CFII is required to hold for every member of the model class. Therefore, a point outside the CFII-allowed region cannot come from any model in the class.
\end{proof}

\noindent Full proof: Supplementary Theorem~S7. For the causal-path witness, see also Supplementary Theorem~S8.

For the path hypothesis, it is convenient to package the test into the witness
\begin{eqnarray}
V_{\mathrm{path}}:=\left(F_{ab}^{(B)}\right)^{-1}-F_{ac}^{-1}-F_{cb}^{-1}.
\label{eq:Vpath}
\end{eqnarray}
Every classical causal-path model satisfies $V_{\mathrm{path}}\geq 0$, so a negative value means that the endpoint precision is better than any series composition of two independent classical modules would permit. Importantly, the conclusion is specific: one has not proved ``quantumness in general,'' but rather ruled out the entire conjunction of the assumptions that defined the classical causal-path class---i.e., the existence of the mediator, path factorization, and modular parameter dependence.

The falsification program remains statistically meaningful with finite data. Suppose that in each context $s$, one can evaluate the local score $\ell_s(x|\theta)=\partial_\theta\ln p(x|s,\theta)$. The empirical average of the squared score,
\begin{eqnarray}
\widetilde F_s(\theta)=\frac{1}{N_s}\sum_{i=1}^{N_s}\ell_s(x_{s,i}|\theta)^2,
\label{eq:pluginFI}
\end{eqnarray}
is unbiased for $F_s(\theta)=\E[\ell_s(X|\theta)^2]$. The required moment condition is the concrete one $\E[\ell_s(X|\theta)^4]<\infty$, which implies
\begin{eqnarray*}
\sqrt{N_s}\,[\widetilde F_s(\theta)-F_s(\theta)]\Rightarrow
\mathcal N\bigl(0,\Var[\ell_s(X|\theta)^2]\bigr).
\end{eqnarray*}
Thus the finite-shot uncertainty is controlled by the variance of the squared score. By combining this context-wise central limit theorem with the delta method, one obtains asymptotic confidence intervals for any smooth CFII witness.

Let $\mathbf F(\theta)=(F_s(\theta))_s$ denote the vector of true context-wise Fisher informations entering a chosen CFII witness. The index $s$ runs over the endpoint and all segment or context experiments used by the test. We write
\begin{eqnarray}
G(\theta)=g[\mathbf F(\theta)], \quad \widetilde G(\theta)=g[\widetilde{\mathbf F}(\theta)],
\label{eq:Gdef_main}
\end{eqnarray}
where $g$ is the scalar function that builds the relevant witness from those FIs. For example, the two-step path witness uses $g(F_{ab}^{(B)},F_{ac},F_{cb})=(F_{ab}^{(B)})^{-1}-F_{ac}^{-1}-F_{cb}^{-1}$, and the $K$-step chain witness uses the corresponding endpoint resistance minus the sum of the $K$ segment resistances. The plug-in estimate $\widetilde G$ is obtained by replacing every entry of $\mathbf F$ by the estimate in Eq.~\eqref{eq:pluginFI}. Our sign convention is $G(\theta)<0$ for a CFII violation, so a negative plug-in value is interpreted only after its standard error, computed from the gradient $\nabla g$, has been taken into account.

\begin{proposition}[Finite-data certification]
If each score has finite fourth moment, then the smooth plug-in CFII witness $\widetilde G(\theta)$ is asymptotically normal around its true value $G(\theta)$. Hence, a negative estimate separated from zero by several standard errors rejects the corresponding classical causal model class with controlled significance.
\end{proposition}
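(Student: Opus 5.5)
The plan is the standard CLT-plus-delta-method argument, applied context by context and then combined through the smooth witness map $g$.

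\emph{Step 1: context-wise CLT.} Fix $\theta$ and a context $s$. The samples $x_{s,1},\dots,x_{s,N_s}$ are i.i.d.\ from $p(\cdot|s,\theta)$. Set $Y_{s,i}=\ell_s(x_{s,i}|\theta)^2$, so that $\E[Y_{s,i}]=F_s(\theta)$ by Eq.~\eqref{eq:scoreFI_basic}. The fourth-moment assumption $\E[\ell_s^4]<\infty$ is exactly $\Var[Y_{s,i}]<\infty$. The Lindeberg--L\'evy CLT then gives
\begin{equation*}
\sqrt{N_s}\,[\widetilde F_s-F_s]\Rightarrow\mathcal N(0,\sigma_s^2),\qquad \sigma_s^2=\Var[\ell_s(X|\theta)^2].
\end{equation*}
I would also record that $\widetilde F_s\to F_s$ almost surely by the strong law, which Step 4 needs.

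\emph{Step 2: joint convergence.} Data from different contexts are gathered independently, so the vector $\widetilde{\mathbf F}$ has independent components. Let the total sample size be $N=\sum_s N_s$ with fixed allocation fractions $N_s/N\to\lambda_s>0$. Independence lets the marginal limits combine, via characteristic functions, into
\begin{equation*}
\sqrt N(\widetilde{\mathbf F}-\mathbf F)\Rightarrow\mathcal N(0,\Sigma),\qquad \Sigma=\mathrm{diag}(\sigma_s^2/\lambda_s).
\end{equation*}
The main bookkeeping point is this common scaling. Without it the statement should be phrased with the diagonal covariance $\mathrm{diag}(\sigma_s^2/N_s)$ directly.

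\emph{Step 3: delta method.} The witness map $g$ is, for example, $g(F_{ab}^{(B)},F_{ac},F_{cb})=(F_{ab}^{(B)})^{-1}-F_{ac}^{-1}-F_{cb}^{-1}$, and the $K$-step analogue has the same form. It is $C^1$ on the open orthant where all FIs are positive, which is precisely where Theorem~2 states the witness in inverse form. The multivariate delta method then yields
\begin{equation*}
\sqrt N(\widetilde G-G)\Rightarrow\mathcal N(0,\nabla g^T\Sigma\nabla g),\qquad \nabla g^T\Sigma\nabla g=\sum_s F_s^{-4}\sigma_s^2/\lambda_s.
\end{equation*}
The hardest part is \emph{Step 4}, making the result usable for certification, because the limiting variance depends on the unknown $F_s$ and $\sigma_s^2$. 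I would estimate it by plug-in: $\widehat\sigma_s^2$ is the sample variance of the $Y_{s,i}$, consistent by the finite fourth moment, and $\nabla g$ is evaluated at $\widetilde{\mathbf F}$, consistent by the continuous mapping theorem. This gives a standard error $\widehat{\mathrm{se}}$, and Slutsky's theorem gives $(\widetilde G-G)/\widehat{\mathrm{se}}\Rightarrow\mathcal N(0,1)$.

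Two points must be checked here. Positivity $F_s>0$ is needed so that $g$ is differentiable and the estimated gradient is consistent; the singular cases are excluded, consistent with the convention stated in Theorem~2. The limiting variance must also be nonzero; if it vanishes, the normal limit is degenerate and I would fall back to a conservative bound.

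\emph{Step 5: the test.} Under the null hypothesis that the data come from $\mathcal M$, Theorems~2 and~3 give $G\ge0$. Hence
\begin{equation*}
\Pr[\widetilde G\le -z_{\alpha}\widehat{\mathrm{se}}]\le\Pr[(\widetilde G-G)/\widehat{\mathrm{se}}\le -z_\alpha]\to\alpha .
\end{equation*}
A negative estimate lying $z_\alpha$ standard errors below zero therefore rejects $\mathcal M$ at asymptotic level $\alpha$. This is the controlled-significance claim.
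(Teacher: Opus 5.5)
Your proposal is correct and follows essentially the same route as the paper's proof (Supplementary Theorems~S9--S10). Both use a context-wise CLT for the averaged squared scores under the fourth-moment assumption, a joint diagonal-covariance normal limit from independence across contexts with fixed allocation fractions, and the delta method giving the variance $\sum_s q_s^{-1}(\partial g/\partial F_s)^2\sigma_s^2$. Your Steps~4--5 (studentizing with a consistent plug-in standard error via Slutsky, and checking the one-sided asymptotic level under the composite null $G\ge 0$) simply make rigorous the testing conclusion that the paper states only informally after the theorem.
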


\begin{proof}[Proof sketch]---Each $\widetilde F_s$ is an average of i.i.d. squared scores, so the central limit theorem applies context by context. Independence across contexts yields a multivariate normal limit for the vector of estimated Fisher informations. A first-order Taylor expansion of the witness then gives asymptotic normality and an explicit variance formula through the gradient of $G$.
\end{proof}

\noindent Full proof: Supplementary Theorems~S9--S10.

\subsection*{From witness to resource: violation-certified gain and Fisher synergy}

Because FI controls the Cram\'er--Rao precision bound, every CFII immediately induces a classical precision frontier. For the path model, the corresponding benchmark FI is the harmonic composition
\begin{eqnarray}
F_{\mathrm{cl}}^{(\mathrm{path})}:=\left(F_{ac}^{-1}+F_{cb}^{-1}\right)^{-1}.
\label{eq:Fcl}
\end{eqnarray}
The path CFII is precisely the statement $F_{ab}^{(B)}\leq F_{\mathrm{cl}}^{(\mathrm{path})}$. Equivalently, with $R_{ab}^{(B)}=(F_{ab}^{(B)})^{-1}$ and $R_{\mathrm{cl}}=(F_{\mathrm{cl}}^{(\mathrm{path})})^{-1}=F_{ac}^{-1}+F_{cb}^{-1}$, the witness is $V_{\mathrm{path}}=R_{ab}^{(B)}-R_{\mathrm{cl}}$.

\begin{theorem}[Violation-certified metrological gain]
If $V_{\mathrm{path}}<0$, then
\begin{eqnarray}
F_{ab}^{(B)}>F_{\mathrm{cl}}^{(\mathrm{path})},
\label{eq:gainFI}
\end{eqnarray}
and therefore the observed model has a tighter asymptotic Cram\'er--Rao lower bound for regular unbiased estimation of $\theta_{ab}$ than the lower bound imposed by the classical causal-path class:
\begin{eqnarray}
\frac{1}{N F_{ab}^{(B)}}<\frac{1}{N F_{\mathrm{cl}}^{(\mathrm{path})}}.
\label{eq:gainCRB}
\end{eqnarray}
Under the standard efficiency conditions, this tighter bound can be attained asymptotically, for example by maximum-likelihood estimation. Equivalently, the ratio of information resistances satisfies
\begin{eqnarray}
\frac{R_{\mathrm{cl}}}{R_{ab}^{(B)}}=\frac{R_{\mathrm{cl}}}{R_{\mathrm{cl}}+V_{\mathrm{path}}}>1.
\label{eq:improvementfactor}
\end{eqnarray}
\end{theorem}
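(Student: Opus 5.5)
The plan is to reduce everything to algebra on the witness $V_{\mathrm{path}}$ of Eq.~\eqref{eq:Vpath}, combined with the standard Cram\'er--Rao and asymptotic-efficiency facts. The three conclusions are handled in the order stated.

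\emph{Step 1 (FI gain).} Write $R_{ab}^{(B)}=R_{\mathrm{cl}}+V_{\mathrm{path}}$, which is just Eq.~\eqref{eq:Vpath} rewritten with the resistances of Eq.~\eqref{eq:info_resistance} and Eq.~\eqref{eq:Fcl}. Theorem 2's statement assumes positive and finite $F_{ac},F_{cb}$, so $R_{\mathrm{cl}}>0$. We must also have $R_{ab}^{(B)}\ge 0$, with equality only in the singular convention $F_{ab}^{(B)}=\infty$. Then $V_{\mathrm{path}}<0$ gives $0\le R_{ab}^{(B)}<R_{\mathrm{cl}}$. Since $x\mapsto x^{-1}$ is strictly decreasing on $(0,\infty)$, and using the limiting convention if $R_{ab}^{(B)}=0$, this yields Eq.~\eqref{eq:gainFI}.

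\emph{Step 2 (CRB comparison).} For $N$ i.i.d.\ endpoint samples, FI is additive, so the total FI is $NF_{ab}^{(B)}$. The Cram\'er--Rao bound for regular unbiased estimators of $\theta_{ab}$ is therefore $1/(NF_{ab}^{(B)})$. For any member of the classical causal-path class with the same segment FIs, Theorem 2 gives endpoint FI at most $F_{\mathrm{cl}}^{(\mathrm{path})}$. Its bound is thus at least $1/(NF_{\mathrm{cl}}^{(\mathrm{path})})$. Multiplying Eq.~\eqref{eq:gainFI} by $N>0$ and inverting gives Eq.~\eqref{eq:gainCRB}.

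\emph{Step 3 (attainability).} I would invoke the classical asymptotic theory of maximum likelihood. Under the usual regularity conditions, the MLE $\hat\theta_{ab}$ computed from the endpoint data satisfies
\begin{eqnarray*}
\sqrt{N}\,(\hat\theta_{ab}-\theta_{ab})\Rightarrow\mathcal N\bigl(0,(F_{ab}^{(B)})^{-1}\bigr).
\end{eqnarray*}
The conditions are identifiability, differentiability in quadratic mean, positive finite FI, and consistency. This convergence is the precise sense in which the strictly smaller bound is achieved asymptotically. The convergence itself is standard and is cited rather than reproved. The real obstacle is interpretive. The endpoint FI concerns a one-parameter family in $\theta_{ab}$, whereas the classical path model lives on $(\theta_{ac},\theta_{cb})$. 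I would therefore state explicitly that the comparison is in the sum direction $u=(1,1)^T$, as in the proof sketch of Theorem 2. Under this convention the estimation target and the ``regular'' class coincide for both models.

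\emph{Step 4 (ratio).} Divide $R_{\mathrm{cl}}$ by $R_{ab}^{(B)}=R_{\mathrm{cl}}+V_{\mathrm{path}}$, which is positive in the nonsingular case. Because $0<R_{\mathrm{cl}}+V_{\mathrm{path}}<R_{\mathrm{cl}}$, the ratio exceeds $1$. This gives Eq.~\eqref{eq:improvementfactor} and completes the proof.
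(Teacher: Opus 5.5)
Your proposal is correct and follows essentially the same route as the paper: you rewrite the witness as $V_{\mathrm{path}}=R_{ab}^{(B)}-R_{\mathrm{cl}}$, take reciprocals using positivity of both resistances, insert the result into the $N$-sample Cram\'er--Rao bound, and appeal to asymptotic efficiency of maximum likelihood (the paper proves this last step separately in Supplementary Theorem~S13 via the usual score Taylor-expansion, CLT and Slutsky argument, where you cite it). Your handling of the singular case $F_{ab}^{(B)}=\infty$ and your explicit sum-direction convention $u=(1,1)^T$ are harmless refinements of the same argument.
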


\begin{proof}[Proof sketch]---The witness is exactly the difference between the actual resistance and the classical benchmark resistance. If the witness is negative, the actual resistance is smaller; because both resistances are positive, taking reciprocals yields a strict FI gain and thus a strict reduction in the asymptotic error lower bound.
\end{proof}

\noindent Full proof: Supplementary Theorem~S12. The estimator-level achievability is developed in Supplementary Theorem~S13.

This witness-to-resource transition is not merely formal. Under the usual regularity assumptions, the maximum-likelihood estimation is asymptotically efficient, so the improved FI can be harvested by an explicit estimator; see Supplementary Theorem~S13. The more revealing question is how the gain appears. The answer is \emph{Fisher-information synergy}.

Consider a two-parameter description $\theta=(\theta_1,\theta_2)^T$ of a process whose Fisher matrix with respect to an observation $Y$ is
\begin{eqnarray}
F_Y(\theta)=
\begin{pmatrix}
F_1 & J \\
J & F_2
\end{pmatrix},
\label{eq:FIM2x2}
\end{eqnarray}
where $J$ is the covariance between the two score contributions associated with the two local parameter directions. In a classical modular causal decomposition, one has $J=0$ by construction because the centered module scores are orthogonal. When the score contributions are correlated, the effective FI for the additive parameter $\Theta=\theta_1+\theta_2$ becomes
\begin{eqnarray}
F_Y^{(u)}=\frac{F_1F_2-J^2}{F_1+F_2-2J}, \quad u=(1,1)^T.
\label{eq:synergy}
\end{eqnarray}

\begin{theorem}[Synergy formula]
For the Fisher matrix in Eq.~\eqref{eq:FIM2x2}, assume $F_1>0$, $F_2>0$, and $F_Y$ is positive definite. The effective FI for $\Theta=\theta_1+\theta_2$ satisfies
\begin{eqnarray}
F_Y^{(u)} > \left(F_1^{-1} + F_2^{-1}\right)^{-1}
\label{eq:beatsseries}
\end{eqnarray}
if and only if
\begin{eqnarray}
0<J<\frac{2F_1F_2}{F_1+F_2}.
\label{eq:Jwindow}
\end{eqnarray}
Moreover, at fixed $F_1$ and $F_2$, the largest achievable effective FI is $\min\{F_1,F_2\}$.
\end{theorem}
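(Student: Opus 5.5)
The plan is to reduce everything to explicit algebra on the $2\times 2$ Fisher matrix. First I will justify the closed form in Eq.~\eqref{eq:synergy}. The effective FI for $\Theta=u^T\theta$ with $u=(1,1)^T$ is $(u^T F_Y^{-1}u)^{-1}$, which is the same multi-parameter Cram\'er--Rao step used in the sketch of Theorem~2. Inverting the $2\times2$ matrix gives $u^TF_Y^{-1}u=(F_1+F_2-2J)/(F_1F_2-J^2)$. Positive definiteness of $F_Y$ supplies the two sign facts needed below:
\begin{itemize}
\item $D:=F_1F_2-J^2>0$;
\item $S-2J=v^TF_Yv>0$, where $S:=F_1+F_2$ and $v=(1,-1)^T$.
\end{itemize}
So both numerator and denominator of Eq.~\eqref{eq:synergy} are strictly positive.

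For the equivalence, write $P:=F_1F_2$, so that $(F_1^{-1}+F_2^{-1})^{-1}=P/S$. Since $S>0$ and $S-2J>0$, I can cross-multiply without changing the direction of the inequality. Thus Eq.~\eqref{eq:beatsseries} is equivalent to $S(P-J^2)>P(S-2J)$. This simplifies to
\begin{eqnarray*}
J\,(2P-SJ)>0 .
\end{eqnarray*}
The product is positive exactly when either $0<J<2P/S$, or $J<0$ and $J>2P/S$. The second case is impossible because $2P/S>0$, which yields Eq.~\eqref{eq:Jwindow}. I will also check that this window is nonempty and lies inside the positive-definite region $|J|<\sqrt P$. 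By AM--GM, $2P/S\le\sqrt P$, so every $J$ in the window satisfies $J^2<P$.

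For the maximum, I would avoid calculus and prove a direct bound. Assume without loss of generality $F_1\le F_2$. Multiplying by $S-2J>0$, the inequality $F_Y^{(u)}\le F_1$ is equivalent to $P-J^2\le F_1S-2F_1J$. This rearranges to
\begin{eqnarray*}
0\le J^2-2F_1J+F_1^2=(J-F_1)^2 ,
\end{eqnarray*}
which always holds. Equality occurs exactly at $J=F_1=\min\{F_1,F_2\}$, where Eq.~\eqref{eq:synergy} gives $(F_1F_2-F_1^2)/(F_2-F_1)=F_1$.

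The main obstacle is the admissibility of this maximizer. Positive definiteness requires $F_1^2<F_1F_2$, which holds when $F_1<F_2$. In that case the value $\min\{F_1,F_2\}$ is attained, and $J=F_1$ indeed lies above the synergy window since $F_1\ge 2P/S$. In the degenerate case $F_1=F_2$, the point $J=F_1$ makes $F_Y$ singular. There I would state that $\min\{F_1,F_2\}$ is the supremum: taking $J\uparrow F_1$ gives $F_Y^{(u)}=(F_1+J)/2\to F_1$. This is the sense in which "largest achievable" should be read, consistent with the limiting convention already adopted in Theorem~2.
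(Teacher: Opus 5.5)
Your proof is correct, and its first half is essentially the paper's computation. The paper subtracts resistances,
\[
R_{\rm series}-R_{\rm eff}=\frac{J\bigl(J(F_1+F_2)-2F_1F_2\bigr)}{F_1F_2(J^2-F_1F_2)},
\]
and reads off the sign from $F_1F_2-J^2>0$. You instead cross-multiply the FI inequality and reach the same factor $J(2P-SJ)$. Two of your additions are worthwhile and absent from the paper. First, you state $S-2J=v^TF_Yv>0$ explicitly; the paper needs this tacitly when it treats $F_Y^{(u)}>F_{\rm series}$ and $R_{\rm eff}<R_{\rm series}$ as equivalent. Second, your AM--GM check shows that the window lies inside the positive-definite region.

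For the maximum, the two routes genuinely differ. The paper differentiates, $\partial_J R_{\rm eff}=-2(J-F_1)(J-F_2)/(J^2-F_1F_2)^2$. It then identifies $J=\min\{F_1,F_2\}$ as the unique stationary point in $|J|<\sqrt{F_1F_2}$, leaving implicit the sign check that makes this a global minimum of $R_{\rm eff}$. It handles $F_1=F_2$ separately through $(F+J)/2$. Your perfect-square identity, $F_Y^{(u)}\le F_1\iff(J-F_1)^2\ge0$, gives the global bound and the equality case at once without calculus. It also explains the degenerate case directly: equality would require the excluded singular point $J=F_1=F_2$.

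One side remark is false, though nothing in the proof depends on it. With $F_1\le F_2$ you have
\[
\frac{2F_1F_2}{F_1+F_2}-F_1=\frac{F_1(F_2-F_1)}{F_1+F_2}\ge0,
\]
since the minimum never exceeds the harmonic mean. So for $F_1<F_2$ the maximizer $J=F_1$ lies strictly \emph{inside} the synergy window, not above it. It has to: its value $F_1$ exceeds the series value $F_1F_2/(F_1+F_2)$. By the equivalence you just proved, $J=F_1$ must then satisfy $0<J<2P/S$. Correct or delete that sentence.
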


\begin{proof}[Proof sketch]---Invert the $2 \times 2$ Fisher matrix explicitly and evaluate the quadratic form $\left(u^TF_Y^{-1}u\right)^{-1}$. Comparing the resulting resistance to the classical harmonic resistance yields the exact window in Eq.~\eqref{eq:Jwindow}. The optimization over $J$ is elementary because the effective resistance is a rational function of a single variable.
\end{proof}

\noindent Full proof: Supplementary Theorem~S14.

Equivalently, the information resistance associated with the sum direction is the inverse effective FI,
\begin{eqnarray}
R_Y^{(u)} := \bigl[ F_Y^{(u)} \bigr]^{-1} = u^T F_Y^{-1}u = \frac{F_1+F_2-2J}{F_1F_2-J^2}.
\end{eqnarray}
The modular classical value, obtained when $J=0$, is $R_{\rm ser}=F_1^{-1}+F_2^{-1}$. A direct comparison gives
\begin{eqnarray}
R_Y^{(u)}-R_{\rm ser} = \frac{J \left[ (F_1+F_2)J - 2 F_1 F_2 \right]}{F_1 F_2 (F_1 F_2 - J^2)},
\end{eqnarray}
so $R_Y^{(u)} < R_{\rm ser}$ exactly in the window of Eq.~\eqref{eq:Jwindow}.

Theorem 5 identifies the physical mechanism behind CFII violation. In a modular classical path, the upstream and downstream likelihood factors contain separate parameter dependences; after centering, their score contributions are orthogonal, so $J=0$ and the harmonic series penalty follows. A violation requires a different score geometry: the same observed record must make the two would-be segment scores fluctuate together, so that an outcome that is informative about one segment is also informative about the other. Thus, in the window of Eq.~\eqref{eq:Jwindow}, the positive covariance $J$ makes the effective information resistance $R_Y^{(u)}=\bigl[ F_Y^{(u)} \bigr]^{-1}$ for the additive parameter smaller than the modular classical series resistance $R_{\rm ser}=F_1^{-1}+F_2^{-1}$.

The same mechanism explains the long-chain scaling. If the Fisher matrix of a $K$-segment description takes the equicorrelated form $F[(1-\varepsilon)I_K+\varepsilon J_K]$, where $I_K$ is the identity matrix, $J_K$ is the $K\times K$ all-ones matrix, and $\varepsilon$ is the common normalized off-diagonal score correlation, then the effective FI for the total parameter is
\begin{eqnarray}
F^{(u)} = F\left(\varepsilon + \frac{1-\varepsilon}{K}\right),
\label{eq:equicorrelated}
\end{eqnarray}
so any fixed $\varepsilon>0$ changes the large-$K$ scaling from the classical $F/K$ law to the nonzero limit $F\varepsilon$. The corresponding design rule is concrete: avoid experimental operations that reset, reveal, or classically separate the intermediate modules, and instead preserve coherent or otherwise shared features of the record that make the putative segment scores positively correlated.

\subsection*{Coherent single-qubit dynamics beyond the classical causal-path frontier}

The general theory is realized by the simplest coherent qubit dynamics. Consider a single qubit evolving under
\begin{eqnarray}
\hat{H} = \frac{1}{2}\hat{\sigma}_x, \quad \hat{U}(\theta)=e^{-i\theta\hat{\sigma}_x/2},
\label{eq:qubitU}
\end{eqnarray}
prepared in
\begin{eqnarray}
\ket{\psi(\vartheta,\phi)} = \cos\frac{\vartheta}{2}\ket{0} + e^{i\phi}\sin\frac{\vartheta}{2}\ket{1},
\label{eq:qubitstate}
\end{eqnarray}
and measured with the projective readout $\{\ketbra{0}{0}, \ketbra{1}{1}\}$. Writing the binary statistics as $p_0(\theta)=\frac{1+z(\theta)}{2}$ and $p_1(\theta)=\frac{1-z(\theta)}{2}$, one finds
\begin{eqnarray}
z(\theta)=\cos\vartheta\cos\theta+\sin\vartheta\sin\phi\sin\theta,
\label{eq:ztheta}
\end{eqnarray}
and therefore,
\begin{eqnarray}
F(\theta)=\frac{(\partial_\theta z)^2}{1-z^2}.
\label{eq:binaryFI}
\end{eqnarray}

\begin{result}[Deterministic single-qubit collapse]
At the coherent point $\phi=\pi/2$, the readout FI is constant for all $\theta$:
\begin{eqnarray}
F(\theta)=1.
\label{eq:Fone}
\end{eqnarray}
Consequently, for every nontrivial split $\theta_{ab}=\theta_{ac}+\theta_{cb}$, one has
\begin{eqnarray}
V(\theta_{ac},\theta_{cb}) = -1,
\label{eq:Vminus1}
\end{eqnarray}
so the classical causal-path benchmark is $F_{\mathrm{cl}}^{(\mathrm{path})}=1/2$ and the achieved FI is larger by an exact factor of two.
\end{result}

\begin{proof}[Proof sketch]---When $\phi=\pi/2$, Eq.~\eqref{eq:ztheta} reduces to $z(\theta)=\cos(\theta-\vartheta)$, so Eq.~\eqref{eq:binaryFI} gives $F(\theta)=1$ identically. Substituting $F(\theta_{ab})=F(\theta_{ac})=F(\theta_{cb})=1$ into the witness yields $V=1-1-1=-1$.
\end{proof}

\noindent Full derivation: Supplementary Result~S1. The estimator-level achievability is illustrated in Supplementary Fig.~2.

This point is useful because the causal meaning is fully transparent: no entanglement, no adaptive measurements, and no measurement optimization are required. A fixed qubit fringe already outperforms every classical path model with a single intermediate bottleneck. Moreover, the endpoint model is a simple binary fringe, so the maximum-likelihood estimation asymptotically reaches the $1/\sqrt{N}$ standard deviation implied by the actual FI, whereas the classical causal-path frontier predicts a minimum error larger by $\sqrt{2}$.

For generic $(\vartheta,\phi)$, the witness landscape contains broad regions where the path CFII is violated and the gain indicator is negative, exactly as shown in Supplementary Fig.~1. By Supplementary Theorem~S15, the equivalence between the negative witness and the negative log-error ratio means that the same landscape simultaneously maps model falsification and precision enhancement.

A conservative critic might try to rescue the classical path model by optimizing the location of the intermediate split. This motivates the split-optimized classical benchmark
\begin{eqnarray}
F_{\mathrm{cl}}^{(\mathrm{opt})}(\theta_{ab})=\max_{0<\theta_{ac}<\theta_{ab}}\left( \frac{1}{F(\theta_{ac})} + \frac{1}{F(\theta_{ab}-\theta_{ac})}\right)^{-1}
\label{eq:optbenchmark}
\end{eqnarray}
The maximization makes the benchmark conservative: the classical path is allowed to choose the most favorable intermediate split after the total parameter interval has been fixed. We use the dimensionless gain ratio $\Gamma$ to mean the achieved endpoint FI divided by the relevant classical benchmark FI. For the split-optimized comparison, this is
\begin{eqnarray}
\Gamma(\theta_{ab})=\frac{F(\theta_{ab})}{F_{\mathrm{cl}}^{(\mathrm{opt})}(\theta_{ab})}.
\label{eq:GammaOpt_main}
\end{eqnarray}

\begin{result}[Split-robust collapse]
If $\Gamma(\theta_{ab})>1$, then the causal-path CFII is violated for every admissible split of the same total parameter. No choice of intermediate time can restore a classical causal-path explanation at that $\theta_{ab}$.
\end{result}

\begin{proof}[Proof sketch]---For each fixed split, the corresponding classical benchmark is bounded above by the optimized benchmark in Eq.~\eqref{eq:optbenchmark}. Therefore, beating the optimized benchmark automatically implies beating every split-dependent benchmark.
\end{proof}

\noindent Full proof and adversarial-benchmark analysis: Supplementary Result~S2 and Supplementary Fig.~3.

The optimized comparison is physically important because it gives the classical model its best possible chance. The generic qubit landscapes in the SI show wide regions where $\Gamma>1$ even after this adversarial optimization; see Supplementary Fig.~3. Thus, the collapse is not a split artifact; it is a genuine incompatibility between the coherent dynamics and the classical path modularization.

The same logic strengthens dramatically when the classical narrative is refined into a long chain. If one insists on a $K$-step causal decomposition of the same total evolution, the classical benchmark becomes the harmonic series composition of all segments.

\begin{result}[Chain-amplified gain]
Whenever the chosen readout yields constant FI $F(\theta)=F_0$, every $K$-step classical causal-path decomposition satisfies
\begin{eqnarray}
V_K=-\frac{K-1}{F_0}<0,
\label{eq:VK}
\end{eqnarray}
the classical benchmark is
\begin{eqnarray}
F_{\mathrm{cl}}^{(K)}=\frac{F_0}{K},
\label{eq:FclK}
\end{eqnarray}
and the gain ratio is exactly
\begin{eqnarray}
\Gamma_K=\frac{F(\theta_{ab})}{F_{\mathrm{cl}}^{(K)}}=K.
\label{eq:GammaK}
\end{eqnarray}
Hence, the standard deviation beats the $K$-step classical frontier by a factor $\sqrt{K}$.
\end{result}

\begin{proof}[Proof sketch]---With constant FI, every segment contributes the same resistance $1/F_0$, so the classical chain adds them to $K/F_0$. The endpoint experiment, however, still has resistance $1/F_0$. Direct substitution gives Eqs.~\eqref{eq:VK}--\eqref{eq:GammaK}.
\end{proof}

\noindent Full proof: Supplementary Result~S3. The general chain inequality is Supplementary Corollary~S2.

Result 3 makes the resource interpretation especially vivid. Increasing $K$ does not add laboratory resources; it only tightens the classical causal claim by inserting more putative passive mediators. A coherent process that keeps $O(1)$ endpoint FI therefore becomes increasingly incompatible with the classical series law as the chain is refined.

\subsection*{AI-assisted adversarial finite-data stress test}

\begin{figure*}[t]
\centering
\includegraphics[width=0.80\linewidth]{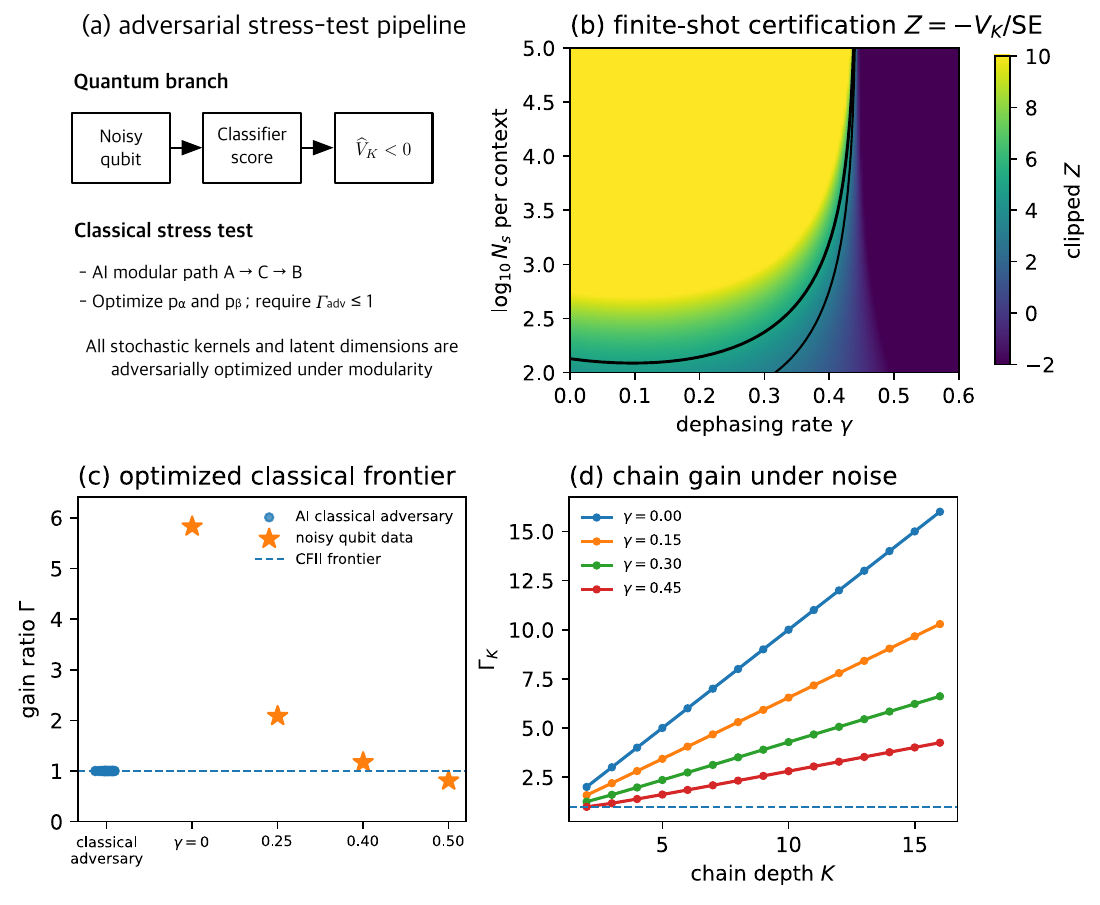}
\caption{AI-assisted adversarial finite-data stress test. (a) Noisy coherent-qubit samples are processed by a classifier score estimator to certify $\widehat V_K<0$, while an AI-optimized modular classical causal adversary is trained to maximize its gain ratio under the same causal-path constraint. (b) Finite-shot certification phase diagram for $K=4$, $T=\pi/2$, and readout error $\epsilon_{\rm r}=0.02$; $N_s$ denotes the number of independent shots assigned to each context. The color shows the delta-method significance $Z=-\widehat V_K/{\rm SE}(\widehat V_K)$; black contours mark $Z=3$ and $Z=5$. (c) The optimized classical adversary saturates the CFII frontier but remains at $\Gamma_{\rm adv}\leq 1$ within numerical precision; the adversary points represent independent random restarts of the same modular optimization, whereas the noisy coherent-qubit points show the gain at selected dephasing rates. (d) In a mid-fringe visibility model, the chain gain $\Gamma_K\simeq K\exp[-2\gamma T(1-1/K)]$ remains above unity over a finite noise window.}
\label{fig:ai_stress_main}
\end{figure*}

The analytic coherent-rotation example establishes the mechanism in its cleanest form. To test whether the certificate survives the imperfections that usually dominate an experiment, we perform an AI-assisted adversarial stress test using noisy finite samples. The simulated endpoint statistics are a visibility-reduced binary fringe
\begin{eqnarray}
p_0^{(\gamma)}(\theta)=\frac{1+z_\gamma(\theta)}{2},\quad
p_1^{(\gamma)}(\theta)=\frac{1-z_\gamma(\theta)}{2},
\label{eq:noisyfringe_main}
\end{eqnarray}
with
\begin{eqnarray}
z_\gamma(\theta)=\eta_{\rm r} e^{-\gamma\theta}\cos(\theta-\vartheta_0),
\label{eq:noisyz_main}
\end{eqnarray}
where $\vartheta_0$ is the chosen preparation angle, $\gamma$ is a dephasing-rate parameter, and $\eta_{\rm r}=1-2\epsilon_{\rm r}$ models symmetric readout error. We use a $K=4$ equal-partition causal-chain test of a total interval $T=\pi/2$ with $\epsilon_{\rm r}=0.02$. The finite-data witness is
\begin{eqnarray}
\widehat V_K=\widehat F(T)^{-1}-\sum_{j=1}^{K}\widehat F(T/K)^{-1},
\label{eq:VKhat_main}
\end{eqnarray}
with independent samples for the endpoint and segment contexts; the hats denote finite-sample score-based FI estimates. In Eq.~\eqref{eq:VKhat_main}, each quantity $\widehat F(\theta_s)$ is computed from an estimated local score, $\widehat F(\theta_s)=N_s^{-1}\sum_i\widehat\ell_s(x_{s,i}|\theta_s)^2$, with $\theta_s=T$ for the endpoint context and $\theta_s=T/K$ for a segment context. When the analytic score is not supplied, we estimate $\widehat\ell_s$ by a classifier likelihood-ratio method. Samples generated at $\theta_s+\delta$ and $\theta_s-\delta$ are given equal class priors; the optimal classifier logit approaches $\log[p(x | \theta_s+\delta) / p(x | \theta_s-\delta)]$, and division by $2\delta$ gives the finite-difference score estimate. Thus the classifier is the practical route by which the hatted FIs in Eq.~\eqref{eq:VKhat_main} are obtained. This likelihood-free step is useful because the same pipeline extends to high-dimensional records where the analytic scores are unavailable~\cite{Cranmer2020SBI}.

We then give the classical model an adversarial advantage. Here the term AI adversary means a differentiable trainable family of stochastic kernels used to search the classical model class as aggressively as possible; it does not relax the modular causal constraint. A differentiable adversary, with trainable parameters $\phi=(\alpha,\beta)$, is constrained only to obey the modular causal-path form
\begin{eqnarray}
p_\phi(c,b|\theta_{ac},\theta_{cb})=
p_{\alpha}(c|\theta_{ac})p_{\beta}(b|c,\theta_{cb}),
\label{eq:adversary_main}
\end{eqnarray}
The latent dimension, stochastic kernels, and local score structure are optimized to maximize the ratio
\begin{eqnarray}
\Gamma_{\rm adv}:=
\frac{F_B^{(u)}}{\left(F_{ac}^{-1}+F_{cb}^{-1}\right)^{-1}},
\qquad u=(1,1)^T.
\label{eq:Gammaadv_main}
\end{eqnarray}
Here, $F_B^{(u)}$ is the effective endpoint FI for the sum direction induced by the adversary's endpoint Fisher matrix. The Adam optimizer~\cite{KingmaBa2015} is restarted from many random initializations, explicitly searching for a modular classical model that crosses the CFII frontier.

The results are given in Fig.~\ref{fig:ai_stress_main}. Here, $Z=-\widehat V_K/{\rm SE}(\widehat V_K)$ denotes the delta-method certification significance, and $N_s$ is the number of independent shots assigned to each context in the finite-data estimate. For $\gamma=0.25$ and $N_s=10^3$ shots per context, the noisy coherent data give
\begin{eqnarray}
\widehat V_K = -2.58, \quad Z \simeq 12.2, \quad \Gamma_K \simeq 2.08,
\label{eq:numerical_summary_main}
\end{eqnarray}
so the negative witness is far from a finite-sample fluctuation. Fig.~\ref{fig:ai_stress_main}(b) shows how this certification changes with both dephasing and shot number: for the chosen readout error, the violation remains certifiable up to a dephasing threshold near $\gamma\simeq 0.44$. Fig.~\ref{fig:ai_stress_main}(c) separates the quantum data from the adversarial classical search. The orange points are noisy coherent-qubit gains at selected $\gamma$ values, while the blue adversary points correspond to 36 independent random restarts of the modular optimization in Eq.~\eqref{eq:adversary_main}; these restarts probe whether different initial stochastic kernels can find a classical model with $\Gamma_{\rm adv}>1$. The best restart reaches $\Gamma_{\rm adv}=0.9999999998$, which appears in the plot as saturation of the dashed CFII frontier but not a crossing of it. Fig.~\ref{fig:ai_stress_main}(d) shows the complementary chain-depth dependence: even when visibility decays with the total interval, increasing $K$ can keep $\Gamma_K$ above unity over a finite noise window because the classical benchmark decreases as a series resistance. The stress test therefore strengthens the causal interpretation: the observed FI geometry is not rescued by flexible hidden mediators or AI-optimized stochastic kernels as long as the classical modular path assumption is maintained. Full algorithms, calibration checks, and additional figures are given in the Methods and Supplementary Note~VII.

\section*{DISCUSSION}

The present framework sharpens what a Fisher-information inequality can mean. Historically, the inequalities such as Eq.~\eqref{eq:pathCFII} were often motivated as a test of discrete trajectories or intermediate-state hypotheses~\cite{Zamir1998,Tan2021,ArvidssonShukur2020,LupuGladstein2022}. Our point is more systematic. The essential object is not the word ``trajectory'' but the classical causal model class whose graph and modularity assumptions imply the inequality. Once this is recognized, the FI inequalities become a principled route from the causal hypotheses to experimental tests.

This also clarifies what is, and is not, falsified by a violation. A negative witness rules out the specific classical model class that generated the CFII. It does not prove that all classical explanations fail, nor does it claim ``quantumness'' without qualification. A larger model class---for example, one with extra latent memory, explicit context dependence, or relaxed conditional independences---may still explain the data. The CFIIs therefore map the space of viable explanations with metrological precision.

The resource interpretation is specific in the same model-relative sense. The certified gain is not measured against an unspecified classical standard; it is measured against the FI frontier forced by the tested DAG, conditional independences, and modular parameter dependence. A CFII violation is therefore metrologically meaningful because it beats a benchmark that is explicit and internally certified by the same data. The design task is then not merely to maximize a generic FI, but to identify a target classical causal narrative and engineer the experiment so that the corresponding witness becomes negative. The synergy formula shows what this means experimentally: create positive score correlations between the would-be modules while keeping the inference problem regular enough for efficient estimation.

The framework also helps position the contextual quantum metrology (coQM)~\cite{Jae2024}. In the coQM setting, the relevant classicality condition is no-signaling in time (NSIT), which can be written as a conditional-independence statement $B \perp S | \theta$ for the context variable $S$ that decides whether an earlier measurement is performed~\cite{Halliwell2017,Jae2024}. NSIT violation is therefore one particular causal-model collapse, and the resulting Fisher-information gain fits naturally into our witness-to-resource logic. However, the NSIT viewpoint does not subsume the general CFII program: one can construct models that satisfy NSIT perfectly yet still violate the causal-path CFII maximally, because NSIT constrains the context dependence while the path CFII constrains the modular mediation and additive series composition. See Supplementary Appendix~A, especially Supplementary Proposition~S3.

Several limitations should be kept in view. First, FI is a local quantity, so the interpretation is sharpest in regular asymptotic regimes. Second, failure to observe a violation only shows consistency with the tested model class; it does not prove the model. Third, finite-data implementations require reliable score calibration or equivalent local estimation tools. None of these caveats undercuts the central message. They delimit the regime in which causal-model falsification and violation-certified gain are read most cleanly from experimental data.

In summary, the causal Fisher-information inequalities unify the three themes that are often treated separately: causal inference, nonclassicality witnessing, and precision metrology. A classical causal hypothesis defines an FI frontier; a violation of the frontier falsifies the hypothesis and simultaneously certifies a resource. The single-qubit example and the adversarial finite-data stress test show that this logic is not an abstract reinterpretation but an experimentally concrete program. Once inverse FI is recognized as the natural resistance of classical mediation, every negative CFII witness marks a place where nature refuses to pay the classical series penalty.

\section*{Acknowledgments}

This work was supported by the Ministry of Science, ICT and Future Planning (MSIP) through the National Research Foundation of Korea (RS-2024-00432214, RS-2025-03532992, and RS-2025-18362970) and the Institute of Information and Communications Technology Planning and Evaluation grant funded by the Korean government (RS-2019-II190003, ``Research and Development of Core Technologies for Programming, Running, Implementing and Validating of Fault-Tolerant Quantum Computing System''), the Korean ARPA-H Project through the Korea Health Industry Development Institute (KHIDI), funded by the Ministry of Health \& Welfare, Republic of Korea (RS-2025-25456722), and the Ministry of Trade, Industry and Resources (MOTIR), Korea, under the project ``Industrial Technology Infrastructure Program'' (RS-2024-00466693).

\section*{METHODS}

\subsection*{Numerical stress-test implementation}

The numerical stress test in Fig.~\ref{fig:ai_stress_main} is based on the Bernoulli family in Eqs.~\eqref{eq:noisyfringe_main}--\eqref{eq:noisyz_main}. The corresponding FI is evaluated as
\begin{eqnarray}
F_\gamma(\theta)=\frac{[\partial_\theta z_\gamma(\theta)]^2}{1-z_\gamma(\theta)^2},
\label{eq:noisyFI_methods}
\end{eqnarray}
with the same readout-error and dephasing parameters used in the Results. For finite-shot certification, the independent samples are drawn for the endpoint context and the $K$ segment contexts. The context-wise score estimates give squared-score samples $q_{s,i}=\widehat\ell_s(x_{s,i}|\theta)^2$, from which
\begin{eqnarray}
\widehat{F}_s &=& \frac{1}{N_s}\sum_{i=1}^{N_s}q_{s,i}, \nonumber \\
\widehat{\mathrm{Var}}(\widehat{F}_s) &=& \frac{1}{N_s(N_s-1)}\sum_{i=1}^{N_s}(q_{s,i}-\widehat{F}_s)^2.
\label{eq:FIvariance_methods}
\end{eqnarray}
The standard error of $\widehat{V}_K$ is then obtained by first-order propagation of these context-wise variances through Eq.~\eqref{eq:VKhat_main}. The modular classical adversary is implemented by softmax-normalized stochastic kernels constrained to the factorization in Eq.~\eqref{eq:adversary_main}; only the kernel parameters are optimized, and the modular dependence on $\theta_{ac}$ and $\theta_{cb}$ is never relaxed. Further hyperparameters and calibration checks are reported in Supplementary Note~VII.

\subsection*{Where to find full proofs (Supplementary Information guide)}

The complete proofs, extended derivations, and additional examples are provided in the Supplementary Information (SI). The main text keeps the assumptions and operational witnesses explicit; the SI serves as the proof-complete companion rather than as a substitute for the basic definitions. For transparency, the main text claims map to the SI as follows.
\begin{itemize}
    \setlength{\itemsep}{2pt}
    \setlength{\parskip}{0pt}
    \setlength{\parsep}{0pt}
    \item \textit{General operational framework and definition of classical causal-model compatibility:} Supplementary Notes~II.A--II.E and Supplementary Definitions~S1--S2.
    \item \textit{Additivity of Fisher information across contexts and multi-context sampling:} Supplementary Note~II.B and Supplementary Theorem~S2.
    \item \textit{Effective Fisher information for causal parameters and the multi-parameter Cram\'er--Rao geometry:} Supplementary Note~III.A and Supplementary Theorem~S3.
    \item \textit{Causal data processing for the Fisher information matrix:} Supplementary Note~III.B, Supplementary Theorem~S4, and Supplementary Corollary~S1.
    \item \textit{Causal-path CFII, information-resistance picture, and the $K$-step chain generalization:} Supplementary Notes~III.C--III.D, Supplementary Theorem~S5, Supplementary Remark~S1, and Supplementary Corollary~S2.
    \item \textit{Model falsification, feasible FI regions, and the causal-path witness:} Supplementary Notes~IV.A--IV.C, Supplementary Definition~S3, Supplementary Theorems~S7--S8, and Supplementary Definition~S4.
    \item \textit{Finite-data certification of CFII violation:} Supplementary Note~IV.D and Supplementary Theorems~S9--S10.
    \item \textit{Witness-to-resource transition, classical precision frontiers, and estimator-level achievability:} Supplementary Notes~V.A--V.C and Supplementary Theorems~S11--S13.
    \item \textit{Fisher-information synergy and long-chain scaling intuition:} Supplementary Notes~V.D--V.E and Supplementary Theorem~S14.
    \item \textit{Single-qubit coherent example, generic landscapes, split-optimized benchmarks, and chain-amplified gain:} Supplementary Notes~VI.A--VI.D, Supplementary Theorem~S15, and Supplementary Results~S1--S3.
    \item \textit{AI-assisted adversarial finite-data stress test, classifier score estimation, noisy certification, and optimized modular classical adversaries:} Supplementary Note~VII.
    \item \textit{Contextual quantum metrology as an NSIT-based special case, together with its scope and limits:} Supplementary Appendix~A, Supplementary Proposition~S1, Supplementary Theorems~S16--S17, and Supplementary Proposition~S3.
\end{itemize}

\subsection*{Conventions}

Throughout, the main text focuses on a single real parameter for clarity, while the SI develops the corresponding multi-parameter geometry needed for additive causal parameters and effective Fisher information. All statements about precision are local and asymptotic unless stated otherwise, and all finite-data certification claims assume the regularity conditions made explicit in the SI.


\clearpage
\newpage
\onecolumngrid

\makeatletter
\let\addcontentsline\origaddcontentsline
\makeatother

\newcommand{\suppnote}[1]{%
  \refstepcounter{section}%
  \setcounter{subsection}{0}%
  \setcounter{subsubsection}{0}%
  \section*{Supplementary Note~\Roman{section}. #1}%
}
\newcommand{\suppappendix}[1]{%
  \refstepcounter{section}%
  \setcounter{subsection}{0}%
  \setcounter{subsubsection}{0}%
  \section*{Supplementary Appendix~\Alph{section}. #1}%
}
\renewcommand{\thetheorem}{S\arabic{theorem}}
\renewcommand{\theproposition}{S\arabic{proposition}}
\renewcommand{\thelemma}{S\arabic{lemma}}
\renewcommand{\thecorollary}{S\arabic{corollary}}
\renewcommand{\thedefinition}{S\arabic{definition}}
\renewcommand{\theremark}{S\arabic{remark}}
\renewcommand{\theresult}{S\arabic{result}}

\part{Supplementary Information}

This supplementary information (SI) is written as a self-contained and proof-complete companion to our main submitting manuscript. It provides full derivations, rigorous proofs, and extended methodological details required for a complete understanding of our work.

\tableofcontents

\suppnote{Introduction: The work summary}

Precision parameter estimation is central to modern science and engineering. In a metrological task, an unknown parameter $\theta$ (e.g., a time interval, phase shift, rotation angle, field strength, or concentration) is encoded into measurement statistics, and one aims to infer $\theta$ from observed data. A standard operational quantifier of statistical sensitivity is the Fisher information (FI)~\cite{Frieden2004}. For a measurement outcome $x$ with conditional probability $p(x|\theta)$, the FI is defined by
\begin{eqnarray}
F(\theta) = \sum_x p(x|\theta)\left( \frac{\partial}{\partial\theta}\ln{p(x|\theta)} \right)^2.
\label{eq:FI_def_intro}
\end{eqnarray}
In the asymptotic limit, FI governs achievable precision through the Cram\'er--Rao bound~\cite{Cramer2016, Helstrom1976}.

Quantum metrology seeks to exploit genuine quantum features to enhance estimation precision beyond what is achievable by ``most classical'' strategies. In the conventional framework, the ultimate bound is often characterized in terms of quantum Fisher information (QFI), which is defined by optimizing over all measurements~\cite{BC1994, Paris2009}. However, the measurement that attains the bound can be difficult to identify or experimentally infeasible to implement in realistic settings, motivating complementary approaches based on operational and experimentally accessible criteria. A particularly appealing direction is to employ FI-based criteria, including FI inequalities, as experimentally accessible witnesses of nonclassical resources and metrological advantage~\cite{Zamir,Tan2021FIResources,ArvidssonShukur2020PostselectedMetrology,LupuGladstein2022NegativeQuasi}.

A representative example is a classical FI inequality that appears when one estimates a total parameter either directly or by dividing the process into intermediate segments. Let us consider a parameter decomposition as
\begin{eqnarray}
\theta_{ab}=\theta_{ac}+\theta_{cb},
\label{eq:param_add_intro}
\end{eqnarray}
which is naturally associated with an intermediate description of a process. Under the classical assumptions that amount to a causal-path structure together with statistical independence of the estimation parameters, the following inequality is satisfied:
\begin{eqnarray}
F(\theta_{ab})^{-1}\geq F(\theta_{ac})^{-1}+F(\theta_{cb})^{-1}.
\label{eq:FI_ineq_intro}
\end{eqnarray}
One may define the violation statistic
\begin{eqnarray}
V:=F(\theta_{ab})^{-1}-F(\theta_{ac})^{-1}-F(\theta_{cb})^{-1},
\label{eq:V_def_intro}
\end{eqnarray}
and an observed value $V < 0$ indicates a violation of Eq.~(\ref{eq:FI_ineq_intro}). Such violations arise, for instance, in simple quantum dynamics when one attempts to characterize the evolution with a specific sequence of intermediate states. Crucially, the correct interpretation of a violation is not ``quantumness in general'' but the failure of the underlying classical assumptions used to derive Eq.~(\ref{eq:FI_ineq_intro}).

In this work, we elevate this observation into a general principle and propose a unified framework: FI inequalities of the type in Eq.~(\ref{eq:FI_ineq_intro}) are most naturally understood as tests of classical causal models. The bridge is provided by Bayesian networks and conditional independence. A classical causal model is specified by a directed acyclic graph (DAG) (possibly with latent variables), which implies a family of conditional-independence relations~\cite{Pearl2009Causality}. When a metrological protocol is assumed to admit a description within such a model class, those conditional-independence relations translate into constraints on the achievable FI. Thus, violating an FI inequality is equivalent to falsifying the corresponding classical causal model class. To emphasize the generality, we introduce the notion of \emph{causal Fisher-information inequalities} (CFIIs): given a classical causal model class $\mathcal{M}$ defined by a DAG-induced set of conditional independences, CFIIs are the FI constraints that must hold for any experiment whose statistics admit a realization within $\mathcal{M}$. The trajectory-based inequality in Eq.~(\ref{eq:FI_ineq_intro}) is then a special case associated with a causal-path (Markov-chain-like) model class.
\begin{theorem}[Causal Fisher-information inequalities (informal statement)]
\label{thm:CFII_informal_intro}
Consider a family of parameterized experiments specified by conditional distributions $\{p(x | s,\theta)\}$, where $s$ denotes an experimental context (measurement choice, intermediate intervention, segmentation strategy, or control setting). Let $\mathcal{M}$ be a classical causal model class defined by a DAG that imposes conditional-independence constraints among the relevant variables, together with regularity assumptions ensuring that the parameter dependence factorizes consistently with the model. Then, there exists a family of inequalities among the corresponding Fisher informations that must be satisfied by any data compatible with $\mathcal{M}$. Any experimental violation of these inequalities certifies that no model in the class $\mathcal{M}$ can reproduce the observed statistics.
\end{theorem}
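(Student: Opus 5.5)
The plan is constructive: I will exhibit explicit inequalities rather than argue abstractly that some must exist. The falsification clause then follows by contraposition.

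\textbf{Step 1 (fix the model class).} Formalize $\mathcal{M}$ as the set of joint distributions over observed and latent variables $(V_1,\dots,V_n)$ of the form
\begin{eqnarray*}
p(v|\theta)=\prod_i p_i(v_i|\mathrm{pa}(v_i),\theta_i).
\end{eqnarray*}
Here each local parameter $\theta_i$ is a function of $\theta$ prescribed by the class (the modularity assumption), and the regularity assumptions allow differentiation under the sum or integral. Each context $s$ is the marginal of this joint model on an observed subset $X_s$. This marginalization is a $\theta$-independent stochastic map.

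\textbf{Step 2 (score orthogonality).} The log-likelihood splits as $\sum_i \ln p_i$, so the joint score vector in the module parameters is $\sum_i \partial_{\theta_i}\ln p_i$. Condition on the parents and use $\E[\partial_{\theta_i}\ln p_i\mid \mathrm{pa}]=0$. Processing the modules in topological order shows that distinct module scores have zero cross-covariance. Hence the joint Fisher information matrix is block diagonal, with blocks equal to the parent-averaged local informations $\bar F_i$.

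\textbf{Step 3 (data processing).} Show that for any $\theta$-independent coarse-graining $Y=\kappa(X)$, $F_Y\preceq F_X$ in Loewner order. The proof writes the observed score as the conditional expectation of the full score, $\ell_Y=\E[\ell_X\mid Y]$, and then applies Jensen or the conditional-variance decomposition.

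\textbf{Step 4 (conclude).} For any direction $u$ defined by an additive total parameter, the multiparameter Cram\'er--Rao geometry gives
\begin{eqnarray*}
F^{(u)}_{X_s}\le \bigl(u^T\,\mathrm{diag}(\bar F_i)^{-1}u\bigr)^{-1}.
\end{eqnarray*}
This is the harmonic-sum bound. The inequality $F_{X_s}\preceq F_{\mathrm{joint}}$ combined with the monotonicity of $A\mapsto (u^TA^{-1}u)^{-1}$ completes the chain. The displayed bound is a nontrivial inequality among the context Fisher informations, and the causal-path inequality of Theorem 2 is the instance with $n=2$ and $u=(1,1)^T$.

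Every compatible model satisfies these inequalities, so an observed violation excludes all members of $\mathcal{M}$ by contraposition, as in Theorem 3.

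\textbf{Main obstacle.} The hardest step is Step 4 when some $\bar F_i$ vanish or the matrix is singular. The plan is to handle this with a pseudo-inverse or the limiting convention. A second issue is making sure the segment quantities $\bar F_i$ actually coincide with experimentally accessible context Fisher informations. Otherwise the inequalities constrain unobservable quantities.
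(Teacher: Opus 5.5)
Your proposal is correct and follows essentially the paper's route: the main-text sketch of Theorem 1, made concrete in Theorems S3--S5, Corollary S1 and Theorem S7, likewise combines score orthogonality from the modular DAG factorization, the multiparameter Cram\'er--Rao harmonic bound along the additive direction $u$, the $\theta$-independent data-processing inequality $\mathbf{F}_Y\preceq\mathbf{F}_X$, and contraposition for the falsification clause. Your topological-order orthogonality argument extends to an arbitrary modular DAG what the paper works out only for the path $A\to C\to B$, the $N$-step chain and series--parallel compositions; the paper handles your first caveat (singular Fisher matrices) only by the limiting or pseudo-inverse convention, and it does not resolve your second caveat (identifying the $\bar F_i$ with measurable context informations) beyond defining $F_{ac},F_{cb}$ as module quantities.
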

This causal-model perspective has two immediate consequences. First, it provides a systematic route to generalization: rather than focusing on a specific hypothesis such as a discrete trajectory decomposition, one may start from an arbitrary causal model class $\mathcal{M}$, derive its associated CFIIs, and use experimental data to test (and potentially falsify) $\mathcal{M}$. Second, it reframes nonclassical resources in metrology: what appears as contextuality, non-Markovianity, or incompatibility can be interpreted as the collapse of a classical causal description, and such collapse can be harnessed as an information-theoretic resource.

\suppnote{General framework}\label{sec:framework}

Here we present a general operational framework that unifies (i) standard parameter estimation, (ii) experiments involving multiple contexts (measurement choices, interventions, segmentations), and (iii) classical causal model classes specified by conditional-independence constraints. Throughout, we focus on a single real parameter $\theta$ for clarity. The generalization to multi-parameter estimation can be formulated by replacing FI with the Fisher information matrix~\cite{Demkowicz2020,Liu2020}.

\subsection{Operational setting: parameterized experiments with contexts}\label{subsec:operational_setting}

We consider an experiment in which an unknown parameter $\theta$ is encoded into outcome statistics. The experimenter can choose an experimental context (or setting) $s \in {\cal S}$, where $s$ may represent a measurement choice, a control operation, an intermediate intervention, or a segmentation strategy. For each context $s$, the experiment produces an outcome $x$ in an outcome set ${\cal X}_s$ according to a conditional distribution
\begin{eqnarray}
p(x | s,\theta).
\label{eq:model_px_intro2}
\end{eqnarray}
We allow ${\cal X}_s$ to depend on $s$ to incorporate, for example, different output alphabets for different measurement settings. For notational simplicity, we present the formulas for discrete outcomes; continuous outcomes are treated by replacing sums with integrals.

A data set consists of independent samples collected under possibly multiple contexts. Let ${\cal S}_D\subseteq{\cal S}$ denote the set of contexts used in the protocol. For each $s \in {\cal S}_D$, we collect $N_s$ independent outcomes
\begin{eqnarray}
x_{s,1},x_{s,2},\cdots,x_{s,N_s}.
\label{eq:data_per_context}
\end{eqnarray}
The total number of samples is
\begin{eqnarray}
N = \sum_{s \in {\cal S}_D} N_s.
\label{eq:total_samples}
\end{eqnarray}
Assuming conditional independence of samples given $(s,\theta)$, the likelihood function is
\begin{eqnarray}
{\cal L}(\theta|D) = \prod_{s \in {\cal S}_D} \prod_{i=1}^{N_s} p(x_{s,i} | s,\theta),
\label{eq:likelihood}
\end{eqnarray}
and the log-likelihood is
\begin{eqnarray}
\ell(\theta|D) = \sum_{s \in {\cal S}_D} \sum_{i=1}^{N_s} \ln{p(x_{s,i} | s,\theta)}.
\label{eq:loglikelihood}
\end{eqnarray}
An estimator is a function of the data, denoted as $\hat{\theta} = \hat{\theta}(D)$.

\subsection{Fisher information under multi-context sampling}\label{subsec:FI_multicontext}

For each context $s$, the Fisher information is defined by~\cite{Frieden2004}
\begin{eqnarray}
F_s(\theta) = \sum_{x \in {\cal X}_s} p(x | s,\theta)\left(\frac{\partial}{\partial\theta} \ln{p(x | s,\theta)} \right)^2.
\label{eq:FI_context}
\end{eqnarray}
The quantity $F_s(\theta)$ characterizes the local sensitivity of the outcome distribution for context $s$ with respect to $\theta$.

When data are collected under multiple contexts, the relevant statistical model is the product distribution associated with Eq.~(\ref{eq:likelihood}). The corresponding Fisher information of the full data set, denoted by $F_D(\theta)$, is obtained from the score $\partial_\theta\ell(\theta|D)$. A basic property of FI is the additivity for independent samples, which yields a simple decomposition across contexts~\cite{Frieden2004}.
\begin{theorem}[Additivity of Fisher information across contexts]
\label{thm:FI_additivity}
Assume that conditioned on $(s,\theta)$ all outcomes $\{x_{s,i}\}$ are independent and distributed as $p(x|s,\theta)$. Then, the Fisher information of the full data set $D$ satisfies
\begin{eqnarray}
F_D(\theta)=\sum_{s \in {\cal S}_D} N_s F_s(\theta).
\label{eq:FI_total}
\end{eqnarray}
\end{theorem}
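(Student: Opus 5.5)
The plan is to show that the score of the full data set is a sum of independent, mean-zero pieces, one per sample. The Fisher information of $D$ is then the variance of that sum. Throughout I assume the per-context models are regular, so that differentiation under the sum over outcomes is allowed and $\sum_x p(x|s,\theta)\,\partial_\theta \ln p(x|s,\theta)=\partial_\theta\sum_x p(x|s,\theta)=0$.

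First, I would differentiate the log-likelihood in Eq.~(\ref{eq:loglikelihood}). This gives the total score
\begin{eqnarray}
\partial_\theta \ell(\theta|D)=\sum_{s\in{\cal S}_D}\sum_{i=1}^{N_s} \ell_s(x_{s,i}|\theta),
\end{eqnarray}
where $\ell_s(x|\theta)=\partial_\theta\ln p(x|s,\theta)$ is the local score of Eq.~(\ref{eq:scoreFI_basic}). By definition $F_D(\theta)=\E[(\partial_\theta\ell(\theta|D))^2]$, with the expectation taken over the product distribution in Eq.~(\ref{eq:likelihood}). Expanding the square produces diagonal terms $\E[\ell_s(x_{s,i}|\theta)^2]$ and cross terms $\E[\ell_s(x_{s,i}|\theta)\,\ell_{s'}(x_{s',j}|\theta)]$ with $(s,i)\neq(s',j)$.

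Second, I would handle the two kinds of terms separately. Each diagonal term depends only on the marginal $p(\cdot|s,\theta)$, so it equals $F_s(\theta)$ by Eq.~(\ref{eq:FI_context}). For a cross term, the two samples are independent by assumption, so the expectation factorizes into $\E[\ell_s]\,\E[\ell_{s'}]$. Each factor vanishes by the zero-mean identity for the score. This covers two samples drawn from the same context as well as samples from different contexts.

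Finally, summing the $N_s$ identical diagonal contributions in each context gives $\sum_s N_s F_s(\theta)$, which is Eq.~(\ref{eq:FI_total}). The only real obstacle is justifying the zero-mean score, which requires interchanging the derivative with the sum or integral. For finite discrete alphabets with differentiable $p(x|s,\theta)$ on the support, this interchange is immediate. For continuous records I would state it as a standard regularity hypothesis, such as dominated differentiability together with a support that does not depend on $\theta$. One also needs $F_s<\infty$ so that the cross expectations are finite, and this follows from Cauchy--Schwarz.
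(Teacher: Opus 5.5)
Your proposal is correct and follows essentially the same route as the paper's proof: write the full-data score as the sum of per-sample scores, use independence so that the cross terms factorize, and use the zero-mean property of the score to kill them, which leaves $\sum_s N_s F_s(\theta)$. Your explicit remarks on the regularity needed to interchange derivative and sum, and on finiteness of $F_s$, just spell out what the paper compresses into ``under standard regularity conditions.''
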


\begin{proof}---From Eq.~(\ref{eq:loglikelihood}),
\begin{eqnarray}
\frac{\partial}{\partial\theta}\ell(\theta | D) = \sum_{s \in {\cal S}_D} \sum_{i=1}^{N_s}\frac{\partial}{\partial\theta}\ln{p(x_{s,i} | s,\theta)}.
\label{eq:score_sum}
\end{eqnarray}
Taking the expectation over the product distribution and using the independence, the cross terms vanish because the score has zero mean under standard regularity conditions. Thus, one obtains Eq.~(\ref{eq:FI_total}).
\end{proof}

{\bf Theorem~\ref{thm:FI_additivity}} shows that the sample allocation $\{N_s\}$ determines the effective FI through a linear combination of $\{F_s(\theta)\}$. It is convenient to introduce weights $q_s:=N_s/N$, so that
\begin{eqnarray}
\frac{1}{N}F_D(\theta) = \sum_{s \in {\cal S}_D} q_s F_s(\theta).
\label{eq:FI_per_sample}
\end{eqnarray}
In the asymptotic regime, an unbiased estimator $\hat{\theta}$ satisfies the Cram\'er--Rao bound~\cite{Cramer2016, Helstrom1976}
\begin{eqnarray}
{\rm Var}(\hat{\theta}) \geq \frac{1}{F_D(\theta)}.
\label{eq:CRB_total}
\end{eqnarray}
Hence, for a fixed total budget $N$, the performance comparisons between protocols must specify the sample allocation across contexts.

\subsection{Quantum realization of the conditional model}\label{subsec:quantum_realization}

The operational model in Eq.~(\ref{eq:model_px_intro2}) includes the quantum experiments as a special case. Let the probe be prepared in a quantum state $\hat{\rho}(\theta)$ that depends on $\theta$. A context $s$ specifies a measurement (and possibly additional control operations), represented by a POVM $\{\hat{M}_{x | s}\}_{x \in {\cal X}_s}$ satisfying $\hat{M}_{x | s} \geq 0$ and $\sum_{x \in {\cal X}_s}\hat{M}_{x | s} = \hat{\mathds{1}}$. Here, the outcome statistics are given by the Born rule,
\begin{eqnarray}
p(x | s,\theta)=\tr{\left[\hat{M}_{x | s}\hat{\rho}(\theta)\right]}.
\label{eq:Born_rule_context}
\end{eqnarray}
For a pure-state model, $\hat{\rho}(\theta) = \ketbra{\psi(\theta)}{\psi(\theta)}$.

Sequential or adaptive protocols can also be included by enlarging the context and outcome spaces. For example, a consecutive measurement producing an outcome pair $(a,b)$ can be treated as a single outcome $x=(a,b)$ for a context $s$ that specifies the corresponding measurement instrument. Thus, the multi-context framework in Sec.~\ref{subsec:operational_setting} naturally accommodates both conventional single-setting metrology and protocols that integrate data from multiple settings.

\subsection{Classical causal models with conditional independence}\label{subsec:causal_models}

We next formalize the notion of a classical causal explanation of experimental statistics. A classical causal model is specified by a directed acyclic graph (DAG) $G$ whose nodes correspond to random variables~\cite{Pearl2009Causality}. Let ${\cal V} = \{X_1, X_2, \cdots, X_n\}$ denote the set of model variables, and let ${\rm pa}(X_j)$ denote the set of parents of node $X_j$ in $G$. The defining property of a Bayesian network is the factorization of the joint distribution:
\begin{eqnarray}
p(x_1, x_2, \cdots, x_n)=\prod_{j=1}^n p(x_j | {\rm pa}(x_j)).
\label{eq:DAG_factorization}
\end{eqnarray}
The factorization implies conditional-independence (CI) relations. For three sets of variables $U$, $V$ and $W$, the CI statement $U \perp V|W$ means
\begin{eqnarray}
p(u, v | w)=p(u | w)p(v | w),
\label{eq:CI_def}
\end{eqnarray}
for all values $u$, $v$ and $w$ with $p(w)>0$. A DAG $G$ induces a collection of CI constraints (e.g., via d-separation), and the set of distributions satisfying Eq.~(\ref{eq:DAG_factorization}) is equivalently characterized by these constraints together with positivity conditions.

In metrology, the experimenter controls the context $S$ and the parameter $\theta$ is unknown. To connect the causal models to the operational distributions $p(x | s,\theta)$, we treat $S$ as an exogenous variable that can be set by the experimenter, and we allow latent variables $\Lambda$ to represent unobserved degrees of freedom. A parameterized causal model generates a family of conditional distributions $p(v | s,\theta)$ over ${\cal V}$. In the most general form considered here, the model assumes the existence of $\Lambda$ such that
\begin{eqnarray}
p(v, \lambda | s,\theta)=p(\lambda)\prod_{j=1}^n p(x_j | {\rm pa}(x_j), s, \theta, \lambda),
\label{eq:param_causal_factorization}
\end{eqnarray}
where $v=(x_1,\cdots,x_n)$ and $\lambda$ is a value of $\Lambda$. The operational distribution $p(x|s,\theta)$ of an observed outcome variable $X$ is obtained by marginalization over unobserved variables.
\begin{definition}[Causal model compatibility]
\label{def:compatibility}
Let $X$ be the observed outcome variable and let $p(x|s,\theta)$ be an operational model. We say that $p(x | s,\theta)$ is compatible with a classical causal model class ${\cal M}(G)$ if there exist (possibly latent) variables ${\cal V}\setminus\{X\}$ and $\Lambda$ and a family of joint distributions $p(v, \lambda | s,\theta)$ that factorize as in Eq.~(\ref{eq:param_causal_factorization}) for the DAG $G$, such that the induced marginal equals the operational model for all contexts and parameter values,
\begin{eqnarray}
p(x | s,\theta)=\sum_{v \setminus x}\sum_{\lambda} p(v, \lambda|s, \theta).
\label{eq:marginal_compatibility}
\end{eqnarray}
\end{definition}
{\bf Definition~\ref{def:compatibility}} formalizes the statement that observed statistics admit a classical causal explanation within a specified model class. Different choices of $G$ and different restrictions on how $\theta$ enters Eq.~(\ref{eq:param_causal_factorization}) define different notions of classicality~\cite{Pearl2009Causality,Allen2017QuantumCommonCauses,Barrett2021CyclicQuantumCausal}. In particular, trajectory-based (Markov-chain-like) models and context-free (noncontextual) models correspond to different CI structures and lead to different testable constraints.

\subsection{Causal Fisher-information inequality framework}\label{subsec:CFII_def}

Given a causal model class ${\cal M}$, the set of compatible operational models $\{p(x | s,\theta)\}$ is constrained by the CI relations implied by ${\cal M}$. Since FI is a functional of $p(x | s,\theta)$, these constraints induce relations among the Fisher informations $\{F_s(\theta)\}$. This motivates the following general definition.
\begin{definition}[Causal Fisher-information inequality]
\label{def:CFII}
Fix a causal model class ${\cal M}$ and a finite set of contexts ${\cal S}_0\subseteq{\cal S}$. A causal Fisher-information inequality (CFII) is an inequality of the form
\begin{eqnarray}
{\cal G}\left(\{F_s(\theta)\}_{s \in {\cal S}_0}\right) \geq 0
\label{eq:CFII_general_form}
\end{eqnarray}
that holds for all $\theta$ and for every operational model $p(x|s,\theta)$ compatible with ${\cal M}$ in the sense of Definition~\ref{def:compatibility}.
\end{definition}
CFIIs are model-dependent: the function ${\cal G}$ depends on the causal assumptions encoded in ${\cal M}$.  In the following, we derive explicit CFIIs for several broad classes of causal models. The central operational message is that if experimentally estimated Fisher informations violate a CFII, then the observed statistics are incompatible with the assumed classical causal model class. This provides a general route to interpret FI-inequality violations as falsification of classical causal explanations, and it sets the stage for viewing such ``causal-model collapse'' as a metrological resource.

\suppnote{Causal Fisher-information inequalities (CFII)}\label{sec:cfii}

The general framework in Sec.~\ref{sec:framework} emphasizes a simple but powerful idea: once a statistical model is required to arise from a classical causal structure, the likelihood is no longer an arbitrary function of the parameter. It must factorize according to the causal graph, and this factorization enforces quantitative restrictions on the Fisher information that can be observed at the endpoints of the experiment. We call such restrictions ``causal Fisher-information inequalities (CFIIs).''

Here, we derive the CFIIs that constitute the backbone of our framework. The derivation is intentionally constructive: we isolate two universal principles---a multi-parameter Cram\'er--Rao geometry and a causal data-processing law---and then show how they fuse into a striking ``series law'' for Fisher information along a causal path. The conclusion is conceptually sharp: inverse Fisher information behaves as an information-resistance that must accumulate along classical causal bottlenecks. Hence, whenever an experiment exhibits a precision that beats this accumulation rule, the assumed classical causal model is not merely implausible; this is impossible by mathematics alone.

\subsection{Effective Fisher information for causal parameters}\label{subsec:effectiveFI}

CFIIs are most naturally expressed in terms of the causal parameters, which are often not the primitive parameters of the model but rather functions of them. In particular, in causal networks, the parameter of interest is frequently an additive quantity accumulated along edges (e.g., a total delay, a total phase, a total action), while each causal module contributes its own parameter.

Let $X$ be an observation (possibly multivariate) generated by a regular parametric model $p(x|\boldsymbol{\theta})$ with a $d$-dimensional parameter $\boldsymbol{\theta} = (\theta_1, \ldots, \theta_d)^T$. We define the score vector and Fisher information matrix as
\begin{eqnarray}
\mathbf{s}_X(x | \boldsymbol{\theta}) &:=& \nabla_{\boldsymbol{\theta}} \ln{p(x | \boldsymbol{\theta})}, \label{eq:score_vector_def} \\
\mathbf{F}_X(\boldsymbol{\theta}) &:=& \mathbb{E}_{x|\boldsymbol{\theta}}\Big[\mathbf{s}_X(x | \boldsymbol{\theta})\mathbf{s}_X(x | \boldsymbol{\theta})^T\Big]. \label{eq:FIM_def}
\end{eqnarray}
We assume standard regularity conditions so that $\mathbb{E}[\mathbf{s}_X]=\mathbf{0}$ and differentiation can be interchanged with integration.

The causal quantity of interest is modeled as a scalar function $g(\boldsymbol{\theta})$. In our applications the crucial case is the linear causal parameter
\begin{eqnarray}
\Theta = g(\boldsymbol{\theta}) = \mathbf{u}^T \boldsymbol{\theta},
\label{eq:linear_causal_parameter}
\end{eqnarray}
for some fixed vector $\mathbf{u} \in \mathbb{R}^d$. The appropriate single-number measure of information about $\Theta$ in a multi-parameter model is the effective Fisher information defined by~\cite{Demkowicz2020,Liu2020}
\begin{eqnarray}
F_X^{(\mathbf{u})}(\boldsymbol{\theta}) := \Big(\mathbf{u}^T \mathbf{F}_X(\boldsymbol{\theta})^{-1}\mathbf{u}\Big)^{-1},
\label{eq:effective_FI_def}
\end{eqnarray}
whenever $\mathbf{F}_X(\boldsymbol{\theta})$ is invertible on the support of $\mathbf{u}$. When $d=1$ and $\mathbf{u}=1$, this reduces to the standard Fisher information. Now we can state that:
\begin{theorem}[Cram\'er--Rao bound for a linear causal parameter]
\label{thm:CRB_linear}
Let $X \sim p(x|\boldsymbol{\theta})$ and let $\Theta=\mathbf{u}^T\boldsymbol{\theta}$ be the causal parameter in Eq.~(\ref{eq:linear_causal_parameter}). For any unbiased estimator $\hat{\Theta}(X)$, one has
\begin{eqnarray}
{\rm Var}(\hat{\Theta}) \ge \mathbf{u}^T \mathbf{F}_X(\boldsymbol{\theta})^{-1}\mathbf{u} = \frac{1}{F_X^{(\mathbf{u})}(\boldsymbol{\theta})}.
\label{eq:CRB_linear}
\end{eqnarray}
\end{theorem}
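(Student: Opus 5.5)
The plan is to prove this by the standard covariance (Cauchy--Schwarz) argument, applied to the score vector $\mathbf{s}_X$. Unbiasedness means $\mathbb{E}_{x|\boldsymbol{\theta}}[\hat\Theta(X)]=\mathbf{u}^T\boldsymbol{\theta}$ for all $\boldsymbol{\theta}$ in an open neighborhood. Differentiating this identity with respect to each $\theta_k$, and exchanging differentiation and summation (or integration) as the regularity assumptions allow, gives
\begin{eqnarray}
\mathbb{E}\big[\hat\Theta(X)\,\mathbf{s}_X(X|\boldsymbol{\theta})\big]=\mathbf{u}.
\end{eqnarray}
Because $\mathbb{E}[\mathbf{s}_X]=\mathbf{0}$, this identity also holds with $\hat\Theta$ replaced by the centered variable $\hat\Theta-\Theta$. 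In other words, the cross-covariance vector between the estimator and the score is exactly $\mathbf{u}$.

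Next, I fix an arbitrary vector $\mathbf{a}\in\mathbb{R}^d$ and apply the scalar Cauchy--Schwarz inequality to the pair $\hat\Theta-\Theta$ and $\mathbf{a}^T\mathbf{s}_X$. This gives
\begin{eqnarray}
(\mathbf{a}^T\mathbf{u})^2\le {\rm Var}(\hat\Theta)\;\mathbf{a}^T\mathbf{F}_X\mathbf{a}.
\end{eqnarray}
Whenever $\mathbf{a}^T\mathbf{F}_X\mathbf{a}>0$, this rearranges to ${\rm Var}(\hat\Theta)\ge (\mathbf{a}^T\mathbf{u})^2/(\mathbf{a}^T\mathbf{F}_X\mathbf{a})$. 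I then choose $\mathbf{a}=\mathbf{F}_X^{-1}\mathbf{u}$, which is the maximizer of this generalized Rayleigh quotient. With this choice, both the numerator and the denominator reduce to $\mathbf{u}^T\mathbf{F}_X^{-1}\mathbf{u}$, so the bound becomes ${\rm Var}(\hat\Theta)\ge \mathbf{u}^T\mathbf{F}_X^{-1}\mathbf{u}=1/F_X^{(\mathbf{u})}$. This is Eq.~(\ref{eq:CRB_linear}).

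The only delicate step is invertibility. The definition in Eq.~(\ref{eq:effective_FI_def}) requires $\mathbf{F}_X$ to be invertible only ``on the support of $\mathbf{u}$.'' If $\mathbf{F}_X$ is merely positive semidefinite, I would handle this as follows. First, note that unbiasedness forces $\mathbf{u}\in{\rm range}(\mathbf{F}_X)$: if $\mathbf{F}_X\mathbf{a}=\mathbf{0}$, then $\mathbf{a}^T\mathbf{s}_X=0$ almost surely, and therefore $\mathbf{a}^T\mathbf{u}=\mathbb{E}[(\hat\Theta-\Theta)\,\mathbf{a}^T\mathbf{s}_X]=0$. I can then rerun the same argument with $\mathbf{a}=\mathbf{F}_X^{+}\mathbf{u}$, where $\mathbf{F}_X^{+}$ is the Moore--Penrose pseudoinverse. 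This yields the bound with $\mathbf{F}_X^{+}$ in place of $\mathbf{F}_X^{-1}$, which coincides with Eq.~(\ref{eq:CRB_linear}) in the invertible case.

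Beyond this, the regularity needed to differentiate under the expectation is exactly the standing assumption already stated in the setup, so I expect no further obstacle.
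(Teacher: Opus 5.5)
Your proof is correct and is essentially the paper's argument. The paper also derives $\mathbb{E}[(\hat\Theta-\Theta)\,\mathbf{s}_X]=\mathbf{u}$ and then uses $\mathbb{E}[(\hat\Theta-\Theta-\mathbf{a}^T\mathbf{s}_X)^2]\ge 0$ with $\mathbf{a}=\mathbf{F}_X^{-1}\mathbf{u}$, which is the completed-square form of your Cauchy--Schwarz step. Your pseudoinverse treatment of a singular $\mathbf{F}_X$ (via $\mathbf{u}\in\mathrm{range}(\mathbf{F}_X)$) is a correct addition that the paper omits. One wording fix: with that choice it is $\mathbf{a}^T\mathbf{u}$ and $\mathbf{a}^T\mathbf{F}_X\mathbf{a}$ that both equal $\mathbf{u}^T\mathbf{F}_X^{-1}\mathbf{u}$, and since the numerator of your quotient is the square of the former, the quotient equals $\mathbf{u}^T\mathbf{F}_X^{-1}\mathbf{u}$.
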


\begin{proof}---Let $r(X):=\hat{\Theta}(X)-\Theta$. The unbiasedness implies $\mathbb{E}[r]=0$. Under the regularity assumptions,
\begin{eqnarray}
\frac{\partial}{\partial \theta_i}\mathbb{E}[\hat{\Theta}] = \int dx \, \hat{\Theta}(x)\frac{\partial}{\partial \theta_i}p(x | \boldsymbol{\theta}) = \mathbb{E}\Big[\hat{\Theta} \, s_i(X | \boldsymbol{\theta})\Big],
\label{eq:unbiased_derivative_identity}
\end{eqnarray}
where $s_i$ is the $i$th component of the score vector in Eq.~(\ref{eq:score_vector_def}). Since $\mathbb{E}[\hat{\Theta}] = \Theta = \mathbf{u}^T\boldsymbol{\theta}$, we have $\partial_{\theta_i}\mathbb{E}[\hat{\Theta}]=u_i$. Using $\mathbb{E}[s_i] = 0$, Eq.~(\ref{eq:unbiased_derivative_identity}) yields
\begin{eqnarray}
\mathbb{E}\Big[r(X) \, s_i(X|\boldsymbol{\theta})\Big] = u_i.
\label{eq:cov_error_score}
\end{eqnarray}
Then, by letting $\mathbf{s} = \mathbf{s}_X(X | \boldsymbol{\theta})$ and $\mathbf{u}=(u_1, \ldots, u_d)^T$, consider the nonnegative quantity
\begin{eqnarray}
\mathbb{E}\Big[(r-\mathbf{a}^T\mathbf{s})^2\Big] = \mathbb{E}[r^2] - 2\mathbf{a}^T \mathbb{E}[r \mathbf{s}] + \mathbf{a}^T \mathbb{E}[\mathbf{s} \mathbf{s}^T] \mathbf{a} \ge 0.
\label{eq:csquare_expand}
\end{eqnarray}
By Eq.~(\ref{eq:cov_error_score}) and Eq.~(\ref{eq:FIM_def}), this becomes
\begin{eqnarray}
{\rm Var}(\hat{\Theta}) - 2\mathbf{a}^T\mathbf{u} + \mathbf{a}^T\mathbf{F}_X(\boldsymbol{\theta})\mathbf{a} \ge 0.
\label{eq:quadratic_form_bound}
\end{eqnarray}
Minimization of the right-hand side over $\mathbf{a}$ gives the choice $\mathbf{a}=\mathbf{F}_X(\boldsymbol{\theta})^{-1}\mathbf{u}$ and yields
\begin{eqnarray}
{\rm Var}(\hat{\Theta}) \ge \mathbf{u}^T\mathbf{F}_X(\boldsymbol{\theta})^{-1}\mathbf{u}.
\label{eq:CRB_linear_proved}
\end{eqnarray}
This is Eq.~(\ref{eq:CRB_linear}), and the identity with Eq.~(\ref{eq:effective_FI_def}) is immediate.
\end{proof}

\subsection{Causal data processing for Fisher information}\label{subsec:dp_fi}

A causal model is not only a factorization; it is also a directional constraint: information can only flow forward through the causal arrows, and any omission of variables corresponds to a stochastic post-processing that is independent of the parameter. Such processing cannot create information. We provide the following theorem:
\begin{theorem}[Data-processing inequality for the Fisher information matrix]
\label{thm:DPI_FIM}
Let $X\sim p(x|\boldsymbol{\theta})$ and let $Y$ be generated from $X$ through a Markov kernel $p(y|x)$ that does not depend on $\boldsymbol{\theta}$:
\begin{eqnarray}
p(y | \boldsymbol{\theta}) = \int dx \, p(y|x) p(x | \boldsymbol{\theta}).
\label{eq:theta_independent_channel}
\end{eqnarray}
Then, the Fisher information matrices satisfy
\begin{eqnarray}
\mathbf{F}_Y(\boldsymbol{\theta}) \preceq \mathbf{F}_X(\boldsymbol{\theta}),
\label{eq:DPI_FIM_statement}
\end{eqnarray}
where $\preceq$ denotes the positive-semidefinite order.
\end{theorem}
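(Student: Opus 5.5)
The plan is to reduce the matrix inequality to a scalar statement along an arbitrary direction, and then use the fact that the score of the processed variable $Y$ is the conditional expectation of the score of $X$. Fix $\mathbf{v}\in\mathbb{R}^d$. Showing $\mathbf{v}^T\mathbf{F}_Y\mathbf{v}\le \mathbf{v}^T\mathbf{F}_X\mathbf{v}$ for every $\mathbf{v}$ is exactly the positive-semidefinite ordering in Eq.~(\ref{eq:DPI_FIM_statement}). So everything reduces to the one-dimensional projected scores $\mathbf{v}^T\mathbf{s}_X$ and $\mathbf{v}^T\mathbf{s}_Y$.

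The key identity comes from differentiating Eq.~(\ref{eq:theta_independent_channel}) under the integral sign. This is allowed by the standing regularity assumptions, and it is essential that the kernel $p(y|x)$ does not depend on $\boldsymbol{\theta}$. On the support of $p(y|\boldsymbol{\theta})$ this gives
\begin{eqnarray}
\mathbf{s}_Y(y|\boldsymbol{\theta}) = \frac{\int dx\, p(y|x)\,\nabla_{\boldsymbol{\theta}}p(x|\boldsymbol{\theta})}{p(y|\boldsymbol{\theta})} = \int dx\, p(x|y,\boldsymbol{\theta})\,\mathbf{s}_X(x|\boldsymbol{\theta}) = \mathbb{E}\big[\mathbf{s}_X(X|\boldsymbol{\theta})\,\big|\,Y=y\big].
\end{eqnarray}
Here $p(x|y,\boldsymbol{\theta})=p(y|x)p(x|\boldsymbol{\theta})/p(y|\boldsymbol{\theta})$ is the Bayes posterior in the joint model $p(x,y|\boldsymbol{\theta})=p(x|\boldsymbol{\theta})p(y|x)$. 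In words, the $Y$-score is the conditional expectation of the $X$-score given $Y$.

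With this identity the conclusion follows from conditional Jensen, or equivalently from the law of total variance, applied to the scalar random variable $Z=\mathbf{v}^T\mathbf{s}_X(X|\boldsymbol{\theta})$. We have $\mathbf{v}^T\mathbf{F}_Y\mathbf{v}=\mathbb{E}[(\mathbb{E}[Z|Y])^2]\le\mathbb{E}[\mathbb{E}[Z^2|Y]]=\mathbb{E}[Z^2]=\mathbf{v}^T\mathbf{F}_X\mathbf{v}$. The gap is $\mathbb{E}[\mathrm{Var}(Z|Y)]\ge 0$, so the difference $\mathbf{F}_X-\mathbf{F}_Y=\mathbb{E}[\mathrm{Cov}(\mathbf{s}_X|Y)]$ is an explicit covariance matrix and hence positive semidefinite.

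I expect the main obstacle to be justifying the score identity rigorously rather than the inequality itself. Two issues need care. The first is interchanging $\nabla_{\boldsymbol{\theta}}$ with the integral over $x$, which requires a dominating function for $\nabla p(x|\boldsymbol{\theta})$; I would absorb this into the regularity conditions already assumed for $\mathbf{F}_X$. The second is handling outcomes $y$ with $p(y|\boldsymbol{\theta})=0$, which carry zero weight and can be excluded. For discrete outcomes both issues are trivial. In the continuous case I would state the domination assumption explicitly and otherwise treat it as covered by the standard regularity already invoked in Sec.~\ref{subsec:effectiveFI}.
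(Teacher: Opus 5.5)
Your proposal is correct and follows essentially the same route as the paper: differentiate the $\boldsymbol{\theta}$-independent channel under the integral to obtain $\mathbf{s}_Y=\mathbb{E}[\mathbf{s}_X\,|\,Y]$, then fix a direction $\mathbf{v}$ and use that conditional expectation is an $L^2$-contraction. Your explicit form of the gap, $\mathbf{F}_X-\mathbf{F}_Y=\mathbb{E}[\mathrm{Cov}(\mathbf{s}_X|Y)]$, and your remarks on the domination condition and on zero-probability outcomes spell out details the paper leaves implicit.
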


\begin{proof}---Let $\mathbf{s}_X = \nabla_{\boldsymbol{\theta}}\ln{p(x | \boldsymbol{\theta})}$ and $\mathbf{s}_Y = \nabla_{\boldsymbol{\theta}} \ln{p(y | \boldsymbol{\theta})}$. Differentiating Eq.~(\ref{eq:theta_independent_channel}), we have
\begin{eqnarray}
\nabla_{\boldsymbol{\theta}} \, p(y | \boldsymbol{\theta}) = \int dx \, p(y|x)\nabla_{\boldsymbol{\theta}} p(x | \boldsymbol{\theta}) = \int dx \, p(y|x) p(x | \boldsymbol{\theta})\mathbf{s}_X(x | \boldsymbol{\theta}).
\label{eq:diff_marginal}
\end{eqnarray}
Dividing by $p(y|\boldsymbol{\theta})$, we obtain
\begin{eqnarray}
\mathbf{s}_Y(y | \boldsymbol{\theta}) = \frac{1}{p(y | \boldsymbol{\theta})}\int dx \, p(y|x) p(x | \boldsymbol{\theta})\mathbf{s}_X(x | \boldsymbol{\theta}) = \mathbb{E}\bigl[ \mathbf{s}_X(X | \boldsymbol{\theta}) \,|\, Y=y \bigr],
\label{eq:score_conditional_expectation}
\end{eqnarray}
where the conditional expectation is taken with respect to the joint distribution $p(x,y | \boldsymbol{\theta})=p(y|x) p(x | \boldsymbol{\theta})$.

For any fixed vector $\mathbf{v}\in\mathbb{R}^d$,
\begin{eqnarray}
\mathbf{v}^T\mathbf{F}_Y(\boldsymbol{\theta})\mathbf{v} &=& \mathbb{E}\bigl[\big(\mathbf{v}^T\mathbf{s}_Y(Y | \boldsymbol{\theta})\big)^2\bigr] = \mathbb{E}\bigl[\big(\mathbf{v}^T\mathbb{E}[\mathbf{s}_X | Y]\big)^2\bigr] \nonumber\\
	&\le& \mathbb{E}\bigl[\mathbb{E}\big[\big(\mathbf{v}^T\mathbf{s}_X\big)^2 \,|\, Y\big]\bigr] = \mathbb{E}\bigl[\big(\mathbf{v}^T\mathbf{s}_X(X | \boldsymbol{\theta})\big)^2\bigr] = \mathbf{v}^T\mathbf{F}_X(\boldsymbol{\theta})\mathbf{v},
\label{eq:DPI_quadratic}
\end{eqnarray}
where the inequality follows from the fact that conditional expectation is an $L^2$-contraction. Since Eq.~(\ref{eq:DPI_quadratic}) holds for all $\mathbf{v}$, we obtain Eq.~(\ref{eq:DPI_FIM_statement}).
\end{proof}

An immediate consequence is that effective Fisher information for any causal parameter cannot increase under a causal coarse-graining.
\begin{corollary}[Monotonicity of effective Fisher information]
\label{cor:monotone_effectiveFI}
Under the assumptions of {\bf Theorem~\ref{thm:DPI_FIM}}, for any $\mathbf{u}$ one has
\begin{eqnarray}
F_Y^{(\mathbf{u})}(\boldsymbol{\theta}) \le F_X^{(\mathbf{u})}(\boldsymbol{\theta}).
\label{eq:monotone_effectiveFI}
\end{eqnarray}
\end{corollary}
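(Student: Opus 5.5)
The corollary should follow from Theorem~\ref{thm:DPI_FIM} together with one fact: the map $\mathbf{A}\mapsto \mathbf{u}^T\mathbf{A}^{-1}\mathbf{u}$ is antitone on positive-definite matrices. Theorem~\ref{thm:DPI_FIM} gives $\mathbf{F}_Y\preceq\mathbf{F}_X$. I will show that this implies $\mathbf{F}_X^{-1}\preceq\mathbf{F}_Y^{-1}$. Evaluating the quadratic form at $\mathbf{u}$ then gives $\mathbf{u}^T\mathbf{F}_X^{-1}\mathbf{u}\le \mathbf{u}^T\mathbf{F}_Y^{-1}\mathbf{u}$, and taking reciprocals yields Eq.~(\ref{eq:monotone_effectiveFI}) by the definition in Eq.~(\ref{eq:effective_FI_def}).

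For the operator-monotonicity step I will use a variational characterization, so no matrix square roots are needed. For positive-definite $\mathbf{A}$,
\begin{eqnarray}
\mathbf{u}^T\mathbf{A}^{-1}\mathbf{u}=\sup_{\mathbf{a}}\left(2\mathbf{a}^T\mathbf{u}-\mathbf{a}^T\mathbf{A}\mathbf{a}\right).
\end{eqnarray}
This is the same completion-of-squares computation used in the proof of Theorem~\ref{thm:CRB_linear}, with the maximizer $\mathbf{a}=\mathbf{A}^{-1}\mathbf{u}$. Since $\mathbf{a}^T\mathbf{F}_Y\mathbf{a}\le\mathbf{a}^T\mathbf{F}_X\mathbf{a}$ for every $\mathbf{a}$, each objective for $\mathbf{F}_Y$ dominates the corresponding one for $\mathbf{F}_X$. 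The suprema are therefore ordered in the required direction.

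The main obstacle is invertibility. The effective FI is defined only where the Fisher matrix is invertible on the support of $\mathbf{u}$, and $\mathbf{F}_Y$ may be singular even when $\mathbf{F}_X$ is not. I will handle this by defining $\mathbf{u}^T\mathbf{A}^{-1}\mathbf{u}$ for general positive-semidefinite $\mathbf{A}$ through the supremum above. It equals the pseudo-inverse value when $\mathbf{u}\in\mathrm{range}(\mathbf{A})$ and $+\infty$ otherwise. With this convention, $F^{(\mathbf{u})}=0$ whenever the supremum is infinite, which matches the singular-case limiting convention stated in the main text.

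Under this definition the ordering argument goes through unchanged. If $\mathbf{F}_X$ satisfies the invertibility condition but $\mathbf{F}_Y$ does not, then $F_Y^{(\mathbf{u})}=0\le F_X^{(\mathbf{u})}$, and the inequality holds trivially.
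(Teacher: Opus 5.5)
Your proposal is correct and follows the paper's proof. Theorem~\ref{thm:DPI_FIM} gives $\mathbf{F}_Y\preceq\mathbf{F}_X$, inversion reverses the order, and evaluating at $\mathbf{u}$ and taking reciprocals yields the claim. The paper simply cites the order-reversal under inversion for positive-definite matrices, whereas you prove it through the variational formula $\sup_{\mathbf{a}}(2\mathbf{a}^T\mathbf{u}-\mathbf{a}^T\mathbf{A}\mathbf{a})$, and your $+\infty$ convention for $\mathbf{u}\notin\mathrm{range}(\mathbf{A})$ extends the statement to singular $\mathbf{F}_Y$, a case the paper's argument does not address.
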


\begin{proof}---By {\bf Theorem~\ref{thm:DPI_FIM}}, $\mathbf{F}_Y \preceq \mathbf{F}_X$. For positive definite matrices, the inversion reverses the order, hence $\mathbf{F}_Y^{-1} \succeq \mathbf{F}_X^{-1}$. Therefore, $\mathbf{u}^T\mathbf{F}_Y^{-1}\mathbf{u} \ge \mathbf{u}^T\mathbf{F}_X^{-1}\mathbf{u}$, and Eq.~(\ref{eq:monotone_effectiveFI}) follows from Eq.~(\ref{eq:effective_FI_def}).
\end{proof}

\subsection{The CFII for a causal path: inverse FI accumulates}\label{subsec:cfii_path}

We now arrive at the core inequality. Consider three nodes $A \rightarrow C \rightarrow B$ forming a causal path. Operationally, $A$ labels the preparation (or initial condition), $C$ labels an intermediate state (possibly unobserved), and $B$ labels the final measurement record. A classical causal-path hypothesis asserts that the joint conditional distribution factorizes as
\begin{eqnarray}
p(c,b | a,\theta_{ac},\theta_{cb}) = p(c|a,\theta_{ac})\,p(b|c,\theta_{cb}),
\label{eq:causal_path_factorization}
\end{eqnarray}
which is equivalent to the conditional independence $A \perp B \,|\, C$ together with modular dependence of parameters: $\theta_{ac}$ appears only in the $A \to C$ module, and $\theta_{cb}$ appears only in the $C\to B$ module.

The causal parameter of interest is the \emph{total} parameter accumulated along the path,
\begin{eqnarray}
\theta_{ab} = \theta_{ac} + \theta_{cb}.
\label{eq:additive_path_parameter}
\end{eqnarray}
This is precisely the form encountered whenever the parameter corresponds to an additive generator under composition (time translations, phase shifts, accumulated action, etc.). We will show that Eq.~(\ref{eq:causal_path_factorization}) and Eq.~(\ref{eq:additive_path_parameter}) force a stringent bound on the precision available at the endpoint.

To express the bound in a way that is both causal and operational, we introduce the local Fisher informations
\begin{eqnarray}
F_{ac}(\theta_{ac}) &:=& \mathbb{E}_{c | a,\theta_{ac}}\bigl[\big(\partial_{\theta_{ac}} \ln{p(c|a,\theta_{ac})}\big)^2 \bigr], \label{eq:F_ac_def}\\
F_{cb}(\boldsymbol{\theta}) &:=& \mathbb{E}_{c | a,\theta_{ac}}\bigl[\mathbb{E}_{b | c,\theta_{cb}}\big[\big(\partial_{\theta_{cb}}\ln{p(b|c,\theta_{cb})}\big)^2\big]\bigr]. \label{eq:F_cb_def}
\end{eqnarray}
Here, the second definition explicitly acknowledges that the information in the $C \to B$ module may depend on which intermediate value $c$ occurs, and hence must be averaged over the distribution induced by the upstream module (so in general it can also depend on $\theta_{ac}$ through $p(c|a,\theta_{ac})$).

We also define the endpoint effective Fisher information for estimating $\theta_{ab}$ from $B$ alone, in the two-parameter model in Eq.~(\ref{eq:causal_path_factorization}). Let $\boldsymbol{\theta}=(\theta_{ac},\theta_{cb})^T$ and $\mathbf{u}=(1,1)^T$, so that $\theta_{ab}=\mathbf{u}^T\boldsymbol{\theta}$. Then, we have
\begin{eqnarray}
F_{ab}^{(B)}(\boldsymbol{\theta}) := F_{B}^{(\mathbf{u})}(\boldsymbol{\theta}) = \Big(\mathbf{u}^T \mathbf{F}_{B}(\boldsymbol{\theta})^{-1}\mathbf{u}\Big)^{-1}.
\label{eq:F_ab_endpoint_effective}
\end{eqnarray}
When the model is effectively one-parameter (no nuisance degrees of freedom), $F_{ab}^{(B)}$ reduces to the usual Fisher information from $p(b|a,\theta_{ab})$. The advantage of Eq.~(\ref{eq:F_ab_endpoint_effective}) is that it remains meaningful for the genuinely causal situation where the path is described by modular parameters. We now construct the following theorem:
\begin{theorem}[CFII for a classical causal path]
\label{thm:CFII_path}
Assume the causal-path factorization Eq.~(\ref{eq:causal_path_factorization}) and the additive causal parameter in Eq.~(\ref{eq:additive_path_parameter}). Then, the effective Fisher information at the endpoint satisfies
\begin{eqnarray}
\Big(F_{ab}^{(B)}(\boldsymbol{\theta})\Big)^{-1} \ge \Big(F_{ac}(\theta_{ac})\Big)^{-1} + \Big(F_{cb}(\boldsymbol{\theta})\Big)^{-1}.
\label{eq:CFII_path}
\end{eqnarray}
\end{theorem}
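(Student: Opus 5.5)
The plan has two stages, mirroring the main-text sketch. First I bound the information about $\theta_{ab}$ in the enlarged record $(C,B)$. Then I transfer that bound to $B$ alone by the causal data-processing law.

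\textbf{Stage one: the joint Fisher matrix of $(C,B)$ is diagonal.} By Eq.~(\ref{eq:causal_path_factorization}),
\[
\ln p(c,b|a,\boldsymbol{\theta})=\ln p(c|a,\theta_{ac})+\ln p(b|c,\theta_{cb}).
\]
So the score vector of $(C,B)$ with respect to $\boldsymbol{\theta}=(\theta_{ac},\theta_{cb})^T$ is
\[
\mathbf{s}_{CB}=\bigl(\partial_{\theta_{ac}}\ln p(c|a,\theta_{ac}),\,\partial_{\theta_{cb}}\ln p(b|c,\theta_{cb})\bigr)^T.
\]
The diagonal entries of $\mathbf{F}_{CB}$ are then exactly $F_{ac}$ of Eq.~(\ref{eq:F_ac_def}) and $F_{cb}$ of Eq.~(\ref{eq:F_cb_def}), the latter by the tower property. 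For the off-diagonal entry, I condition on $C=c$. The first score component is then a fixed number, while the second has zero conditional mean, $\mathbb{E}_{b|c,\theta_{cb}}[\partial_{\theta_{cb}}\ln p(b|c,\theta_{cb})]=0$; this uses regularity of the downstream kernel for each $c$. Hence the cross term vanishes and $\mathbf{F}_{CB}=\mathrm{diag}(F_{ac},F_{cb})$.

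With $\mathbf{u}=(1,1)^T$, Eq.~(\ref{eq:effective_FI_def}) then gives
\[
\bigl(F_{CB}^{(\mathbf{u})}\bigr)^{-1}=\mathbf{u}^T\mathbf{F}_{CB}^{-1}\mathbf{u}=F_{ac}^{-1}+F_{cb}^{-1}.
\]

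\textbf{Stage two: pass from $(C,B)$ to $B$.} The record $B$ is obtained from $(C,B)$ by the marginalization kernel $p(b'|c,b)=\delta_{b'b}$, which does not depend on $\boldsymbol{\theta}$. The hypotheses of Theorem~\ref{thm:DPI_FIM} therefore hold, and Corollary~\ref{cor:monotone_effectiveFI} yields $F_B^{(\mathbf{u})}\le F_{CB}^{(\mathbf{u})}$. Taking reciprocals gives Eq.~(\ref{eq:CFII_path}).

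\textbf{Main obstacle: singular Fisher matrices.} The delicate point is the invertibility assumed in Corollary~\ref{cor:monotone_effectiveFI}. The endpoint matrix $\mathbf{F}_B$ is typically singular. For example, when $B$ depends on $\boldsymbol{\theta}$ only through $\theta_{ab}$, $\mathbf{F}_B$ is rank one along $\mathbf{u}$, and Eq.~(\ref{eq:F_ab_endpoint_effective}) must be read as a generalized inverse, or via the variational form.

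I would handle this with the variational characterization
\[
\bigl(F^{(\mathbf{u})}\bigr)^{-1}=\sup_{\mathbf{v}:\,\mathbf{v}^T\mathbf{u}=1}\frac{1}{\mathbf{v}^T\mathbf{F}\mathbf{v}}.
\]
Equivalently, $\mathbf{u}^T\mathbf{F}^{-1}\mathbf{u}=\sup_{\mathbf{v}}(\mathbf{v}^T\mathbf{u})^2/\mathbf{v}^T\mathbf{F}\mathbf{v}$, with value $+\infty$ when $\mathbf{u}$ lies outside the range of $\mathbf{F}$. This form is monotone under $\mathbf{F}_B\preceq\mathbf{F}_{CB}$ without any invertibility assumption. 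It reduces to the ordinary definition when $\mathbf{F}$ is positive definite.

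Provided $F_{ac},F_{cb}>0$, $\mathbf{F}_{CB}$ is positive definite, and only $\mathbf{F}_B$ needs this care. Zero-information cases follow from the limiting convention stated in the theorem.
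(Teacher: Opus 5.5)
Your proposal is correct and matches the paper's proof step for step. The paper also shows that the Fisher matrix of the joint record $(C,B)$ is diagonal, using the split log-likelihood and the vanishing conditional score mean, and then drops $C$ via the $\boldsymbol{\theta}$-independent marginalization using Theorem~\ref{thm:DPI_FIM} and Corollary~\ref{cor:monotone_effectiveFI}. Your variational treatment of a singular $\mathbf{F}_B$ is a worthwhile tightening: Corollary~\ref{cor:monotone_effectiveFI} inverts $\mathbf{F}_B$ as if it were positive definite, yet in the effectively one-parameter case the paper emphasizes (endpoint depending only on $\theta_{ab}$), $\mathbf{F}_B$ is exactly rank one.
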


\begin{proof}---We proceed in two steps: first we evaluate the information about $\theta_{ab}$ when the intermediate variable $C$ is accessible, and then we invoke causal data processing when $C$ is discarded.

\medskip\noindent
{\bf Step 1: joint record $(C,B)$ yields the ``series law'' bound.} 

\medskip
From Eq.~(\ref{eq:causal_path_factorization}), we have
\begin{eqnarray}
\ln{p(c,b | a,\theta_{ac},\theta_{cb})} = \ln{p(c | a,\theta_{ac})} + \ln{p(b | c,\theta_{cb})}.
\label{eq:loglik_split}
\end{eqnarray}
Hence, the score components are
\begin{eqnarray}
\partial_{\theta_{ac}} \ln{p(c,b | a,\theta_{ac},\theta_{cb})} &=& \partial_{\theta_{ac}} \ln{p(c | a,\theta_{ac})}, \nonumber \\
\partial_{\theta_{cb}}\ln{p(c,b | a,\theta_{ac},\theta_{cb})} &=& \partial_{\theta_{cb}} \ln{p(b | c,\theta_{cb})}.
\label{eq:score_cb}
\end{eqnarray}
The Fisher information matrix for the joint record $(C,B)$ has the entries
\begin{eqnarray}
\mathbf{F}_{(C,B)}(\boldsymbol{\theta}) = 
\begin{pmatrix}
F_{ac}(\theta_{ac}) & 0 \\
0 & F_{cb}(\boldsymbol{\theta})
\end{pmatrix}
\label{eq:FIM_joint_diagonal}
\end{eqnarray}
because the cross term vanishes: i.e., 
\begin{eqnarray}
\mathbb{E}\bigl[\big(\partial_{\theta_{ac}} \ln{p}\big)\big(\partial_{\theta_{cb}}\ln{p}\big)\bigr] = \mathbb{E}_{c | a,\theta_{ac}}\bigl[\big(\partial_{\theta_{ac}}\ln{p}(c | a,\theta_{ac})\big)\mathbb{E}_{b | c,\theta_{cb}}\big[\partial_{\theta_{cb}}\ln p(b|c,\theta_{cb})\big]\bigr] = 0,
\label{eq:cross_term_zero}
\end{eqnarray}
since the conditional expectation of a score is zero. Now set $\mathbf{u}=(1,1)^T$ so that $\theta_{ab}=\mathbf{u}^T\boldsymbol{\theta}$. By {\bf Theorem~\ref{thm:CRB_linear}}, the following holds:
\begin{eqnarray}
{\rm Var}(\hat{\theta}_{ab}) \ge \mathbf{u}^T\mathbf{F}_{(C,B)}(\boldsymbol{\theta})^{-1}\mathbf{u} = \frac{1}{F_{ac}(\theta_{ac})} + \frac{1}{F_{cb}(\boldsymbol{\theta})}.
\label{eq:CRB_series_step}
\end{eqnarray}
Equivalently, the effective Fisher information for $\theta_{ab}$ contained in the \emph{joint} record $(C,B)$ is upper bounded by
\begin{eqnarray}
F_{ab}^{(C,B)}(\boldsymbol{\theta})
\;:=\; F_{(C,B)}^{(\mathbf{u})}(\boldsymbol{\theta})
\;=\;
\Big(\frac{1}{F_{ac}(\theta_{ac})}+\frac{1}{F_{cb}(\boldsymbol{\theta})}\Big)^{-1}.
\label{eq:effectiveFI_joint_series}
\end{eqnarray}

\medskip\noindent
{\bf Step 2: discarding $C$ cannot increase information about $\theta_{ab}$.}

\medskip
The endpoint record $B$ is obtained from $(C,B)$ by the $\boldsymbol{\theta}$-independent coarse-graining that simply forgets $C$. By {\bf Corollary~\ref{cor:monotone_effectiveFI}},
\begin{eqnarray}
F_{ab}^{(B)}(\boldsymbol{\theta}) \le F_{ab}^{(C,B)}(\boldsymbol{\theta}).
\label{eq:monotone_apply}
\end{eqnarray}
By taking the inverse of both sides and using Eq.~(\ref{eq:effectiveFI_joint_series}), we obtain
\begin{eqnarray}
\Big(F_{ab}^{(B)}(\boldsymbol{\theta})\Big)^{-1} \ge \frac{1}{F_{ac}(\theta_{ac})}+\frac{1}{F_{cb}(\boldsymbol{\theta})},
\label{eq:CFII_path_proved}
\end{eqnarray}
which is Eq.~(\ref{eq:CFII_path}). The proof is completed.
\end{proof}

The inequality in Eq.~(\ref{eq:CFII_path}) is the archetypal CFII: it converts a causal claim---``the process admits a classical intermediate node $C$ with modular parameters''---into a quantitative constraint on observable precision. Here, we remark:
\begin{remark}[Inverse Fisher information as information-resistance]
\label{rem:resistance}
Eq.~(\ref{eq:CFII_path}) admits a physical reading that will guide the rest of this study. We here define the information-resistance $R:=F^{-1}$. Then, Eq.~(\ref{eq:CFII_path}) states that along a classical causal path, the resistances must add:
\begin{eqnarray}
R_{ab}^{(B)} \ge R_{ac} + R_{cb}.
\label{eq:resistance_series}
\end{eqnarray}
In a classical causal model, the information must traverse the intermediate bottleneck $C$, and each module imposes an unavoidable resistance. Therefore, if an experiment exhibits an endpoint precision such that $R_{ab}^{(B)} < R_{ac} + R_{cb}$, the conclusion is unambiguous: the assumed classical causal-path model cannot reproduce the statistics. In this sense, a violation of Eq.~(\ref{eq:CFII_path}) is not a mere metrological curiosity; it is a certificate of causal-model impossibility.
\end{remark}

The causal-path CFII extends immediately to longer chains.
\begin{corollary}[CFII for an $N$-step causal chain]
\label{cor:CFII_chain}
Consider a chain $X_0 \rightarrow X_1\rightarrow \cdots \rightarrow X_N$ with modular parameters $\theta_{j-1,j}$ and factorization
\begin{eqnarray}
p(x_1,\ldots,x_N \,|\, x_0,\boldsymbol{\theta}) = \prod_{j=1}^N p(x_j \,|\, x_{j-1},\theta_{j-1,j}),
\label{eq:chain_factorization}
\end{eqnarray}
and define the additive causal parameter $\theta_{0N}=\sum_{j=1}^N\theta_{j-1,j}$. Then,
\begin{eqnarray}
\Big(F_{0N}^{(X_N)}(\boldsymbol{\theta})\Big)^{-1} \ge \sum_{j=1}^N \Big(F_{j-1,j}(\theta_{j-1,j})\Big)^{-1},
\label{eq:CFII_chain}
\end{eqnarray}
where $F_{0N}^{(X_N)}$ is the endpoint effective Fisher information for $\theta_{0N}$ from $X_N$ alone.
\end{corollary}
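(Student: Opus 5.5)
The plan mirrors the two-step proof of Theorem~\ref{thm:CFII_path}, with the two-parameter structure replaced by an $N$-parameter one. Let $\boldsymbol{\theta}=(\theta_{01},\ldots,\theta_{N-1,N})^T$ and $\mathbf{u}=(1,\ldots,1)^T\in\mathbb{R}^N$, so that $\theta_{0N}=\mathbf{u}^T\boldsymbol{\theta}$. Guided by the definition in Eq.~(\ref{eq:F_cb_def}), define each segment information as the downstream FI averaged over the upstream distribution:
\begin{eqnarray}
F_{j-1,j}:=\mathbb{E}_{x_{j-1}}\Bigl[\mathbb{E}_{x_j|x_{j-1}}\bigl[(\partial_{\theta_{j-1,j}}\ln p(x_j|x_{j-1},\theta_{j-1,j}))^2\bigr]\Bigr].
\end{eqnarray}
Here the outer expectation is over the marginal of $X_{j-1}$ induced by all upstream modules. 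The displayed argument $\theta_{j-1,j}$ in Eq.~(\ref{eq:CFII_chain}) is therefore shorthand, and this should be stated explicitly.

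\textbf{Step 1 (full trajectory record).} By Eq.~(\ref{eq:chain_factorization}), the log-likelihood of $(X_1,\ldots,X_N)$ given $x_0$ is a sum of module terms. Hence the $j$th score component is $s_j=\partial_{\theta_{j-1,j}}\ln p(x_j|x_{j-1},\theta_{j-1,j})$, which depends only on $(x_{j-1},x_j)$. For $i<j$, condition on $(X_0,\ldots,X_{j-1})$. Then $s_i$ is fixed, and by the Markov factorization $\mathbb{E}[s_j|X_0,\ldots,X_{j-1}]=\mathbb{E}_{x_j|x_{j-1}}[s_j]=0$, because a conditional score has zero mean. So $\mathbb{E}[s_is_j]=0$, and the Fisher matrix of the full record is $\mathrm{diag}(F_{01},\ldots,F_{N-1,N})$. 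Its diagonal entries are exactly the averaged segment informations defined above. Theorem~\ref{thm:CRB_linear} with Eq.~(\ref{eq:effective_FI_def}) then gives
\begin{eqnarray}
\bigl(F^{(X_1,\ldots,X_N)}_{0N}\bigr)^{-1}=\mathbf{u}^T\mathbf{F}^{-1}\mathbf{u}=\sum_{j}F_{j-1,j}^{-1}.
\end{eqnarray}
This tower-property argument for the vanishing cross terms is the only genuinely new ingredient relative to Theorem~\ref{thm:CFII_path}, and I expect it to be the main point requiring care. In particular, nonadjacent scores must also be handled: $s_i$ is measurable with respect to the past of $X_{j-1}$, so the same conditioning disposes of them.

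\textbf{Step 2 (discarding intermediates).} The endpoint $X_N$ is obtained from $(X_1,\ldots,X_N)$ by the $\boldsymbol{\theta}$-independent marginalization that forgets $X_1,\ldots,X_{N-1}$. Corollary~\ref{cor:monotone_effectiveFI} therefore gives $F_{0N}^{(X_N)}\le F_{0N}^{(X_1,\ldots,X_N)}$. Inverting this inequality, with positive FIs assumed and the limiting convention for singular cases, yields Eq.~(\ref{eq:CFII_chain}).

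A technical caveat is that $\mathbf{F}_{X_N}$, an $N\times N$ matrix coming from a possibly low-dimensional record, is typically singular. The effective FI must then be read as $\sup\{(\mathbf{a}^T\mathbf{u})^2/\mathbf{a}^T\mathbf{F}\mathbf{a}\}$, or by the limiting convention, and monotonicity must be checked in that form. It follows from the positive-semidefinite ordering in Theorem~\ref{thm:DPI_FIM}, since $\mathbf{F}_{X_N}\preceq\mathbf{F}_{\rm full}$ implies that the supremum defining the effective FI of $X_N$ is bounded by the one for the full record. As an alternative, one could induct on $N$ using Theorem~\ref{thm:CFII_path}, treating $X_0\to X_{N-1}\to X_N$ as a two-step path. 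That route fails, however, because the effective FI at $X_{N-1}$ is not itself a segment FI, so the direct multi-parameter route above is preferred.
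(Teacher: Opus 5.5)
Your argument is correct. It uses the same two ingredients as the paper, orthogonal module scores followed by causal data processing, but organizes them differently.

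\textbf{Comparison of routes.} The paper proves the corollary by applying Theorem~\ref{thm:CFII_path} iteratively, grouping links and inducting. You instead redo the two-step proof once in $N$ dimensions. The tower property kills every cross term $\mathbb{E}[s_is_j]$, adjacent or not, so the full-trajectory Fisher matrix is exactly $\mathrm{diag}(F_{01},\ldots,F_{N-1,N})$. A single coarse-graining to $X_N$ then finishes the proof.

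Your route is self-contained. It also makes explicit that $F_{j-1,j}$ is the upstream-averaged segment information, depending on all upstream parameters, which the notation $F_{j-1,j}(\theta_{j-1,j})$ hides.

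The paper's route is shorter. Read literally, though, it needs Theorem~\ref{thm:CFII_path} for a vector-valued module parameter, which the paper does not state.

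\textbf{The inductive route does not fail.} That vector-parameter extension is immediate, so your claim that induction ``fails'' is too strong. For $X_0\to X_{N-1}\to X_N$, the Fisher matrix of $(X_{N-1},X_N)$ is block diagonal, $\mathrm{diag}(\mathbf{F}_{X_{N-1}},F_{N-1,N})$. Its sum-direction resistance is therefore $R^{(X_{N-1})}_{0,N-1}+F_{N-1,N}^{-1}$. The effective FI at $X_{N-1}$ is indeed not a segment FI. But the argument only needs a lower bound on its inverse, and the induction hypothesis supplies exactly that.

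\textbf{The singular-matrix caveat has an inverted formula.} The caveat itself addresses a real point the paper glosses over. For a binary endpoint, $\mathbf{F}_{X_N}$ has rank at most one, so the inversion argument in Corollary~\ref{cor:monotone_effectiveFI} does not literally apply. However, your variational formula is upside down.

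Optimizing the scale of $\mathbf{a}$ in the proof of Theorem~\ref{thm:CRB_linear} shows that $\sup_{\mathbf{a}}(\mathbf{a}^T\mathbf{u})^2/\mathbf{a}^T\mathbf{F}\mathbf{a}$ is the \emph{resistance} $1/F^{(\mathbf{u})}$. It equals $\mathbf{u}^T\mathbf{F}^{-1}\mathbf{u}$ for invertible $\mathbf{F}$, and $+\infty$ when $\mathbf{u}\notin\mathrm{range}\,\mathbf{F}$. The effective FI itself is $\min_{\mathbf{a}^T\mathbf{u}=1}\mathbf{a}^T\mathbf{F}\mathbf{a}$.

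Taken literally, your formula together with $\mathbf{F}_{X_N}\preceq\mathbf{F}_{\mathrm{full}}$ makes the $X_N$ supremum \emph{larger} than the full-record one. That is the opposite of what you assert.

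With the quantity correctly labeled, the same ordering gives $R^{(X_N)}_{0N}\ge R^{\mathrm{full}}_{0N}=\sum_jF_{j-1,j}^{-1}$ directly. This is Eq.~\eqref{eq:CFII_chain}, obtained with no inversion and no invertibility assumption on $\mathbf{F}_{X_N}$. It is a cleaner ending than your Step~2.
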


\begin{proof}---Apply {\bf Theorem~\ref{thm:CFII_path}} iteratively, or proceed by induction by grouping the last two links into a single effective module. Each application adds one more term of the form $F_{j-1,j}^{-1}$ on the right-hand side.
\end{proof}

\subsection{Series-parallel composition and the network view}\label{subsec:cfii_network}

The path inequality already reveals the essential mechanism: modularity forces the orthogonality of score contributions, and causal coarse-graining prevents the hidden information from being resurrected at the endpoint. However, the realistic causal models often combine modules not only in series but also in parallel (multiple conditionally independent branches providing information about the same causal parameter). The resulting picture is that Fisher information obeys composition laws reminiscent of electrical networks: Fisher informations add in parallel, while inverse Fisher informations add in series. This analogy is not decorative; it is the algebraic heart of how CFIIs scale to complex causal architectures.

We record the parallel law in a form suited for later use:
\begin{theorem}[Parallel composition: additivity of Fisher information]
\label{thm:parallel_additivity}
Let $Y_1$ and $Y_2$ be conditionally independent given $\theta$, i.e., $p(y_1,y_2 | \theta)=p(y_1 | \theta)p(y_2|\theta)$. Then, the Fisher information satisfies
\begin{eqnarray}
F_{(Y_1,Y_2)}(\theta) = F_{Y_1}(\theta) + F_{Y_2}(\theta).
\label{eq:parallel_additivity}
\end{eqnarray}
Moreover, any coarse-graining $Z=\Gamma(Y_1,Y_2)$ produced by a $\theta$-independent map obeys
\begin{eqnarray}
F_{Z}(\theta) \le F_{Y_1}(\theta) + F_{Y_2}(\theta).
\label{eq:parallel_additivity_coarse}
\end{eqnarray}
\end{theorem}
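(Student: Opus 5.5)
The plan has two parts: first additivity of the score under independence, then the data-processing law (Theorem~\ref{thm:DPI_FIM}) for the coarse-graining.

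\textbf{Additivity.} Write the joint log-likelihood as $\ln p(y_1,y_2|\theta)=\ln p(y_1|\theta)+\ln p(y_2|\theta)$. The joint score is then $s_{12}=s_1(Y_1|\theta)+s_2(Y_2|\theta)$. Squaring and taking the expectation under the product distribution gives
\[
F_{(Y_1,Y_2)}=\E[s_1^2]+\E[s_2^2]+2\E[s_1s_2].
\]
Independence factorizes the cross term as $\E[s_1s_2]=\E[s_1]\,\E[s_2]$. This vanishes because each score has zero mean under the standing regularity assumptions (differentiation under the sum/integral of $\sum_y p(y|\theta)=1$). This is the same argument as in Theorem~\ref{thm:FI_additivity}, and Eq.~(\ref{eq:parallel_additivity}) follows.

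\textbf{Coarse-graining.} Take $X=(Y_1,Y_2)$ and let $Z=\Gamma(X)$. A deterministic map, or more generally a stochastic kernel $p(z|y_1,y_2)$, that does not depend on $\theta$ satisfies the hypothesis of Theorem~\ref{thm:DPI_FIM} in its scalar case $d=1$. That theorem gives $F_Z\le F_{(Y_1,Y_2)}$. Combining this with the additivity just shown yields Eq.~(\ref{eq:parallel_additivity_coarse}). The underlying mechanism is that $s_Z=\E[s_{12}\,|\,Z]$ is an $L^2$-contraction, exactly as in Eq.~(\ref{eq:DPI_quadratic}).

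\textbf{Main obstacle.} The only delicate point is regularity. We need the zero-mean score identity for each factor, and we need the interchange of derivative and marginalization when $\Gamma$ merges outcomes. Where $p(z|\theta)=0$, the score of $Z$ is undefined, but such points carry zero weight in the expectation. I will state these conditions as inherited from the standing assumptions of Sec.~\ref{subsec:effectiveFI} rather than re-derive them. Everything else is immediate from the earlier results.
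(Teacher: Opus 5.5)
Your proof is correct and follows the paper's own argument. The log-likelihood splits, so the joint score is a sum of independent zero-mean scores and the cross term vanishes; the coarse-grained bound then follows from Theorem~\ref{thm:DPI_FIM} in the one-parameter case. You simply spell out the cross-term cancellation and the regularity caveats that the paper leaves implicit.
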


\begin{proof}---From $\ln{p(y_1,y_2 | \theta)}=\ln{p(y_1 | \theta)} + \ln{p(y_2 | \theta)}$, the score is the sum of independent score contributions. Thus, by taking the expectation of the square, we obtain Eq.~(\ref{eq:parallel_additivity}). The coarse-grained bound in Eq.~(\ref{eq:parallel_additivity_coarse}) is an immediate application of {\bf Theorem~\ref{thm:DPI_FIM}} in the one-parameter case.
\end{proof}

Taken together, {\bf Theorem~\ref{thm:CFII_path}} (series law) and {\bf Theorem~\ref{thm:parallel_additivity}} (parallel law) allow one to derive CFIIs for any causal architecture that can be built from series and parallel compositions of independent modules. In particular, one can assign each module an information-resistance $R=F^{-1}$ and reduce the causal graph exactly as one reduces a resistor network: series resistances add, parallel resistances combine harmonically. The endpoint Fisher information is then upper bounded by the effective conductance of this reduced network, and the corresponding endpoint resistance is lower bounded by the effective resistance. This network viewpoint will be exploited in later sections to construct families of testable inequalities tailored to specific causal hypotheses, and to interpret their violation as a collapse of classical causal explanation.

\suppnote{Violation = impossibility of the classical causal model (model falsification)}\label{sec:falsification}

The inequalities derived in Sec.~\ref{sec:cfii} look, at first glance, like technical bounds on Fisher information. Their true significance is far sharper. A CFII is not merely a limit on precision; it is a logical constraint imposed by a classical causal explanation. Once a causal model class is fixed, the inequality becomes a necessary condition for compatibility. Consequently, a violation is not an invitation to interpret; it is a mathematical verdict. Here, we formalize this verdict and explain why it is physically dramatic: a CFII violation means that the classical causal narrative literally cannot generate the observed statistics.

\subsection{From inequality to falsification: the logical structure of CFIIs}\label{subsec:logic_falsification}

Recall from {\bf Definition~\ref{def:compatibility}} that a classical causal model class ${\cal M}$ specifies which families of operational distributions $\{p(x | s,\theta)\}$ are admissible, by requiring that they arise from a DAG factorization (possibly with latent variables) together with the associated conditional-independence relations. {\bf Definition~\ref{def:CFII}} then introduced a CFII as an inequality
\begin{eqnarray}
{\cal G}\left(\left\{F_s(\theta)\right\}_{s \in {\cal S}_0}\right) \ge 0
\label{eq:CFII_form_falsification}
\end{eqnarray}
that holds for all $\theta$ and for every operational model compatible with ${\cal M}$.

To turn this into a falsification principle, we separate two objects. First, we fix a finite set of contexts ${\cal S}_0$ and consider the vector of Fisher informations
\begin{eqnarray}
\mathbf{F}(\theta) := \left(F_s(\theta)\right)_{s \in {\cal S}_0}.
\label{eq:FI_vector}
\end{eqnarray}
Second, we define the feasible region of FI-vectors admitted by the model class.
\begin{definition}[Feasible FI region]
\label{def:feasible_region}
For a fixed $\theta$, the feasible FI region of a model class ${\cal M}$ is the set
\begin{eqnarray}
\Omega_{\cal M}(\theta) := \left\{ \mathbf{F}(\theta) : \{p(x | s,\theta)\}_{s \in {\cal S}_0}\ \mbox{is compatible with}\ {\cal M} \right\}.
\label{eq:Omega_def}
\end{eqnarray}
\end{definition}
A CFII is precisely the statement that $\Omega_{\cal M}(\theta)$ is contained in the half-space defined by ${\cal G} \ge 0$.

\begin{theorem}[CFII violation implies causal-model impossibility]
\label{thm:falsification_general}
Fix a model class ${\cal M}$, a context set ${\cal S}_0$, and a function ${\cal G}$ such that Eq.~(\ref{eq:CFII_form_falsification}) holds for all operational models compatible with ${\cal M}$. Let an experiment produce operational distributions $\{p_{\rm exp}(x | s,\theta)\}_{s \in {\cal S}_0}$ with Fisher informations $\mathbf{F}_{\rm exp}(\theta)$. If for some $\theta$ one has
\begin{eqnarray}
{\cal G}\left(\mathbf{F}_{\rm exp}(\theta)\right) < 0,
\label{eq:G_negative}
\end{eqnarray}
then $\{p_{\rm exp}(x | s,\theta)\}_{s \in {\cal S}_0}$ is not compatible with ${\cal M}$. Equivalently, $\mathbf{F}_{\rm exp}(\theta) \notin \Omega_{\cal M}(\theta)$.
\end{theorem}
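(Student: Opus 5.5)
The argument is a contrapositive, and the plan is to make each link explicit using Definition~\ref{def:CFII} and Definition~\ref{def:feasible_region}. Suppose, for contradiction, that the experimental family $\{p_{\rm exp}(x|s,\theta)\}_{s\in{\cal S}_0}$ were compatible with ${\cal M}$ in the sense of Definition~\ref{def:compatibility}. That is, suppose there exist latent variables $\Lambda$ and a DAG-factorized joint family as in Eq.~(\ref{eq:param_causal_factorization}) whose marginals reproduce $p_{\rm exp}$ for every context in ${\cal S}_0$ and every parameter value.

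The first step is to observe that the Fisher informations depend only on the operational marginals. By Eq.~(\ref{eq:FI_context}), $F_s(\theta)$ is a functional of $p(x|s,\theta)$ and of its $\theta$-derivative at the given $\theta$. Two families with identical marginals in a neighbourhood of $\theta$ therefore have identical FI vectors. Compatibility is required for all parameter values, so the $\theta$-derivatives of the compatible model's marginals coincide with those of $p_{\rm exp}$. Hence the FI vector of the compatible model equals $\mathbf{F}_{\rm exp}(\theta)$.

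The second step applies the hypothesis on ${\cal G}$. Since Eq.~(\ref{eq:CFII_form_falsification}) holds for every compatible model at every $\theta$, it gives ${\cal G}(\mathbf{F}_{\rm exp}(\theta))\ge 0$ at the particular $\theta$ where Eq.~(\ref{eq:G_negative}) is assumed. This contradicts Eq.~(\ref{eq:G_negative}), so the family is incompatible.

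For the equivalent formulation, I would note two facts. First, $\Omega_{\cal M}(\theta)\subseteq\{\mathbf{F}:{\cal G}(\mathbf{F})\ge0\}$ by construction. Second, the membership $\mathbf{F}_{\rm exp}(\theta)\in\Omega_{\cal M}(\theta)$ would mean that some compatible family realizes this FI vector, and that family would then obey ${\cal G}\ge 0$. So $\mathbf{F}_{\rm exp}(\theta)\notin\Omega_{\cal M}(\theta)$.

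The only delicate point is the first step: the FI must be a well-defined function of the operational model. This requires that compatibility be imposed on the whole parameterized family rather than pointwise in $\theta$, together with the regularity needed for $F_s$ to exist. I would state both assumptions explicitly, since the rest of the argument is purely logical.
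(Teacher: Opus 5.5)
Your proof is correct and follows the paper's argument. The paper also assumes $\{p_{\rm exp}(x|s,\theta)\}_{s\in{\cal S}_0}$ is compatible with ${\cal M}$, applies the defining property of the CFII to that operational model at the given $\theta$ to obtain ${\cal G}(\mathbf{F}_{\rm exp}(\theta))\ge 0$, and contradicts ${\cal G}(\mathbf{F}_{\rm exp}(\theta))<0$. Your extra remarks make explicit what the paper's one-line proof leaves implicit. The first is that the FI vector is fixed by the whole parameterized operational family; this step is essentially automatic once the CFII is stated for operational models. The second is the containment $\Omega_{\cal M}(\theta)\subseteq\{\mathbf{F}:{\cal G}(\mathbf{F})\ge 0\}$, which gives the feasible-region formulation.
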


\begin{proof}---Assume, toward contradiction, that $\{p_{\rm exp}(x | s,\theta)\}_{s \in {\cal S}_0}$ is compatible with ${\cal M}$. Then, by the definition of CFII, Eq.~(\ref{eq:CFII_form_falsification}) must hold for this operational model at the same $\theta$. Hence, ${\cal G}(\mathbf{F}_{\rm exp}(\theta)) \ge 0$, contradicting Eq.~(\ref{eq:G_negative}). Therefore, the operational model cannot be compatible with ${\cal M}$, and the FI-vector lies outside the feasible region.
\end{proof}

{\bf Theorem~\ref{thm:falsification_general}} is logically simple, but its meaning is profound. The experimental task is to estimate FI (or effective FI) from observed statistics; the theoretical task is to derive CFIIs from explicit causal assumptions; the conclusion is model falsification.

\subsection{The causal-path witness: when information seems to appear from nowhere}\label{subsec:path_witness}

The abstract falsification principle becomes physically vivid in the archetypal case of a causal path $A \rightarrow C \rightarrow B$. Here, the classical claim is that the endpoint record $B$ can only depend on the past through the intermediate bottleneck $C$, and that parameter dependence is modular, as encoded in Eq.~(\ref{eq:causal_path_factorization}). {\bf Theorem~\ref{thm:CFII_path}} then states that the endpoint effective Fisher information for the additive causal parameter $\theta_{ab}=\theta_{ac}+\theta_{cb}$ obeys
\begin{eqnarray}
\left(F_{ab}^{(B)}(\boldsymbol{\theta})\right)^{-1} \ge \left(F_{ac}(\theta_{ac})\right)^{-1} + \left(F_{cb}(\boldsymbol{\theta})\right)^{-1}.
\label{eq:path_CFII_repeat}
\end{eqnarray}
This inequality is the mathematical form of a narrative: information must traverse $C$, and every traversal adds an irreducible information-resistance.

For experimental falsification, it is convenient to package Eq.~(\ref{eq:path_CFII_repeat}) into a single witness.
\begin{definition}[Causal-path violation statistic]
\label{def:V_path}
For a causal-path hypothesis with parameters $\boldsymbol{\theta}=(\theta_{ac},\theta_{cb})^T$, let us define
\begin{eqnarray}
V_{\rm path}(\boldsymbol{\theta}) := \left(F_{ab}^{(B)}(\boldsymbol{\theta})\right)^{-1} - \left(F_{ac}(\theta_{ac})\right)^{-1} - \left(F_{cb}(\boldsymbol{\theta})\right)^{-1}.
\label{eq:V_path_def}
\end{eqnarray}
\end{definition}
Then, we can construct the theorem:
\begin{theorem}[Causal-path falsification criterion]
\label{thm:path_falsification}
If an operational model is compatible with the classical causal-path assumptions of Sec.~\ref{subsec:cfii_path}, then
\begin{eqnarray}
V_{\rm path}(\boldsymbol{\theta})\ge 0
\label{eq:V_nonnegative}
\end{eqnarray}
for all admissible $\boldsymbol{\theta}$.
Therefore, if an experiment yields $V_{\rm path}(\boldsymbol{\theta}) < 0$ for some $\boldsymbol{\theta}$, then no classical causal-path model in that class can reproduce the observed statistics.
\end{theorem}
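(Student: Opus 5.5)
The statement has two parts. The first is a direct restatement of Theorem~\ref{thm:CFII_path} in witness form. The second is an instance of the general falsification principle, Theorem~\ref{thm:falsification_general}. I would prove them in that order.

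\textbf{First part.} Fix an operational model compatible with the causal-path assumptions of Sec.~\ref{subsec:cfii_path}. Then there exist kernels $p(c|a,\theta_{ac})$ and $p(b|c,\theta_{cb})$ realizing the factorization in Eq.~(\ref{eq:causal_path_factorization}), and the additive parameter is defined as in Eq.~(\ref{eq:additive_path_parameter}). Admissibility of $\boldsymbol{\theta}$ should mean three things:
\begin{itemize}
\item regularity holds, so that scores have zero mean and differentiation under the sum is allowed;
\item $F_{ac}$ and $F_{cb}$ are positive;
\item $\mathbf{F}_B$ is invertible on the sum direction, so that Eq.~(\ref{eq:F_ab_endpoint_effective}) is well defined.
\end{itemize}
Under these conditions Theorem~\ref{thm:CFII_path} applies verbatim and gives Eq.~(\ref{eq:CFII_path}). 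Moving the two right-hand terms to the left is exactly $V_{\rm path}(\boldsymbol{\theta})\ge 0$ by Definition~\ref{def:V_path}. Since $\boldsymbol{\theta}$ was an arbitrary admissible point, the inequality holds for all of them.

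\textbf{Second part.} I would cast the path hypothesis as a model class $\mathcal{M}_{\rm path}$ in the sense of Definition~\ref{def:compatibility}, with the DAG $A\to C\to B$, latent $C$, and the modular parameterization. I would take the context set $\mathcal{S}_0$ to consist of three experiments:
\begin{itemize}
\item the endpoint experiment;
\item the upstream-segment experiment;
\item the downstream-segment experiment.
\end{itemize}
The witness function is $\mathcal{G}(F_{ab}^{(B)},F_{ac},F_{cb})=(F_{ab}^{(B)})^{-1}-F_{ac}^{-1}-F_{cb}^{-1}$. The first part shows that $\mathcal{G}\ge 0$ holds on all of $\mathcal{M}_{\rm path}$, so it is a CFII in the sense of Definition~\ref{def:CFII}. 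Theorem~\ref{thm:falsification_general} then gives the contrapositive directly: an observed $V_{\rm path}<0$ places the FI vector outside $\Omega_{\mathcal{M}_{\rm path}}$, so no model in the class reproduces the statistics.

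\textbf{Main obstacle.} The logic is immediate, so the real work is bookkeeping about what ``admissible'' and ``the observed statistics'' mean. In particular, $F_{cb}$ in Eq.~(\ref{eq:F_cb_def}) is averaged over the upstream distribution, so the experimentally measured segment FIs must be identified with the model's module FIs.

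I would handle this by building the identification into the definition of compatibility. A model reproduces the statistics only if it reproduces the endpoint distribution together with the segment experiments whose FIs define $F_{ac}$ and $F_{cb}$.

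Singular cases, where some FI is zero or infinite, I would treat with the limiting convention: use $R=F^{-1}\in[0,\infty]$, and note that the inequality survives these limits. If $F_{ac}$ or $F_{cb}$ vanishes, the right-hand side is infinite. The effective FI for the joint record $(C,B)$ then vanishes in the sum direction, so by monotonicity the endpoint resistance is also infinite and $V_{\rm path}\ge0$ remains valid.
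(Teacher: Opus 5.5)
Your proposal is correct and matches the paper's own two-line proof: $V_{\rm path}\ge 0$ is an algebraic rearrangement of Theorem~\ref{thm:CFII_path}, and the falsification clause is the contrapositive supplied by Theorem~\ref{thm:falsification_general}. Your extra bookkeeping (admissibility, identifying the measured segment FIs with the module FIs, singular cases) goes beyond what the paper spells out. One small point there: when some FI is zero, state the conclusion in the resistance form $R_{ab}^{(B)}\ge R_{ac}+R_{cb}$ rather than as $V_{\rm path}\ge 0$, because writing $V_{\rm path}$ in that case would involve $\infty-\infty$.
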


\begin{proof}---The nonnegativity in Eq.~(\ref{eq:V_nonnegative}) is algebraically equivalent to Eq.~(\ref{eq:path_CFII_repeat}), which holds by {\bf Theorem~\ref{thm:CFII_path}} under the causal-path assumptions. The contrapositive implication is an instance of {\bf Theorem~\ref{thm:falsification_general}}.
\end{proof}

At this point the physical implication can be stated without qualification. A negative value of $V_{\rm path}$ means that, at the endpoint $B$, one can estimate the total causal parameter more precisely than would be permitted if the parameter were truly accumulated through two independent causal modules connected in series. In the information-resistance language of {\bf Remark~\ref{rem:resistance}}, the experiment behaves as if the series connection had developed an ``active element'' that cancels the resistance. Classically, such behavior is forbidden because every causal module is a passive channel: it may preserve information, but it cannot conjure additional information about $\theta$ that was not already present in its input. A violation therefore confronts us with a dramatic conclusion: either the intermediate bottleneck $C$ does not exist as a classical variable, or the process does not decompose into independent causal modules, or some hidden influence bypasses the assumed causal constraints. In every case, the classical causal story collapses.

\subsection{What exactly is falsified?}\label{subsec:what_fails}

A central virtue of the causal interpretation is the precision about what is, and is not, concluded from a violation. A CFII is derived under the explicit assumptions: a DAG factorization, its implied conditional-independence relations, and a modular parameter dependence consistent with the causal decomposition. Hence, a violation falsifies the conjunction of these assumptions. However, it does not claim that the world is ``quantum'' in the abstract; it claims something sharper:
\begin{quote}
\emph{the observed statistics cannot be generated by any member of the specified classical causal model class.}
\end{quote}
This distinction is not semantic. The different notions of ``classicality'' correspond to different model classes ${\cal M}$, and thus to different feasible regions $\Omega_{\cal M}(\theta)$. A violation identifies which classical causal explanations are ruled out, while leaving open the possibility that a larger class of models (with weaker conditional-independence requirements, additional latent memory nodes, or explicit context dependence) may still explain the data. This is exactly why CFIIs are powerful scientific instruments: they do not merely quantify precision, they map which causal explanations remain tenable.

In the causal-path setting, for instance, $V_{\rm path} < 0$ rules out the existence of any intermediate classical bottleneck $C$ that simultaneously (i) renders $A$ and $B$ conditionally independent given $C$ and (ii) supports a modular decomposition of the parameter into $(\theta_{ac}, \theta_{cb})$ with additive causal parameter $\theta_{ab}$. In dynamical quantum experiments, this naturally resonates with the failure of a trajectory-like description by a fixed series of intermediate states, where the true evolution may involve coherence across segments that cannot be represented by a single classical node $C$. In such a case, a coherent superposition, for example, a qubit state $\ket{\psi}$ evolving under a Hamiltonian $\hat{H}$, can exhibit statistics that invalidate the classical path decomposition precisely, because the causal bottleneck cannot capture phase-sensitive interference effects. The inequality detects this failure at the level of Fisher information: the putative resistance of a classical bottleneck is overcome by interference, and the only consistent conclusion is that the classical bottleneck model is impossible. We will explicitly show this example later.

\subsection{Statistical certification from finite data}\label{subsec:finite_data}

{\bf Theorems~\ref{thm:falsification_general}} and {\bf \ref{thm:path_falsification}} are the statements about true Fisher informations, i.e., about the underlying distributions. In experiment, Fisher information must be estimated from a finite set of samples. Here, we present a statistically principled route to certify the CFII violations with controlled confidence.

We assume that, for each context $s \in {\cal S}_0$ and a fixed parameter value $\theta$ (e.g., a calibrated or controlled setting), we can evaluate the score
\begin{eqnarray}
\ell_s(x | \theta) := \frac{\partial}{\partial\theta} \ln{p(x | s,\theta)}.
\end{eqnarray}
This is standard in the metrological models where the encoding $p(x | s,\theta)$ is known up to the value of $\theta$, and it can also be achieved operationally by local calibration. Given i.i.d. outcomes $\{x_{s,i}\}_{i=1}^{N_s}$ sampled from $p(x | s,\theta)$, let us consider the estimator
\begin{eqnarray}
\tilde{F}_s(\theta) := \frac{1}{N_s}\sum_{i=1}^{N_s}\ell_s(x_{s,i} | \theta)^2.
\label{eq:Fhat_score_sq}
\end{eqnarray}
By construction, $\mathbb{E}[\tilde{F}_s(\theta)]=F_s(\theta)$ whenever the score has finite second moment.
\begin{theorem}[Consistency and asymptotic normality of $\tilde{F}_s(\theta)$]
\label{thm:FI_estimation_CLT}
Assume $\mathbb{E}[\ell_s(X|\theta)^4]<\infty$ for each $s\in{\cal S}_0$, where $X\sim p(x|s,\theta)$. Then:
\begin{eqnarray}
\tilde{F}_s(\theta) \to F_s(\theta)
\label{eq:FI_consistency}
\end{eqnarray}
in probability as $N_s\to\infty$, and moreover
\begin{eqnarray}
\sqrt{N_s}\left(\tilde{F}_s(\theta) - F_s(\theta)\right) \Rightarrow {\cal N}\left(0,\sigma_s^2(\theta)\right),
\label{eq:FI_CLT}
\end{eqnarray}
where
\begin{eqnarray}
\sigma_s^2(\theta) := {\rm Var}\left(\ell_s(X | \theta)^2\right).
\label{eq:sigma_def}
\end{eqnarray}
\end{theorem}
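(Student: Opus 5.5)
The plan is to treat $\tilde F_s(\theta)$ as a sample mean of i.i.d.\ random variables and then apply the weak law of large numbers and the classical central limit theorem.

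Fix a context $s\in{\cal S}_0$ and set $Y_i:=\ell_s(x_{s,i}|\theta)^2$ for $i=1,\ldots,N_s$. Because the outcomes $x_{s,i}$ are i.i.d.\ draws from $p(x|s,\theta)$ and the score is a fixed measurable function of the outcome, the $Y_i$ are i.i.d. Their common mean is $\mathbb{E}[Y_i]=\mathbb{E}[\ell_s(X|\theta)^2]=F_s(\theta)$, by the definition of Fisher information in Eq.~(\ref{eq:FI_context}). By construction, $\tilde F_s(\theta)=N_s^{-1}\sum_i Y_i$.

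The first step is to verify the moment conditions. The hypothesis $\mathbb{E}[\ell_s^4]<\infty$ says exactly that $\mathbb{E}[Y_i^2]<\infty$. By Jensen's (or Cauchy--Schwarz) inequality, $\mathbb{E}[Y_i]\le(\mathbb{E}[Y_i^2])^{1/2}<\infty$, so $F_s(\theta)$ is finite. It follows that $\sigma_s^2(\theta)={\rm Var}(Y_i)=\mathbb{E}[\ell_s^4]-F_s(\theta)^2$ is finite and well defined.

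The second step gives consistency. Either Chebyshev's inequality or the weak law of large numbers will do. Chebyshev gives
\begin{eqnarray*}
\Pr\bigl(|\tilde F_s-F_s|>\epsilon\bigr)\le \frac{\sigma_s^2}{N_s\epsilon^2}\to 0,
\end{eqnarray*}
which is Eq.~(\ref{eq:FI_consistency}).

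The third step gives asymptotic normality. The Lindeberg--L\'evy central limit theorem applies to the centered variables $Y_i-F_s$, which are i.i.d.\ with finite variance $\sigma_s^2$. It yields $\sqrt{N_s}(\tilde F_s-F_s)\Rightarrow{\cal N}(0,\sigma_s^2)$, which is Eq.~(\ref{eq:FI_CLT}). In the degenerate case $\sigma_s^2=0$, the variable $Y_i$ is almost surely constant, so the limit is a point mass at zero, read as ${\cal N}(0,0)$. The binary fringe at $\phi=\pi/2$ is an example of this case.

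There is no genuine obstacle here. The only points to state carefully are that the score is evaluated at the true, known $\theta$, so that the $Y_i$ really are i.i.d.\ with the stated mean, and that the fourth-moment hypothesis is precisely the finite-variance condition the central limit theorem needs. I would add a remark that a consistent variance estimator, such as the sample variance in Eq.~(\ref{eq:FIvariance_methods}), can replace $\sigma_s^2$ by Slutsky's lemma. This is what sets up the delta-method step in the next theorem.
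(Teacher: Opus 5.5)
Your proof is correct and matches the paper's argument: write $\tilde F_s(\theta)$ as the sample mean of the i.i.d.\ squared scores $\ell_s(x_{s,i}|\theta)^2$, note that the fourth-moment hypothesis makes their variance finite, and apply the weak law of large numbers (your Chebyshev bound is a quantitative form of it) and the classical central limit theorem. One side remark is wrong: the $\phi=\pi/2$ fringe is not a degenerate example in general, because there ${\rm Var}(\ell_s^2)=4\cot^2(\theta-\vartheta)$, which vanishes only at mid-fringe; the fact that $F(\theta)$ is constant in $\theta$ does not make the squared score constant across outcomes.
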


\begin{proof}---Define the i.i.d. random variables $Y_{s,i} := \ell_s(x_{s,i} | \theta)^2$. Then, $\tilde{F}_s(\theta)$ is the sample mean of $Y_{s,i}$:
\begin{eqnarray}
\tilde{F}_s(\theta) = \frac{1}{N_s}\sum_{i=1}^{N_s}Y_{s,i}.
\label{eq:Fhat_as_mean}
\end{eqnarray}
Since $\mathbb{E}[Y_{s,i}]=\mathbb{E}[\ell_s(X | \theta)^2]=F_s(\theta)$ and $\mathbb{E}[Y_{s,i}^2]=\mathbb{E}[\ell_s(X | \theta)^4] < \infty$, the weak law of large numbers implies Eq.~(\ref{eq:FI_consistency}). The central limit theorem for i.i.d. variables with finite variance yields
\begin{eqnarray}
\sqrt{N_s}\left(\tilde{F}_s(\theta) - F_s(\theta)\right) = \frac{1}{\sqrt{N_s}}\sum_{i=1}^{N_s}\left(Y_{s,i} - \mathbb{E}[Y_{s,i}]\right) \Rightarrow {\cal N}\left(0,{\rm Var}(Y_{s,1})\right),
\label{eq:CLT_applied}
\end{eqnarray}
which is Eq.~(\ref{eq:FI_CLT}) with Eq.~(\ref{eq:sigma_def}).
\end{proof}

{\bf Theorem~\ref{thm:FI_estimation_CLT}} provides an operational route to confidence intervals for each $F_s(\theta)$, and, crucially, for any smooth function ${\cal G}$ of the FI-vector. We now translate it to a violation test. Let $\tilde{\mathbf{F}}(\theta) := (\tilde{F}_s(\theta))_{s \in {\cal S}_0}$, and define the plug-in estimator
\begin{eqnarray}
\tilde{{\cal G}}(\theta) := {\cal G}\left(\tilde{\mathbf{F}}(\theta)\right).
\label{eq:Ghat_def}
\end{eqnarray}
Assume that outcomes are sampled independently across contexts, so that the estimators $\{\tilde{F}_s(\theta)\}$ are asymptotically independent.
\begin{theorem}[Asymptotic certification of CFII violation]
\label{thm:delta_method_general}
Assume the conditions of {\bf Theorem~\ref{thm:FI_estimation_CLT}} for all $s \in {\cal S}_0$, and assume ${\cal G}$ is continuously differentiable on an open neighborhood of $\mathbf{F}(\theta)$. Let $N_s\to\infty$ with fixed proportions $q_s := N_s/N$ where $N=\sum_{s \in {\cal S}_0} N_s$. Then,
\begin{eqnarray}
\sqrt{N}\left(\tilde{{\cal G}}(\theta) - {\cal G}(\mathbf{F}(\theta))\right) \Rightarrow {\cal N}\left(0,\tau^2(\theta)\right),
\label{eq:Ghat_CLT}
\end{eqnarray}
where
\begin{eqnarray}
\tau^2(\theta) = \sum_{s \in {\cal S}_0}\frac{1}{q_s}\left(\frac{\partial{\cal G}}{\partial F_s}\Big|_{\mathbf{F}(\theta)}\right)^2\sigma_s^2(\theta).
\label{eq:tau_def}
\end{eqnarray}
\end{theorem}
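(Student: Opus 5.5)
The plan is the standard multivariate delta method, built on {\bf Theorem~\ref{thm:FI_estimation_CLT}}. First I obtain a joint limit for the FI vector. For each $s\in{\cal S}_0$, {\bf Theorem~\ref{thm:FI_estimation_CLT}} gives $\sqrt{N_s}(\tilde F_s-F_s)\Rightarrow{\cal N}(0,\sigma_s^2)$. Since $N_s=q_sN$, I can write $\sqrt{N}(\tilde F_s-F_s)=q_s^{-1/2}\sqrt{N_s}(\tilde F_s-F_s)$. This converges to ${\cal N}(0,\sigma_s^2/q_s)$ by Slutsky's lemma; if $q_sN$ is not an integer, use $N_s/N\to q_s$ instead.

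Next I use independence across contexts. The samples for different $s$ are drawn independently, so the components of $\sqrt{N}(\tilde{\mathbf F}-\mathbf F)$ are exactly independent for each $N$. Their joint characteristic function therefore factorizes into the marginal ones. Taking limits gives
\begin{eqnarray}
\sqrt{N}\bigl(\tilde{\mathbf F}(\theta)-\mathbf F(\theta)\bigr)\Rightarrow{\cal N}\bigl(\mathbf 0,\Sigma\bigr),\qquad \Sigma={\rm diag}\bigl(\sigma_s^2(\theta)/q_s\bigr)_{s\in{\cal S}_0}.
\end{eqnarray}
This step upgrades "asymptotically independent" to a genuine joint weak limit. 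It is the only place where the cross-context sampling assumption enters.

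The delta-method step proceeds as follows. Because ${\cal G}$ is $C^1$ on an open neighborhood $U$ of $\mathbf F(\theta)$, the mean value theorem gives
\begin{eqnarray}
{\cal G}(\tilde{\mathbf F})-{\cal G}(\mathbf F)=\nabla{\cal G}(\bar{\mathbf F})^T(\tilde{\mathbf F}-\mathbf F)
\end{eqnarray}
on the event $\tilde{\mathbf F}\in U$. Here $\bar{\mathbf F}$ lies on the segment between $\tilde{\mathbf F}$ and $\mathbf F$. Consistency from {\bf Theorem~\ref{thm:FI_estimation_CLT}} implies that the event $\tilde{\mathbf F}\in U$ has probability tending to one. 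It also gives $\bar{\mathbf F}\to\mathbf F$ in probability, and continuity of $\nabla{\cal G}$ then yields $\nabla{\cal G}(\bar{\mathbf F})\to\nabla{\cal G}(\mathbf F)$ in probability.

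On the complementary event $\tilde{\mathbf F}\notin U$, the value of $\sqrt N(\tilde{\cal G}-{\cal G})$ is irrelevant, since that event's probability vanishes. Multiplying by $\sqrt N$ and applying Slutsky's lemma together with the continuous mapping theorem gives
\begin{eqnarray}
\sqrt N(\tilde{\cal G}-{\cal G})\Rightarrow\nabla{\cal G}(\mathbf F)^T Z,\qquad Z\sim{\cal N}(\mathbf 0,\Sigma).
\end{eqnarray}
This limit is centered Gaussian with variance $\nabla{\cal G}^T\Sigma\nabla{\cal G}=\sum_s q_s^{-1}(\partial_s{\cal G})^2\sigma_s^2$, which is exactly Eq.~(\ref{eq:tau_def}).

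The main obstacle is the localization in the mean-value step. The witnesses of interest contain terms like $F_s^{-1}$ and are only defined where $F_s>0$, so ${\cal G}$ need not be defined globally. This is handled by the high-probability event argument above. I would also remark that if $\tau^2=0$ the limit is degenerate, so the certification statement (dividing by the standard error) requires $\tau^2>0$. Finally, $\tau^2$ is consistently estimated by plugging in $\tilde{\mathbf F}$ and the sample variances of $\ell_s^2$. Slutsky's lemma then justifies the studentized $Z$-statistic.
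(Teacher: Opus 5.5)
Your proposal is correct and follows the paper's route. The paper rescales each single-context CLT (Theorem~\ref{thm:FI_estimation_CLT}) by $q_s^{-1/2}$, uses independence across contexts to get a diagonal-covariance joint Gaussian limit, and then applies a first-order (delta-method) expansion with Slutsky. Your characteristic-function argument for joint convergence and your mean-value localization only make rigorous what the paper states through its $o_p$ Taylor remainder; just take the neighborhood $U$ to be a ball, so that the segment between $\tilde{\mathbf F}$ and $\mathbf F$ stays inside it.
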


\begin{proof}---From {\bf Theorem~\ref{thm:FI_estimation_CLT}}, the vector
\begin{eqnarray}
\sqrt{N}\left(\tilde{\mathbf{F}}(\theta) - \mathbf{F}(\theta)\right) = \left(\sqrt{N}\left(\tilde{F}_s(\theta) - F_s(\theta)\right)\right)_{s \in {\cal S}_0}
\label{eq:vector_scaling}
\end{eqnarray}
has asymptotically normal components with variances scaled by $q_s$:
\begin{eqnarray}
\sqrt{N}\left(\tilde{F}_s(\theta) - F_s(\theta)\right) = \sqrt{\frac{N}{N_s}}\sqrt{N_s}\left(\tilde{F}_s(\theta) - F_s(\theta)\right) \Rightarrow {\cal N}\left(0,\frac{\sigma_s^2(\theta)}{q_s}\right).
\label{eq:component_scaling}
\end{eqnarray}
The independence across contexts yields an asymptotically normal vector with diagonal covariance matrix whose $s$th diagonal entry is $\sigma_s^2(\theta)/q_s$.

Since ${\cal G}$ is continuously differentiable, a first-order Taylor expansion gives
\begin{eqnarray}
\tilde{{\cal G}}(\theta) - {\cal G}(\mathbf{F}(\theta)) = \nabla{\cal G}(\mathbf{F}(\theta))^T\left(\tilde{\mathbf{F}}(\theta) - \mathbf{F}(\theta)\right) + r_N,
\label{eq:taylor_G}
\end{eqnarray}
where the remainder $r_N$ satisfies $r_N=o_p(\|\tilde{\mathbf{F}}-\mathbf{F}\|)$. Multiplying by $\sqrt{N}$ and using the multivariate central limit theorem together with Slutsky's theorem, we obtain Eq.~(\ref{eq:Ghat_CLT}). The variance is the quadratic form of the gradient with the asymptotic covariance matrix, which reduces to Eq.~(\ref{eq:tau_def}), because the covariance is diagonal.
\end{proof}

{\bf Theorem~\ref{thm:delta_method_general}} makes the falsification program experimentally concrete. Under the null hypothesis that the data are compatible with ${\cal M}$, one must have ${\cal G}(\mathbf{F}(\theta)) \ge 0$. If an experiment yields $\tilde{{\cal G}}(\theta)$ significantly below zero compared to its standard error $\tau(\theta)/\sqrt{N}$, then the classical causal model class is rejected with controlled significance. In other words, one can attach to a CFII violation the same statistical seriousness as to any hypothesis test in physics.

For the causal-path witness, ${\cal G}$ can be chosen as ${\cal G} = V_{\rm path}$ in Eq.~(\ref{eq:V_path_def}), and the derivative factors entering Eq.~(\ref{eq:tau_def}) take a simple form. Writing $F_1 = F_{ab}^{(B)}$, $F_2 = F_{ac}$, and $F_3 = F_{cb}$, one has
\begin{eqnarray}
\frac{\partial V_{\rm path}}{\partial F_1}=-\frac{1}{F_1^2},\quad
\frac{\partial V_{\rm path}}{\partial F_2}=\frac{1}{F_2^2},\quad
\frac{\partial V_{\rm path}}{\partial F_3}=\frac{1}{F_3^2}.
\label{eq:grad_V}
\end{eqnarray}
Substituting Eq.~(\ref{eq:grad_V}) into Eq.~(\ref{eq:tau_def}) yields an explicit asymptotic variance for $\hat{V}_{\rm path}$. Thus, the inequality, originally a statement about causal structure, becomes an experimentally testable boundary with quantitative confidence.

\subsection{Why this matters}\label{subsec:why_matters}

What makes CFII-based falsification distinctive is that the rejected object is not a phenomenological fit but a ``causal explanation'' of how information about $\theta$ may propagate through an experiment. The content of a violation is therefore structural: it tells us that no classical model with the assumed conditional independences can account for the data, even if that model is allowed arbitrary internal parameters and arbitrary hidden variables. When a CFII is violated, the classical causal narrative does not merely require refinement; it ceases to exist. This is precisely the point at which the story becomes metrological. Once a classical causal model collapses, the ``extra'' information that defeats the classical resistance law cannot be treated as an accident. It is the signature of a nonclassical resource that bypasses the classical bottleneck.

We will show, in later sections, how such violations can be interpreted not only as witnesses but as resources, and how they encompass as special instances contextuality-based protocols in which the failure of a noncontextual causal condition is converted into enhanced precision. The dramatic moral is that, in metrology, impossibility results are not dead ends. They are signposts that point to where Nature stores her most useful nonclassical advantages.

\suppnote{From witness to resource: ``causal-model collapse'' as a metrological resource}\label{sec:resource}

Sections~\ref{sec:cfii} and \ref{sec:falsification} established the conceptual backbone of this work: a CFII is a necessary condition for the existence of a classical causal explanation, and its violation is a rigorous falsification of that explanation. At this stage, however, we ask a practical question: if a CFII violation certifies that a classical causal story fails, what does that buy us in metrology? Is it merely a diagnostic statement about ``nonclassicality,'' or does it translate into a concrete improvement of estimation performance?

Here, we answer with an engineering-level message. A CFII violation is not only a witness; it is a resource certificate. It certifies that the experiment possesses an information flow that cannot be routed through the assumed classical bottlenecks, and this surplus information is directly convertible into smaller estimation error. The conversion is quantitative, and it is operational: it does not require philosophical interpretation, only a well-defined estimator. The same inequality that collapses the classical causal story simultaneously exposes a metrological advantage that the classical story would have declared impossible.

\subsection{Classical causal precision bounds from CFIIs}\label{subsec:classical_precision_bounds}

A CFII is an inequality among Fisher informations. Since Fisher information is the curvature controlling Cram\'er--Rao-type bounds, each CFII immediately induces a precision bound that holds whenever the classical causal model class is valid. This is the sense in which CFIIs serve as causal benchmarks for metrology. We formulate the statement in a general form first, and then specialize it to the causal-path inequality. 
\begin{theorem}[Classical causal precision bound induced by a CFII]
\label{thm:classical_precision_bound}
Fix a model class ${\cal M}$ and a finite set of contexts ${\cal S}_0$. Assume that a CFII of the form
\begin{eqnarray}
{\cal G}\left(\left\{F_s(\theta)\right\}_{s\in{\cal S}_0}\right) \ge 0
\label{eq:CFII_resource_general}
\end{eqnarray}
holds for all operational models compatible with ${\cal M}$. Let an estimator $\hat{\theta}$ be constructed from $N$ i.i.d. repetitions of an experiment that uses only contexts in ${\cal S}_0$. Then, for any operational model compatible with ${\cal M}$ and for any unbiased $\hat{\theta}$, the mean-square error satisfies a model-dependent lower bound
\begin{eqnarray}
{\rm Var}(\hat{\theta}) \ge \frac{1}{N}\frac{1}{F_{\rm cl}(\theta)},
\label{eq:resource_bound_general}
\end{eqnarray}
where $F_{\rm cl}(\theta)$ is any quantity that upper bounds the achievable per-sample Fisher information under the constraint in Eq.~(\ref{eq:CFII_resource_general}). In particular, if the CFII can be algebraically rearranged into
\begin{eqnarray}
F_{\rm eff}(\theta) \le F_{\rm cl}(\theta),
\label{eq:Feff_le_Fcl}
\end{eqnarray}
for a suitable effective Fisher information $F_{\rm eff}(\theta)$ relevant to the estimation strategy, then Eq.~(\ref{eq:resource_bound_general}) follows with $F_{\rm cl}$.
\end{theorem}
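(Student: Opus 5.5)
The plan is to combine two earlier results: the multi-sample additivity of Fisher information (Theorem~\ref{thm:FI_additivity}) and the Cram\'er--Rao bound, in its scalar form Eq.~(\ref{eq:CRB_total}) or, when $\theta$ is a linear causal parameter of a multi-parameter model, Theorem~\ref{thm:CRB_linear}. The CFII itself is used only at the end, to replace the actual per-sample information by the classical ceiling $F_{\rm cl}$.

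\textbf{Step 1 (total information is $N$ times per-sample information).} Fix an operational model compatible with ${\cal M}$ and an estimation strategy using contexts in ${\cal S}_0$ with $N$ i.i.d. repetitions. Let $F_{\rm eff}(\theta)$ denote the per-sample (effective) Fisher information of one repetition. If a repetition allocates samples across contexts with weights $q_s$, then $F_{\rm eff}=\sum_s q_s F_s$ by Eq.~(\ref{eq:FI_per_sample}). In the multi-parameter case I take $F_{\rm eff}=F^{(\mathbf u)}$ from Eq.~(\ref{eq:effective_FI_def}). By Theorem~\ref{thm:FI_additivity}, or by additivity of Fisher information matrices for independent samples, the full data set has information $N F_{\rm eff}(\theta)$, or matrix $N\mathbf F$.

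\textbf{Step 2 (Cram\'er--Rao).} Applying Eq.~(\ref{eq:CRB_total}), or Theorem~\ref{thm:CRB_linear} with the matrix $N\mathbf F$ (whose inverse is $\mathbf F^{-1}/N$), gives
\begin{equation*}
{\rm Var}(\hat\theta)\ge \frac{1}{N F_{\rm eff}(\theta)}
\end{equation*}
for any unbiased $\hat\theta$.

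\textbf{Step 3 (use the CFII).} By hypothesis, compatibility with ${\cal M}$ forces ${\cal G}\ge0$ at the point $\mathbf F(\theta)$, which must lie in $\Omega_{\cal M}(\theta)$. By the definition of $F_{\rm cl}$, this gives $F_{\rm eff}\le F_{\rm cl}$. In the ``in particular'' case this is exactly the rearranged inequality Eq.~(\ref{eq:Feff_le_Fcl}). Since $x\mapsto1/x$ is decreasing on $(0,\infty)$, we get $1/(NF_{\rm eff})\ge1/(NF_{\rm cl})$, which yields Eq.~(\ref{eq:resource_bound_general}).

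\textbf{Edge cases.} If $F_{\rm eff}=0$, the Cram\'er--Rao bound is infinite and the claim holds trivially. If $F_{\rm cl}=\infty$, the bound is vacuous.

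\textbf{Main obstacle.} The substantive point is Step 3, namely making precise that $F_{\rm cl}$ upper-bounds the information actually used by the strategy. The statement leaves this partly definitional. The care needed is to make sure the effective information entering the Cram\'er--Rao bound is the same quantity constrained by the CFII; for the path case, this means $F_{ab}^{(B)}$ with nuisance parameters handled through $\mathbf u^T\mathbf F^{-1}\mathbf u$. I would state explicitly that the bound is per-sample and that unbiasedness is used in the form of Theorem~\ref{thm:CRB_linear}.
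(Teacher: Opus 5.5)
Your proposal is correct and follows the paper's proof essentially step for step: additivity of Fisher information over the $N$ i.i.d. repetitions (Theorem~\ref{thm:FI_additivity}), then the Cram\'er--Rao bound ${\rm Var}(\hat\theta)\ge 1/(N F_{\rm eff})$, then substitution of the CFII-implied ceiling $F_{\rm eff}\le F_{\rm cl}$. Your additions are refinements the paper leaves implicit: the multi-parameter version via Theorem~\ref{thm:CRB_linear}, the degenerate cases $F_{\rm eff}=0$ and $F_{\rm cl}=\infty$, and the remark that the ceiling must bound the same effective information that enters the Cram\'er--Rao bound.
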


\begin{proof}
For $N$ i.i.d. repetitions, the Fisher information is additive, hence the total Fisher information equals $N$ times the per-sample Fisher information ({\bf Theorem~\ref{thm:FI_additivity}} in Sec.~\ref{sec:framework}). For any unbiased estimator, the Cram\'er--Rao bound gives
\begin{eqnarray}
{\rm Var}(\hat{\theta}) \ge \frac{1}{N}\frac{1}{F_{\rm eff}(\theta)},
\label{eq:CRB_per_sample_resource}
\end{eqnarray}
where $F_{\rm eff}$ denotes the per-sample Fisher information appropriate to the statistical model used in the protocol. If the operational model is compatible with ${\cal M}$, then its Fisher informations must satisfy the CFII as in Eq.~(\ref{eq:CFII_resource_general}). If Eq.~(\ref{eq:CFII_resource_general}) implies an upper bound in Eq.~(\ref{eq:Feff_le_Fcl}) on $F_{\rm eff}$, then by substituting Eq.~(\ref{eq:Feff_le_Fcl}) into Eq.~(\ref{eq:CRB_per_sample_resource}), we obtain Eq.~(\ref{eq:resource_bound_general}), completing the proof.
\end{proof}

{\bf Theorem~\ref{thm:classical_precision_bound}} is conceptually simple but practically sharp: it tells us that a causal model class defines a \emph{precision frontier}. Any protocol consistent with the model must live above that frontier. The moment an experiment falls below it, the classical model is not merely disfavored; it is logically incompatible with the achieved precision.

The causal-path case makes this concrete. Under the classical causal-path hypothesis of Sec.~\ref{subsec:cfii_path}, {\bf Theorem~\ref{thm:CFII_path}} states that
\begin{eqnarray}
\left(F_{ab}^{(B)}(\boldsymbol{\theta})\right)^{-1} \ge \left(F_{ac}(\theta_{ac})\right)^{-1} + \left(F_{cb}(\boldsymbol{\theta})\right)^{-1}.
\label{eq:path_CFII_resource_repeat}
\end{eqnarray}
Here, we define the classical benchmark Fisher information by the harmonic composition
\begin{eqnarray}
F_{\rm cl}^{\rm(path)}(\boldsymbol{\theta}) := \left( \left(F_{ac}(\theta_{ac})\right)^{-1} + \left(F_{cb}(\boldsymbol{\theta})\right)^{-1} \right)^{-1}.
\label{eq:Fcl_path_def}
\end{eqnarray}
Then, Eq.~(\ref{eq:path_CFII_resource_repeat}) is precisely the upper bound
\begin{eqnarray}
F_{ab}^{(B)}(\boldsymbol{\theta}) \le F_{\rm cl}^{\rm(path)}(\boldsymbol{\theta})
\label{eq:Feff_path_le}
\end{eqnarray}
that {\bf Theorem~\ref{thm:classical_precision_bound}} requires. Thus, if a process genuinely decomposes into two independent causal modules in series, the achievable per-sample Fisher information about the total parameter can never exceed the harmonic mean in Eq.~(\ref{eq:Fcl_path_def}). This is the classical ``series penalty'' expressed in metrological language.

\subsection{Violation-certified gain: witness-to-resource transition}\label{subsec:violation_certified_gain}

We now convert the previous statement into a resource claim. Recall the causal-path violation statistic
\begin{eqnarray}
V_{\rm path}(\boldsymbol{\theta}) = \left(F_{ab}^{(B)}(\boldsymbol{\theta})\right)^{-1} - \left(F_{ac}(\theta_{ac})\right)^{-1} - \left(F_{cb}(\boldsymbol{\theta})\right)^{-1}.
\label{eq:V_path_resource_repeat}
\end{eqnarray}
The inequality in Eq.~(\ref{eq:path_CFII_resource_repeat}) is equivalent to $V_{\rm path}(\boldsymbol{\theta}) \ge 0$. Here, a negative value $V_{\rm path} < 0$ can be interpreted as the impossibility of the assumed classical causal-path model. Now we show that the same quantity also certifies a quantitative metrological gain.

To state the gain in the most operational form, we consider $N$ i.i.d. repetitions of the endpoint experiment producing outcomes $B_1, \ldots, B_N$ distributed as $p(b | \boldsymbol{\theta})$, and let $\hat{\theta}_{ab}$ be an estimator of $\theta_{ab}=\theta_{ac}+\theta_{cb}$ constructed from $\{B_i\}_{i=1}^N$. Here we provide the following theorem:
\begin{theorem}[Violation-certified precision enhancement]
\label{thm:gain_from_violation}
Assume that $F_{ab}^{(B)}(\boldsymbol{\theta})$ is the per-sample effective Fisher information for estimating $\theta_{ab}$ from endpoint data $B$ in the model under consideration. Define the classical benchmark $F_{\rm cl}^{\rm(path)}(\boldsymbol{\theta})$ by Eq.~(\ref{eq:Fcl_path_def}). If $V_{\rm path}(\boldsymbol{\theta}) < 0$, then
\begin{eqnarray}
F_{ab}^{(B)}(\boldsymbol{\theta}) > F_{\rm cl}^{\rm(path)}(\boldsymbol{\theta}),
\label{eq:gain_strict_FI}
\end{eqnarray}
and consequently the Cram\'er--Rao bound for any unbiased estimator satisfies
\begin{eqnarray}
\frac{1}{N}\frac{1}{F_{ab}^{(B)}(\boldsymbol{\theta})} < \frac{1}{N}\frac{1}{F_{\rm cl}^{\rm(path)}(\boldsymbol{\theta})}.
\label{eq:gain_strict_CRB}
\end{eqnarray}
Moreover, defining the resistance quantities
\begin{eqnarray}
R_{\rm ab}(\boldsymbol{\theta}) = \left(F_{ab}^{(B)}(\boldsymbol{\theta})\right)^{-1},
\quad
R_{\rm cl}(\boldsymbol{\theta}) = \left(F_{\rm cl}^{\rm(path)}(\boldsymbol{\theta})\right)^{-1},
\label{eq:R_defs}
\end{eqnarray}
the violation directly yields the certified improvement factor
\begin{eqnarray}
\frac{R_{\rm cl}(\boldsymbol{\theta})}{R_{\rm ab}(\boldsymbol{\theta})} = \frac{R_{\rm cl}(\boldsymbol{\theta})}{R_{\rm cl}(\boldsymbol{\theta}) + V_{\rm path}(\boldsymbol{\theta})} >1.
\label{eq:gain_factor_resistance}
\end{eqnarray}
\end{theorem}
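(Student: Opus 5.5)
The proof is essentially algebraic. It proceeds by rewriting the witness in Eq.~(\ref{eq:V_path_resource_repeat}) in terms of the resistances of Eq.~(\ref{eq:R_defs}). By the definition of $F_{\rm cl}^{\rm(path)}$ in Eq.~(\ref{eq:Fcl_path_def}), one has $R_{\rm cl}=F_{ac}^{-1}+F_{cb}^{-1}$. Hence
\begin{eqnarray}
V_{\rm path}(\boldsymbol{\theta})=R_{\rm ab}(\boldsymbol{\theta})-R_{\rm cl}(\boldsymbol{\theta}),
\nonumber
\end{eqnarray}
and the identity $R_{\rm ab}=R_{\rm cl}+V_{\rm path}$ holds identically. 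This already gives the equality in Eq.~(\ref{eq:gain_factor_resistance}). The whole theorem then reduces to sign bookkeeping on positive quantities.

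The first step is to fix positivity. Throughout, I assume $F_{ac},F_{cb}>0$, so that $R_{\rm cl}\in(0,\infty)$. I also assume $F_{ab}^{(B)}>0$, so that $R_{\rm ab}\in(0,\infty)$. If $F_{ab}^{(B)}=0$, then $R_{\rm ab}=\infty$ under the limiting convention stated in the main text, and $V_{\rm path}<0$ cannot occur; this is the one place where care with singular cases is needed. With both resistances positive and finite, $V_{\rm path}<0$ means $0<R_{\rm ab}<R_{\rm cl}$. Since $x\mapsto 1/x$ is strictly decreasing on $(0,\infty)$, this gives $F_{ab}^{(B)}>F_{\rm cl}^{\rm(path)}$, which is Eq.~(\ref{eq:gain_strict_FI}).

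Next comes the Cram\'er--Rao part. By Theorem~\ref{thm:FI_additivity}, the $N$ i.i.d. endpoint repetitions carry Fisher matrix $N\mathbf{F}_B$, and hence effective information $N F_{ab}^{(B)}$ for the direction $\mathbf{u}=(1,1)^T$. Theorem~\ref{thm:CRB_linear} then bounds the variance of any unbiased estimator of $\theta_{ab}$ by $1/(NF_{ab}^{(B)})$. Dividing the strict inequality between reciprocals by $N>0$ yields Eq.~(\ref{eq:gain_strict_CRB}).

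Finally, the improvement factor follows from $R_{\rm cl}/R_{\rm ab}$ with $R_{\rm ab}=R_{\rm cl}+V_{\rm path}$. Because $0<R_{\rm cl}+V_{\rm path}<R_{\rm cl}$, the ratio is strictly larger than one.

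The only real obstacle is conceptual rather than technical: the domain issues when some FI vanishes. I would handle these by stating the positivity assumptions explicitly and noting that $V_{\rm path}<0$ itself forces $R_{\rm ab}<\infty$.
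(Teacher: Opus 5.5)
Your proof is correct and follows the paper's own argument. You rewrite $V_{\rm path}=R_{\rm ab}-R_{\rm cl}$, use positivity of both resistances to invert the strict inequality, substitute into the Cram\'er--Rao bound, and read off the improvement factor from the identity $R_{\rm ab}=R_{\rm cl}+V_{\rm path}$; your explicit treatment of the singular cases and of the $N$-sample effective information via Theorems~\ref{thm:FI_additivity} and \ref{thm:CRB_linear} only spells out what the paper leaves implicit.
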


\begin{proof}---If $V_{\rm path}(\boldsymbol{\theta}) < 0$, then by Eq.~(\ref{eq:V_path_resource_repeat}), we have
\begin{eqnarray}
\left(F_{ab}^{(B)}(\boldsymbol{\theta})\right)^{-1} < \left(F_{\rm cl}^{\rm(path)}(\boldsymbol{\theta})\right)^{-1}.
\label{eq:inverse_gain}
\end{eqnarray}
Taking reciprocals preserves the strict inequality because both sides are positive, proving Eq.~(\ref{eq:gain_strict_FI}). By substituting these informations into the Cram\'er--Rao bound, we obtain Eq.~(\ref{eq:gain_strict_CRB}). Finally, Eq.~(\ref{eq:gain_factor_resistance}) follows from the definition $V_{\rm path}=R_{\rm ab}-R_{\rm cl}$.
\end{proof}

{\bf Theorem~\ref{thm:gain_from_violation}} is the operational heart of the ``witness-to-resource'' transition. It says that the same numerical quantity that falsifies a classical causal explanation also provides a quantitative \emph{performance guarantee}: the endpoint experiment achieves a precision that no member of the classical causal-path model class can match, even in principle.

\begin{remark}[Negative $V_{\rm path}$ is a metrology enhancement resource]
The result is dramatic when expressed in the resistance language. In a classical causal narrative, resistances add. The total estimation error behaves as if the parameter had to pass through two passive resistors in series. A negative $V_{\rm path}$ means that the experiment contains an element that cancels part of this series resistance. In classical circuitry that would require an active device. Here the ``active device'' is not an additional energy source; it is the collapse of a classical causal description. Metrologically, the collapse is a resource because it converts structural impossibility into reduced error.
\end{remark}

\subsection{Achievability: efficient estimators harvest the resource}\label{subsec:achievability_mle}

A resource certificate is meaningful only if the improved bound is not a mirage. {\bf Theorem~\ref{thm:gain_from_violation}} compares two Cram\'er--Rao bounds, but does not yet guarantee that the smaller bound can be approached by an explicit estimator. We therefore record a standard but essential fact: under regularity conditions, maximum likelihood estimation asymptotically achieves the Fisher information bound. This is the mechanism by which causal-model collapse becomes practically harvestable.

\begin{theorem}[Asymptotic efficiency of maximum likelihood estimation]
\label{thm:MLE_efficiency}
Consider $X_1,\ldots,X_N$, the i.i.d. samples drawn from a regular one-parameter family $p(x|\theta)$ with true parameter value $\theta$. Let
\begin{eqnarray}
\ell_N(\theta)=\sum_{i=1}^N \ln p(X_i | \theta)
\label{eq:loglik_N}
\end{eqnarray}
be the log-likelihood and define the maximum likelihood estimator $\hat{\theta}_{\rm ML}$ by the score equation
\begin{eqnarray}
\frac{\partial}{\partial\theta}\ell_N(\hat{\theta}_{\rm ML})=0.
\label{eq:score_eq}
\end{eqnarray}
Assume that $\hat{\theta}_{\rm ML} \to \theta$ in probability and that the following identities hold with finite moments:
\begin{eqnarray}
\mathbb{E}\left[\frac{\partial}{\partial\theta}\ln p(X | \theta)\right]=0,
\quad
\mathbb{E}\left[\left(\frac{\partial}{\partial\theta}\ln{p(X | \theta)}\right)^2\right]=F(\theta),
\quad \text{and} \quad
\mathbb{E}\left[-\frac{\partial^2}{\partial\theta^2}\ln p(X | \theta)\right]=F(\theta),
\label{eq:regularity_identities}
\end{eqnarray}
where $F(\theta)$ is the per-sample Fisher information. Then,
\begin{eqnarray}
\sqrt{N}\left(\hat{\theta}_{\rm ML} - \theta\right) \Rightarrow {\cal N}\left(0,\frac{1}{F(\theta)}\right),
\label{eq:MLE_asymp_normal}
\end{eqnarray}
and consequently,
\begin{eqnarray}
{\rm Var}(\hat{\theta}_{\rm ML}) = \frac{1}{N}\frac{1}{F(\theta)} + o\left(\frac{1}{N}\right).
\label{eq:MLE_var}
\end{eqnarray}
\end{theorem}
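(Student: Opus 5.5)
The plan is the classical Taylor expansion of the score equation around the true value $\theta$, followed by the central limit theorem, the law of large numbers and Slutsky's theorem. Write $S_N(\theta):=\partial_\theta\ell_N(\theta)=\sum_{i=1}^N \ell(X_i|\theta)$, where $\ell(x|\theta)=\partial_\theta\ln p(x|\theta)$. By the mean value theorem applied to Eq.~(\ref{eq:score_eq}),
\begin{eqnarray}
0=S_N(\hat{\theta}_{\rm ML})=S_N(\theta)+\partial_\theta S_N(\bar\theta_N)\,(\hat{\theta}_{\rm ML}-\theta),
\end{eqnarray}
for some $\bar\theta_N$ between $\theta$ and $\hat{\theta}_{\rm ML}$. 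Rearranging gives
\begin{eqnarray}
\sqrt{N}(\hat{\theta}_{\rm ML}-\theta)=\frac{N^{-1/2}S_N(\theta)}{-N^{-1}\partial_\theta S_N(\bar\theta_N)}.
\end{eqnarray}

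The numerator is handled by the first two identities in Eq.~(\ref{eq:regularity_identities}). The summands $\ell(X_i|\theta)$ are i.i.d. with mean zero and variance $F(\theta)$, so the classical CLT gives $N^{-1/2}S_N(\theta)\Rightarrow{\cal N}(0,F(\theta))$. This is the same argument used in Theorem~\ref{thm:FI_estimation_CLT}, now applied to the score itself.

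For the denominator, the weak law of large numbers together with the third identity gives $-N^{-1}\partial_\theta S_N(\theta)\to F(\theta)$ in probability. The main obstacle is transferring this limit from $\theta$ to the random intermediate point $\bar\theta_N$. I would add the standard regularity assumption, implicit in ``regular family'': in a neighborhood of $\theta$, $|\partial_\theta^3\ln p(x|\theta')|\le M(x)$ with $\mathbb{E}[M(X)]<\infty$. Then
\begin{eqnarray}
\bigl|N^{-1}\partial_\theta S_N(\bar\theta_N)-N^{-1}\partial_\theta S_N(\theta)\bigr|\le|\bar\theta_N-\theta|\,N^{-1}\sum_i M(X_i).
\end{eqnarray}
The right-hand side is $o_p(1)\cdot O_p(1)=o_p(1)$, using the assumed consistency $\hat{\theta}_{\rm ML}\to\theta$. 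An alternative is uniform convergence of the empirical Hessian on a shrinking neighborhood. Since $F(\theta)>0$, Slutsky's theorem then yields $\sqrt N(\hat\theta_{\rm ML}-\theta)\Rightarrow{\cal N}(0,F/F^2)={\cal N}(0,1/F(\theta))$, which is Eq.~(\ref{eq:MLE_asymp_normal}).

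For Eq.~(\ref{eq:MLE_var}), convergence in distribution does not by itself control second moments. I would therefore read Eq.~(\ref{eq:MLE_var}) as a statement about the asymptotic variance. If a literal variance expansion is wanted, I would add uniform integrability of $N(\hat\theta_{\rm ML}-\theta)^2$, for instance via a bounded parameter space or a truncated MLE. Under that condition the second moments converge, and $N\,{\rm Var}(\hat\theta_{\rm ML})\to 1/F(\theta)$. I expect the delicate points to be exactly these technical conditions: the local domination of the third derivative and the uniform integrability. The algebra itself is routine.
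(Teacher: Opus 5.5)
Your proposal is correct and follows essentially the same route as the paper. Both arguments use a mean-value expansion of the score equation, the CLT for $N^{-1/2}S_N(\theta)$, and the LLN for the observed information, then transfer it to the intermediate point and finish with Slutsky's theorem. Your version is the more careful of the two on two points. The local third-derivative domination makes precise what the paper leaves as ``continuity and dominated convergence.'' Your uniform-integrability caveat correctly flags that the paper's unargued claim that the variance expansion ``follows'' does not hold, since convergence in distribution alone does not control second moments.
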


\begin{proof}---Define the score and observed information
\begin{eqnarray}
S_N(\theta) &=& \frac{\partial}{\partial\theta}\ell_N(\theta) = \sum_{i=1}^N \frac{\partial}{\partial\theta}\ln p(X_i | \theta), \nonumber \\
J_N(\theta) &=& -\frac{\partial^2}{\partial\theta^2}\ell_N(\theta) = \sum_{i=1}^N \left(-\frac{\partial^2}{\partial\theta^2}\ln p(X_i | \theta)\right).
\label{eq:obs_info_JN}
\end{eqnarray}
The defining Eq.~(\ref{eq:score_eq}) reads $S_N(\hat{\theta}_{\rm ML})=0$. By Taylor's theorem,
\begin{eqnarray}
0 = S_N(\hat{\theta}_{\rm ML}) =S_N(\theta)-J_N(\tilde{\theta}_N)\left(\hat{\theta}_{\rm ML} - \theta\right),
\label{eq:taylor_score}
\end{eqnarray}
where $\tilde{\theta}_N$ lies between $\hat{\theta}_{\rm ML}$ and $\theta$. By rearranging, we have
\begin{eqnarray}
\sqrt{N}\left(\hat{\theta}_{\rm ML} - \theta\right) = \left(\frac{1}{N}J_N(\tilde{\theta}_N)\right)^{-1} \left(\frac{1}{\sqrt{N}}S_N(\theta)\right).
\label{eq:MLE_decomposition}
\end{eqnarray}

We then analyze the two factors. First, by Eq.~(\ref{eq:regularity_identities}) and the central limit theorem, we can find
\begin{eqnarray}
\frac{1}{\sqrt{N}}S_N(\theta) = \frac{1}{\sqrt{N}}\sum_{i=1}^N s(X_i | \theta) \Rightarrow {\cal N}\left(0,F(\theta)\right),
\label{eq:score_CLT}
\end{eqnarray}
where $s(X|\theta)=\partial_\theta \ln{p(X | \theta)}$.

Second, by the law of large numbers and Eq.~(\ref{eq:regularity_identities}),
\begin{eqnarray}
\frac{1}{N}J_N(\theta) = \frac{1}{N}\sum_{i=1}^N j(X_i | \theta) \to F(\theta)
\label{eq:info_LL}
\end{eqnarray}
in probability, where $j(X | \theta)=-\partial_\theta^2\ln p(X | \theta)$. Since $\hat{\theta}_{\rm ML} \to \theta$ in probability by assumption, also $\tilde{\theta}_N \to \theta$ in probability. Under continuity of $j(X | \theta)$ in $\theta$ and dominated convergence sufficient to exchange limits and expectations, Eq.~(\ref{eq:info_LL}) extends to
\begin{eqnarray}
\frac{1}{N}J_N(\tilde{\theta}_N) \to F(\theta)
\label{eq:info_tilde}
\end{eqnarray}
in probability. Therefore, by Slutsky's theorem,
\begin{eqnarray}
\left(\frac{1}{N}J_N(\tilde{\theta}_N)\right)^{-1} \to \frac{1}{F(\theta)}
\label{eq:info_inverse_limit}
\end{eqnarray}
in probability.

Finally, by combining Eq.~(\ref{eq:score_CLT}), Eq.~(\ref{eq:info_inverse_limit}), and Eq.~(\ref{eq:MLE_decomposition}) via Slutsky's theorem, we can attain
\begin{eqnarray}
\sqrt{N}\left(\hat{\theta}_{\rm ML}-\theta\right) \Rightarrow {\cal N}\left(0,\frac{1}{F(\theta)}\right),
\label{eq:MLE_limit_final}
\end{eqnarray}
which is Eq.~(\ref{eq:MLE_asymp_normal}). The variance statement Eq.~(\ref{eq:MLE_var}) follows.
\end{proof}

{\bf Theorem~\ref{thm:MLE_efficiency}} supplies the missing bridge from principle to practice. If an experiment exhibits a violation-certified increase in Fisher information, then an explicit estimator (maximum likelihood, or any asymptotically efficient alternative) can harvest that increase into reduced error. The point is practical: ``causal-model collapse'' certifies that there exists a \emph{working estimation procedure} whose asymptotic performance beats the classical causal benchmark.

\subsection{Mechanism: Fisher-information synergy from correlated score contributions}\label{subsec:synergy_mechanism}

We established that the CFII violation implies a gain and that the gain is achievable. We now expose the structural mechanism behind the gain in a way that is independent of any particular physical platform. The mechanism is the emergence of ``synergy'' between the causal modules, visible as off-diagonal terms of the Fisher information matrix. These off-diagonal terms are precisely what classical modularity forbids, as made explicit by the vanishing cross term in Eq.~(\ref{eq:cross_term_zero}).

Consider a two-parameter description $\boldsymbol{\theta}=(\theta_1,\theta_2)^T$ of a process and an observation $Y$ distributed as $p(y | \theta_1,\theta_2)$. Let the Fisher information matrix be
\begin{eqnarray}
\mathbf{F}_Y(\boldsymbol{\theta}) =
\begin{pmatrix}
F_1(\boldsymbol{\theta}) & J(\boldsymbol{\theta}) \\
J(\boldsymbol{\theta}) & F_2(\boldsymbol{\theta})
\end{pmatrix},
\label{eq:FIM_2x2}
\end{eqnarray}
where
\begin{eqnarray}
F_1(\boldsymbol{\theta}) &=& \mathbb{E}\bigl[\left(\partial_{\theta_1} \ln{p(Y | \theta_1,\theta_2)}\right)^2\bigr], \nonumber \\
F_2(\boldsymbol{\theta}) &=& \mathbb{E}\bigl[\left(\partial_{\theta_2} \ln{p(Y | \theta_1,\theta_2)}\right)^2\bigr], \nonumber \\
J(\boldsymbol{\theta}) &=& \mathbb{E}\bigl[\left(\partial_{\theta_1} \ln{p(Y | \theta_1,\theta_2)}\right)\left(\partial_{\theta_2} \ln{p(Y | \theta_1,\theta_2)}\right)\bigr].
\label{eq:defs_FJ}
\end{eqnarray}
The quantity $J(\boldsymbol{\theta})$ is the covariance between the score contributions of the two parameters. In a classical causal-path model with modular parameter dependence, this covariance vanishes by construction (as shown explicitly in Eq.~(\ref{eq:cross_term_zero})). When it does not vanish, the two modules are not independent in the sense required by the classical causal decomposition.

We now compute the effective Fisher information for the additive parameter
\begin{eqnarray}
\Theta=\theta_1+\theta_2.
\label{eq:Theta_sum}
\end{eqnarray}
Let $\mathbf{u}=(1,1)^T$. By the definition of Eq.~(\ref{eq:effective_FI_def}),
\begin{eqnarray}
F_Y^{(\mathbf{u})}(\boldsymbol{\theta}) = \left(\mathbf{u}^T\mathbf{F}_Y(\boldsymbol{\theta})^{-1}\mathbf{u}\right)^{-1}.
\label{eq:Feff_sum_def}
\end{eqnarray}
We then provide the theorem:
\begin{theorem}[Synergy formula and the condition for beating the series law]
\label{thm:synergy_formula}
Assume $\mathbf{F}_Y(\boldsymbol{\theta})$ in Eq.~(\ref{eq:FIM_2x2}) is positive definite, i.e., $F_1F_2 - J^2 > 0$. Then the effective Fisher information for $\Theta=\theta_1+\theta_2$ is
\begin{eqnarray}
F_Y^{(\mathbf{u})}(\boldsymbol{\theta}) =\frac{F_1 F_2 - J^2}{F_1 + F_2 - 2J}.
\label{eq:Feff_sum_closed}
\end{eqnarray}
Moreover, comparing to the classical series benchmark
\begin{eqnarray}
F_{\rm series}(\boldsymbol{\theta}) = \left(\frac{1}{F_1} + \frac{1}{F_2}\right)^{-1} = \frac{F_1F_2}{F_1+F_2},
\label{eq:Fseries_def}
\end{eqnarray}
one has
\begin{eqnarray}
F_Y^{(\mathbf{u})}(\boldsymbol{\theta}) > F_{\rm series}(\boldsymbol{\theta})
\label{eq:synergy_advantage_condition}
\end{eqnarray}
if and only if
\begin{eqnarray}
0 < J(\boldsymbol{\theta}) < \frac{2F_1(\boldsymbol{\theta}) F_2(\boldsymbol{\theta})} {F_1(\boldsymbol{\theta}) + F_2(\boldsymbol{\theta})}.
\label{eq:J_range}
\end{eqnarray}
Finally, for fixed $F_1$ and $F_2$, the supremum of $F_Y^{(\mathbf{u})}$ over $J$ under $F_1 F_2 - J^2 > 0$ is
\begin{eqnarray}
\sup_J F_Y^{(\mathbf{u})}(\boldsymbol{\theta}) = \min\left\{F_1(\boldsymbol{\theta}), F_2(\boldsymbol{\theta})\right\}.
\label{eq:max_Feff_min}
\end{eqnarray}
\end{theorem}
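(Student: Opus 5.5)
The plan is to compute the quadratic form $\mathbf{u}^T\mathbf{F}_Y^{-1}\mathbf{u}$ in closed form, compare it with the series resistance, and then optimize the resulting rational function of $J$. First I invert the $2\times 2$ matrix in Eq.~(\ref{eq:FIM_2x2}). Positive definiteness gives $\det\mathbf{F}_Y=F_1F_2-J^2>0$, so
\begin{eqnarray}
\mathbf{F}_Y^{-1}=\frac{1}{F_1F_2-J^2}\begin{pmatrix} F_2 & -J\\ -J & F_1\end{pmatrix}.
\end{eqnarray}
Contracting with $\mathbf{u}=(1,1)^T$ gives $\mathbf{u}^T\mathbf{F}_Y^{-1}\mathbf{u}=(F_1+F_2-2J)/(F_1F_2-J^2)$. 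Inverting this via Eq.~(\ref{eq:effective_FI_def}) yields Eq.~(\ref{eq:Feff_sum_closed}). I will note that the denominator is positive: by AM--GM, $F_1+F_2\ge 2\sqrt{F_1F_2}>2|J|$, so the expression is well defined and positive.

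For the comparison, I will work with resistances, since both sides are positive. I compute
\begin{eqnarray}
R_Y^{(\mathbf{u})}-R_{\rm series}=\frac{F_1+F_2-2J}{F_1F_2-J^2}-\frac{F_1+F_2}{F_1F_2}.
\end{eqnarray}
Over the common positive denominator $F_1F_2(F_1F_2-J^2)$, the numerator simplifies to $J\,[(F_1+F_2)J-2F_1F_2]$. Hence $F_Y^{(\mathbf{u})}>F_{\rm series}$ holds if and only if this numerator is negative. The numerator is a quadratic in $J$ with roots $0$ and $2F_1F_2/(F_1+F_2)$ and a positive leading coefficient, so it is negative exactly on the window in Eq.~(\ref{eq:J_range}). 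To show the window is not vacuous or cut off by the positive-definiteness constraint, I check that $2F_1F_2/(F_1+F_2)\le\sqrt{F_1F_2}$, since a harmonic mean never exceeds a geometric mean. The whole window therefore lies inside the admissible range $|J|<\sqrt{F_1F_2}$.

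For the supremum, I treat $f(J)=(F_1F_2-J^2)/(F_1+F_2-2J)$ on $(-\sqrt{F_1F_2},\sqrt{F_1F_2})$ and differentiate. The numerator of $f'$ is
\begin{eqnarray}
-2J(F_1+F_2-2J)+2(F_1F_2-J^2)=2(J-F_1)(J-F_2).
\end{eqnarray}
Without loss of generality take $F_1\le F_2$. Then $F_1\le\sqrt{F_1F_2}\le F_2$, so $f$ is increasing for $J<F_1$ and decreasing on $(F_1,\sqrt{F_1F_2})$. If $F_1<F_2$, the maximum is attained at $J=F_1$, and direct substitution gives $f(F_1)=F_1(F_2-F_1)/(F_2-F_1)=F_1$. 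In the degenerate case $F_1=F_2=F$, the point $J=F$ lies on the boundary, and $f(J)=(F+J)/2$ increases to $F$ without reaching it. This is why the statement is phrased as a supremum, and it still equals $\min\{F_1,F_2\}$.

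The only step needing care is this boundary bookkeeping in the last part: confirming that the stationary point $J=\min\{F_1,F_2\}$ lies in the admissible interval, and handling the equal case, where the supremum is approached but not attained. The other parts are routine algebra.
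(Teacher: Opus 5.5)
Your proposal is correct and follows the paper's proof essentially step for step: you use explicit $2\times 2$ inversion, you compare resistances via the numerator $J[(F_1+F_2)J-2F_1F_2]$, and you find a derivative whose numerator factors as $(J-F_1)(J-F_2)$; the only difference is that you differentiate $F_Y^{(\mathbf{u})}$ directly where the paper differentiates $R_{\rm eff}$, which is equivalent. Your explicit monotonicity argument (showing the stationary point is a maximum) and your checks that $F_1+F_2-2J>0$ and that the window lies inside $|J|<\sqrt{F_1F_2}$ are small but welcome additions that the paper leaves implicit.
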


\begin{proof}---We first compute $\mathbf{F}_Y^{-1}$. For a $2\times 2$ matrix in Eq.~(\ref{eq:FIM_2x2}) with the determinant $F_1 F_2 - J^2$, one has
\begin{eqnarray}
\mathbf{F}_Y(\boldsymbol{\theta})^{-1} = \frac{1}{F_1F_2-J^2}
\begin{pmatrix}
F_2 & -J \\
-J & F_1
\end{pmatrix}.
\label{eq:FIM_inverse_2x2}
\end{eqnarray}
Therefore,
\begin{eqnarray}
\mathbf{u}^T\mathbf{F}_Y^{-1}\mathbf{u} = \frac{1}{F_1F_2-J^2} \left(F_1+F_2-2J\right),
\label{eq:uFu}
\end{eqnarray} and Eq.~(\ref{eq:Feff_sum_closed}) follows by inversion.

To compare with the series benchmark in Eq.~(\ref{eq:Fseries_def}), we define the resistance quantities
\begin{eqnarray}
R_{\rm eff}(\boldsymbol{\theta}) &=& \left(F_Y^{(\mathbf{u})}(\boldsymbol{\theta})\right)^{-1} = \frac{F_1+F_2-2J}{F_1F_2-J^2}, \nonumber \\
R_{\rm series}(\boldsymbol{\theta}) &=& \left(F_{\rm series}(\boldsymbol{\theta})\right)^{-1}=\frac{F_1+F_2}{F_1F_2}.
\label{eq:R_eff_series}
\end{eqnarray}
The inequality $F_Y^{(\mathbf{u})} > F_{\rm series}$ is equivalent to $R_{\rm eff}<R_{\rm series}$. A direct algebraic manipulation yields
\begin{eqnarray}
R_{\rm series}-R_{\rm eff} = \frac{J\left(J(F_1 + F_2) - 2F_1 F_2\right)}{F_1F_2\left(J^2 - F_1 F_2\right)}.
\label{eq:Rdiff_algebra}
\end{eqnarray}
Since $F_1F_2-J^2>0$, the denominator in Eq.~(\ref{eq:Rdiff_algebra}) is negative. Therefore, $R_{\rm series} - R_{\rm eff}>0$ holds if and only if
\begin{eqnarray}
J \left( J(F_1 + F_2) - 2F_1 F_2 \right)<0,
\label{eq:J_condition_intermediate}
\end{eqnarray}
which is equivalent to the strict range in Eq.~(\ref{eq:J_range}). This proves the condition for beating the series law.

For the optimization over $J$, observe from Eq.~(\ref{eq:R_eff_series}) that maximizing $F_Y^{(\mathbf{u})}$ is equivalent to minimizing $R_{\rm eff}$ over $J$ with $J^2 < F_1 F_2$. By differentiating $R_{\rm eff}$ with respect to $J$, we have
\begin{eqnarray}
\frac{\partial}{\partial J}R_{\rm eff} = \frac{-2(J - F_1)(J - F_2)}{\left(J^2 - F_1 F_2\right)^2}.
\label{eq:dR_dJ}
\end{eqnarray}
If $F_1 \neq F_2$, the unique stationary point in the allowed interval $\abs{J} < \sqrt{F_1 F_2}$ is $J=\min\{F_1, F_2\}$, and substituting this value into Eq.~(\ref{eq:Feff_sum_closed}) yields $F_Y^{(\mathbf{u})}=\min\{F_1, F_2\}$. In the symmetric case $F_1=F_2=:F$, Eq.~(\ref{eq:Feff_sum_closed}) simplifies to $F_Y^{(\mathbf{u})}=(F+J)/2$, whose supremum over $\abs{J}<F$ is $F$, approached as $J \to F^-$. Hence, in all cases, the supremum equals Eq.~(\ref{eq:max_Feff_min}).
\end{proof}

{\bf Theorem~\ref{thm:synergy_formula}} reveals a unifying mechanism. In a classical modular causal decomposition, score contributions are orthogonal and $J=0$, forcing the harmonic series penalty. Causal-model collapse manifests as a nonzero $J$, and when $J$ is positive in the range in Eq.~(\ref{eq:J_range}), the effective resistance for the total parameter drops below the classical series value. The gain is therefore not mysterious: it is the information-theoretic shadow of an interference-like correlation between modules that classical causality forbids.

\begin{remark}[The synergy in the Fisher information geometry]
This interpretation is precisely what makes the framework practical. The designer does not need to guess which ``quantum resource'' is present in a vague sense. The design target is concrete: engineer the experiment so that the score contributions for the would-be causal modules become positively correlated. In quantum dynamics, such correlations naturally arise when an attempted segmentation into independent modules is inconsistent with coherent evolution. In contextual schemes, they arise when the statistics cannot be embedded into a context-free causal model. In both cases, the resource is the same mathematical object: synergy in the Fisher information geometry.
\end{remark}

\subsection{Scaling intuition: cancelling the series penalty in long chains}\label{subsec:scaling_long_chain}

The causal-path CFII already carries a message that is easy to miss if one reads it only as a bound on precision: the familiar ``series penalty'' is not a law of nature, but a law of a model class. In a strictly classical causal chain, the information about an additive parameter must traverse a sequence of causal bottlenecks, and the CFII forces the inverse Fisher information to accumulate. Once the classical causal modularization collapses (as certified by a CFII violation), the bottleneck accounting rule stops being applicable, and the very same experiment can exhibit an endpoint sensitivity that would be mathematically impossible under the classical causal narrative.

\subsubsection*{Why the $1/N$ dilution appears classically} 

To isolate the scaling, consider an $N$-parameter description
$\boldsymbol{\theta}=(\theta_1,\ldots,\theta_N)^T$ and an additive causal quantity
\begin{eqnarray}
\Theta=\sum_{j=1}^N\theta_j = \mathbf{u}^T\boldsymbol{\theta}, \quad \mathbf{u} = (1,\ldots,1)^T.
\end{eqnarray}
The appropriate single-number information about $\Theta$ in a multi-parameter model is the effective Fisher information
\begin{eqnarray}
F^{(\mathbf{u})}(\boldsymbol{\theta})=\left(\mathbf{u}^T\mathbf{F}(\boldsymbol{\theta})^{-1}\mathbf{u}\right)^{-1},
\end{eqnarray}
where $\mathbf{F}(\boldsymbol{\theta})$ is the Fisher information matrix (FIM). This definition is not cosmetic: it is precisely the Cram\'er--Rao geometry for estimating the sum direction in parameter space.

In a strictly modular classical chain, the segments behave as independent ``causal modules'' in the precise sense used to derive the causal-path CFII: score contributions associated with different segments are orthogonal, hence the FIM is diagonal (or block-diagonal in more general modular networks). For the simplest equal-strength case, one may write
\begin{eqnarray}
\mathbf{F}_{\rm cl}(\boldsymbol{\theta}) = F \mathds{1}_N,
\end{eqnarray}
so that $\mathbf{F}_{\rm cl}^{-1} = (1/F) \mathds{1}_N$ (Here, $\hat{\mathds{1}}_N$ is the $N \times N$ identity matrix) and therefore
\begin{eqnarray}
\mathbf{u}^T\mathbf{F}_{\rm cl}^{-1}\mathbf{u}=\frac{N}{F}, \quad F^{(\mathbf{u})}_{\rm cl}=\frac{F}{N}.
\end{eqnarray}
This is exactly the classical ``series penalty'': when one tries to estimate the \emph{total} parameter accumulated over $N$ modules whose information contents are comparable, the effective information about the total parameter scales as $1/N$. Equivalently, the information-resistance $R:=F^{-1}$ grows linearly with $N$. The physical picture is the one we emphasized in the causal-path CFII: a classical chain is a sequence of passive mediators, and passive resistances in series must add.

\subsubsection*{What changes under causal-model collapse: off-diagonal Fisher synergy} 

The CFII derivation identifies the microscopic reason behind the series penalty: modularity forces score orthogonality and therefore removes off-diagonal entries of $\mathbf{F}$. Under causal-model collapse, those orthogonality relations need not hold. The score contributions from different segments become correlated, and the FIM develops off-diagonal structure. This is the multi-segment analogue of the two-segment ``synergy'' mechanism discussed in Sec.~\ref{subsec:synergy_mechanism}: the gain is the information-geometric shadow of correlations between score contributions that the classical causal model forbids.

To see how such correlations can cancel the series penalty in a transparent and analytically solvable way, it is useful to introduce a minimal toy model that cleanly separates diagonal ``local information'' from off-diagonal ``collective synergy.'' Thus, let us consider an ``equal-information, equal-correlation'' FIM of the form
\begin{eqnarray}
\mathbf{F}(\boldsymbol{\theta}) = F\left((1-\varepsilon) \mathds{1}_N + \varepsilon\mathbf{J}_N\right),
\label{eq:equicorrelated_FIM}
\end{eqnarray}
where $\mathbf{J}_N$ is the all-ones matrix. This corresponds to a score vector $\mathbf{s}=(s_1,\ldots,s_N)^T$ whose covariance is
$\mathbf{F}=\mathbb{E}[\mathbf{s}\mathbf{s}^T]$ with
\begin{eqnarray}
\mathbb{E}[s_j^2] = F, \quad \mathbb{E}[s_i s_j] = F\varepsilon \quad (i \neq j),
\end{eqnarray}
so that $\varepsilon$ directly quantifies the uniform pairwise correlation of segment-wise score contributions. For $0 \le \varepsilon < 1$, $\mathbf{F}$ is positive definite (and hence invertible), while $\varepsilon \to 1$ corresponds to an almost perfectly correlated score structure.

\medskip
There are two equally instructive ways to compute $F^{(\mathbf{u})}$: an eigenmode decomposition and a direct inversion.

\medskip\noindent
{\bf Eigenmode viewpoint}: The vector $\mathbf{u}$ is an eigenvector of $\mathbf{J}_N$ with eigenvalue $N$, and any vector orthogonal to $\mathbf{u}$ is an eigenvector of $\mathbf{J}_N$ with eigenvalue $0$. Hence, $\mathbf{u}$ is an eigenvector of $\mathbf{F}$ with eigenvalue $F(1 - \varepsilon + \varepsilon N)$, while the $(N-1)$ orthogonal directions have eigenvalue $F(1-\varepsilon)$. Since the total parameter $\Theta$ lives exactly in the $\mathbf{u}$ direction, the relevant information for estimating $\Theta$ is controlled by the inverse eigenvalue along $\mathbf{u}$, yielding
\begin{eqnarray}
\mathbf{u}^T\mathbf{F}(\boldsymbol{\theta})^{-1}\mathbf{u} = \frac{N}{F(1-\varepsilon+\varepsilon N)}.
\label{eq:uFu_N}
\end{eqnarray}
Therefore, the effective Fisher information becomes
\begin{eqnarray}
F^{(\mathbf{u})}(\boldsymbol{\theta}) = \left(\mathbf{u}^T\mathbf{F}(\boldsymbol{\theta})^{-1}\mathbf{u}\right)^{-1} = \frac{F(1 - \varepsilon + \varepsilon N)}{N}
=F\left(\varepsilon+\frac{1-\varepsilon}{N}\right).
\label{eq:Feff_N_equicorr}
\end{eqnarray}

\medskip\noindent
{\bf Direct inversion viewpoint}: One may also verify Eq.~(\ref{eq:uFu_N}) by explicitly inverting Eq.~(\ref{eq:equicorrelated_FIM}).
By using $\mathbf{J}_N^2=N\mathbf{J}_N$ and a short algebra, we have
\begin{eqnarray}
\mathbf{F}(\boldsymbol{\theta})^{-1} = \frac{1}{F(1-\varepsilon)}\left(\mathds{1}_N - \frac{\varepsilon}{ 1 - \varepsilon + \varepsilon N}\mathbf{J}_N\right),
\end{eqnarray}
and contracting with $\mathbf{u}$ yields $\mathbf{u}^T\mathbf{F}^{-1}\mathbf{u}$ in Eq.~(\ref{eq:uFu_N}).

Eq.~(\ref{eq:Feff_N_equicorr}) makes the scaling mechanism completely explicit. When $\varepsilon=0$ (no synergy), one recovers the classical harmonic scaling $F^{(\mathbf{u})}=F/N$. When $\varepsilon>0$ is fixed and $N$ becomes large, the second term $(1-\varepsilon)/N$ dies out and one finds the saturation
\begin{eqnarray}
\lim_{N\to\infty}F^{(\mathbf{u})}(\boldsymbol{\theta})=F\varepsilon,
\end{eqnarray}
namely, an $O(1)$ effective Fisher information for the total parameter $\Theta$ even for arbitrarily long chains. In the extreme synergy limit $\varepsilon \to 1$ (maximal correlation compatible with invertibility), Eq.~(\ref{eq:Feff_N_equicorr}) yields $F^{(\mathbf{u})} \to F$, i.e., the chain behaves as if there were no series penalty at all.

It is worth pausing on what this means physically. The equicorrelated matrix has a simple interpretation: it decomposes the information geometry into one ``collective mode'' along $\mathbf{u}$ and $(N-1)$ ``relative modes'' orthogonal to it. As $\varepsilon \to 1$, the collective mode becomes highly informative, while the relative modes become poorly conditioned (the eigenvalue $F(1-\varepsilon)$ tends to $0$). This captures an important metrological moral: the same correlations that make the sum parameter $\Theta$ easy to estimate can make the individual segment parameters hard to disentangle. In other words, the synergy is not magic that creates information about everything, but it is a geometric reallocation of information into the direction that the estimation task actually cares about.

\subsubsection*{From scaling intuition to operational design} 

The relevance to our causal perspective is immediate. A long chain is precisely the regime where classical causal reasoning would insist that precision must become hopelessly diluted, because each causal module adds an irreducible information-resistance. A large CFII violation is precisely the regime where Nature declares that this dilution narrative is not merely pessimistic but structurally wrong for the observed statistics. In such a regime, the causal-model collapse is not only a philosophical diagnosis; it is a design instruction: engineer the experiment so that the would-be segment-wise score contributions become positively correlated, thereby generating off-diagonal Fisher synergy and converting a classically series-limited architecture into an effectively collective sensor.

\subsection{CFII-guided metrological design}\label{subsec:design_principle}

The results above suggest a practical methodology for metrological design. A CFII is derived from an explicit classical causal hypothesis. If the hypothesis were true, it would impose a precision frontier (Sec.~\ref{subsec:classical_precision_bounds}). If the experiment violates the CFII, the violation simultaneously (i) falsifies the hypothesis (Sec.~\ref{sec:falsification}) and (ii) certifies a precision gain that can be harvested by an efficient estimator (Sec.~\ref{subsec:violation_certified_gain} and Sec.~\ref{subsec:achievability_mle}). Thus, a CFII is more than a test; it is a metrology design compass.

The most important practical implication is that the ``resource'' is \emph{self-certified}. To claim a metrological advantage, one typically must compare to a theoretical bound (such as a QFI benchmark) and argue that the implementation truly adheres to the assumptions behind that bound. In contrast, a CFII-based certification proceeds internally: it compares the achieved endpoint information to a causal benchmark derived from an explicit classical model. When the inequality is violated, the classical benchmark is not merely beaten; it is proven inapplicable. The very data that achieve the enhanced precision also certify that the enhancement arises from causal-model collapse. This is precisely the kind of internal certification that is valuable in practical sensing, where device imperfections often blur the meaning of idealized theoretical limits.

Equally important is that the mechanism identified in Sec.~\ref{subsec:synergy_mechanism} translates into a concrete control objective. A classical causal decomposition forces the score covariance $J$ to vanish. A metrological protocol that aims to exploit causal-model collapse should therefore be designed to generate positive score correlations between the would-be causal modules, while maintaining estimator stability (regularity of the likelihood and positivity of the effective statistical model used by the inference procedure). When these conditions are met, the improvement factor in Eq.~(\ref{eq:gain_factor_resistance}) becomes the experimentally measurable figure of merit: by tuning preparation, controls, and measurement contexts to push $V_{\rm path}$ deeper below zero, one directly tunes the achievable estimation error.

\suppnote{Examples: coherent dynamics beyond the classical causal-path frontier}\label{sec:examples_coherence_entanglement}

In the preceding sections, we established that a Fisher-information inequality becomes a \emph{causal} inequality once one commits to a classical causal model class. In particular, under the causal-path hypothesis $A \rightarrow C \rightarrow B$ with a modular (segment-wise) parameterization, Fisher information must obey a strict series law. When the series law is violated, the conclusion is that the entire classical causal-path explanation is falsified.

Here we illustrate these messages in an elementary coherent-dynamics setting. A single qubit undergoing a coherent rotation violates the causal-path CFII even with a fixed, simple projective readout. We then show that the resulting causal-path collapse is visible in Fisher-information landscapes, estimator-level performance, and adversarial classical benchmarks that optimize the intermediate split time.

\subsection{Unified benchmark: CFII gap and its metrological meaning}\label{subsec:ex_unified_benchmark}

We start with an additive parameter decomposition
\begin{eqnarray}
\theta_{ab}=\theta_{ac}+\theta_{cb}, \quad (\theta_{ac}, \theta_{cb} > 0),
\label{eq:sec6_additive}
\end{eqnarray}
and let $F(\theta)$ denote the (classical) Fisher information of the chosen readout statistics for estimating $\theta$ after an evolution of duration $\theta$.

Under the classical causal-path model class formalized in Sec.~\ref{sec:cfii}, the causal-path CFII reads
\begin{eqnarray}
F(\theta_{ab})^{-1} \ge F(\theta_{ac})^{-1} + F(\theta_{cb})^{-1},
\label{eq:sec6_cfii_path}
\end{eqnarray}
and the violation statistic is defined by
\begin{eqnarray}
V(\theta_{ac},\theta_{cb}) = F(\theta_{ab})^{-1}-F(\theta_{ac})^{-1} - F(\theta_{cb})^{-1}.
\label{eq:sec6_V_def}
\end{eqnarray}
The causal-path hypothesis requires $V(\theta_{ac},\theta_{cb}) \ge 0$ for all admissible splits. Thus, $V<0$ is a model-falsification certificate in the precise sense of Sec.~\ref{sec:falsification}.

To connect the witness to the operational precision, we define the classical causal-path benchmark Fisher information for a given split,
\begin{eqnarray}
F_{\rm cl}^{\rm(path)}(\theta_{ac},\theta_{cb}) = \left( F(\theta_{ac})^{-1} + F(\theta_{cb})^{-1} \right)^{-1}.
\label{eq:sec6_Fcl_path}
\end{eqnarray}
Then, Eq.~(\ref{eq:sec6_cfii_path}) is equivalent to $F(\theta_{ab}) \le F_{\rm cl}^{\rm(path)}$. Since the Cram\'er--Rao bound gives $\Delta\theta\ge 1/\sqrt{N F(\theta)}$ for $N$ i.i.d. samples~\cite{Cramer2016,Helstrom1976}, the causal-path model predicts an error floor
\begin{eqnarray}
\Delta\theta_{\rm cl} \ge \frac{1}{\sqrt{N F_{\rm cl}^{\rm(path)}(\theta_{ac},\theta_{cb})}}.
\label{eq:sec6_CRB_cl}
\end{eqnarray}
In contrast, the actual experiment yields
\begin{eqnarray}
\Delta\theta \ge \frac{1}{\sqrt{N F(\theta_{ab})}}.
\label{eq:sec6_CRB_actual}
\end{eqnarray}
Here we package the advantage by the error-ratio indicator
\begin{eqnarray}
G(\theta_{ac},\theta_{cb}) := \log\left(\frac{\Delta\theta}{\Delta\theta_{\rm cl}}\right) = \frac{1}{2}\log\left(\frac{F_{\rm cl}^{\rm(path)}(\theta_{ac},\theta_{cb})}{F(\theta_{ab})}\right).
\label{eq:sec6_G_def}
\end{eqnarray}
Note that $G<0$ is exactly the metrological-gain regime. Thus, we can summarize as:
\begin{theorem}[Witness--resource equivalence for the causal-path CFII]
\label{thm:sec6_witness_resource_equiv}
For $F(\theta_{ab}), F(\theta_{ac}), F(\theta_{cb}) > 0$, the following statements are equivalent:
\begin{eqnarray}
V(\theta_{ac},\theta_{cb})<0, \quad F(\theta_{ab})>F_{\rm cl}^{\rm(path)}(\theta_{ac},\theta_{cb}), \quad G(\theta_{ac},\theta_{cb})<0.
\label{eq:sec6_equiv}
\end{eqnarray}
\end{theorem}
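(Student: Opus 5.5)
The plan is to prove the two equivalences $V<0 \Leftrightarrow F(\theta_{ab})>F_{\rm cl}^{\rm(path)}$ and $F(\theta_{ab})>F_{\rm cl}^{\rm(path)} \Leftrightarrow G<0$ separately, both by elementary monotonicity arguments, and then chain them. Throughout I will use the hypothesis $F(\theta_{ab}),F(\theta_{ac}),F(\theta_{cb})>0$. It guarantees that all reciprocals are finite and positive, and that $F_{\rm cl}^{\rm(path)}$ in Eq.~(\ref{eq:sec6_Fcl_path}) is a well-defined positive number. Writing $F_{\rm cl}^{\rm(path)}=F(\theta_{ac})F(\theta_{cb})/(F(\theta_{ac})+F(\theta_{cb}))$ shows it is strictly positive.

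For the first equivalence, I rewrite the definition (\ref{eq:sec6_V_def}) in the resistance form already used in Theorem~\ref{thm:gain_from_violation}:
\begin{eqnarray}
V(\theta_{ac},\theta_{cb})=F(\theta_{ab})^{-1}-\bigl(F_{\rm cl}^{\rm(path)}(\theta_{ac},\theta_{cb})\bigr)^{-1}.
\end{eqnarray}
Then $V<0$ is equivalent to $F(\theta_{ab})^{-1}<(F_{\rm cl}^{\rm(path)})^{-1}$. The map $x\mapsto 1/x$ is strictly decreasing on $(0,\infty)$ and both quantities lie in that interval. Hence this is equivalent to $F(\theta_{ab})>F_{\rm cl}^{\rm(path)}$, with both directions holding. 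This essentially repeats the argument of Theorem~\ref{thm:gain_from_violation}, now with the converse direction included.

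For the second equivalence, I use the closed form in Eq.~(\ref{eq:sec6_G_def}), $G=\tfrac12\log\bigl(F_{\rm cl}^{\rm(path)}/F(\theta_{ab})\bigr)$. The argument of the logarithm is a positive ratio, and $\log$ is strictly increasing with $\log 1=0$. So $G<0$ holds if and only if $F_{\rm cl}^{\rm(path)}/F(\theta_{ab})<1$, that is, if and only if $F(\theta_{ab})>F_{\rm cl}^{\rm(path)}$.

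The only point requiring care is the identification of $G$ with the logarithm of the Cram\'er--Rao error ratio. I would state explicitly that $\Delta\theta$ and $\Delta\theta_{\rm cl}$ denote the saturated bounds $1/\sqrt{NF}$ from Eqs.~(\ref{eq:sec6_CRB_cl})--(\ref{eq:sec6_CRB_actual}), evaluated at the same $N$. With that convention the factor $N$ cancels in the ratio, giving the stated formula independently of $N$. The theorem itself is then a statement about this formula, and no further obstacle remains.
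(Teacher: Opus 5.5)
Your proof is correct and follows the same route as the paper. You rewrite $V<0$ as $F(\theta_{ab})^{-1}<F(\theta_{ac})^{-1}+F(\theta_{cb})^{-1}$, take reciprocals using positivity to get $F(\theta_{ab})>F_{\rm cl}^{\rm(path)}$, and read off $G<0$ from $G=\tfrac12\log\bigl(F_{\rm cl}^{\rm(path)}/F(\theta_{ab})\bigr)$. Your remarks on the positivity of $F_{\rm cl}^{\rm(path)}$ and the cancellation of $N$ in the error ratio are harmless clarifications that the paper leaves implicit.
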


\begin{proof}
By definition,
\begin{eqnarray}
V(\theta_{ac},\theta_{cb})<0
&\Longleftrightarrow& F(\theta_{ab})^{-1} < F(\theta_{ac})^{-1} + F(\theta_{cb})^{-1} \nonumber \\
&\Longleftrightarrow& F(\theta_{ab}) > \left(F(\theta_{ac})^{-1} + F(\theta_{cb})^{-1}\right)^{-1} = F_{\rm cl}^{\rm(path)}(\theta_{ac},\theta_{cb}).
\end{eqnarray}
By substituting this inequality into Eq.~(\ref{eq:sec6_G_def}), we obtain
\begin{eqnarray}
F(\theta_{ab}) > F_{\rm cl}^{\rm(path)}(\theta_{ac},\theta_{cb}) 
\Longleftrightarrow 
\frac{F_{\rm cl}^{\rm(path)}(\theta_{ac},\theta_{cb})}{F(\theta_{ab})} < 1
\Longleftrightarrow
G(\theta_{ac},\theta_{cb}) < 0.
\end{eqnarray}
The proof is completed.
\end{proof}

{\bf Theorem~\ref{thm:sec6_witness_resource_equiv}} is the operational core and summary of Sec.~\ref{sec:resource}. Thus, a CFII violation is simultaneously a falsification of the classical causal-path model class and a guaranteed precision advantage relative to that class.

\subsection{Single-qubit coherent rotation}\label{subsec:sec6_single_qubit}

We begin with a single qubit evolving under a coherent unitary rotation. We cast a simple model:
\begin{eqnarray}
\hat{H}=\frac{1}{2}\hat{\sigma}_x,
\quad
\hat{U}(\theta) = e^{-i\theta \hat{H}} = e^{-i\theta \hat{\sigma}_x/2}.
\label{eq:sec6_single_U}
\end{eqnarray}
We prepare a (pure) probe state
\begin{eqnarray}
\ket{\psi(\vartheta,\varphi)} = \cos\frac{\vartheta}{2}\ket{0} + e^{i\varphi}\sin\frac{\vartheta}{2}\ket{1},
\label{eq:sec6_single_probe}
\end{eqnarray}
and perform a fixed projective measurement in the $\hat{\sigma}_z$ basis,
\begin{eqnarray}
\hat{M}_0=\ketbra{0}{0},
\quad
\hat{M}_1=\ketbra{1}{1}.
\label{eq:sec6_single_readout}
\end{eqnarray}
Writing the binary probabilities as $p_0(\theta)=\tfrac{1+z(\theta)}{2}$ and $p_1(\theta)=\tfrac{1-z(\theta)}{2}$, one finds
\begin{eqnarray}
z(\theta) = \cos{\vartheta}\cos{\theta} + \sin{\vartheta}\sin{\varphi}\sin{\theta}.
\label{eq:sec6_single_z}
\end{eqnarray}
For binary statistics, the Fisher information is given by
\begin{eqnarray}
F(\theta) = \sum_{m=0,1}p_m(\theta)\left(\frac{\partial}{\partial\theta}\ln{p_m(\theta)}\right)^2 = \frac{\left(\frac{\partial}{\partial\theta}z(\theta)\right)^2}{1-z(\theta)^2},
\label{eq:sec6_single_F_general}
\end{eqnarray}
where the removable singularities at $z(\theta)^2=1$ are treated by continuity.

Then, we provide the result that shows that the causal-path CFII can be violated in a maximally clean way: no entanglement, no post-selection, no measurement optimization. A single fixed projective measurement suffices, yet the classical causal-path model collapses deterministically.
\begin{result}[Deterministic single-qubit collapse of the causal-path model]
\label{result:sec6_single_deterministic}
Fix $\varphi=\pi/2$ in Eq.~(\ref{eq:sec6_single_probe}). Then, the Fisher information of the readout in Eq.~(\ref{eq:sec6_single_readout}) is constant,
\begin{eqnarray}
F(\theta)=1 \quad (\forall \, \theta),
\label{eq:sec6_single_F_const}
\end{eqnarray}
and the causal-path CFII gap in Eq.~(\ref{eq:sec6_V_def}) satisfies
\begin{eqnarray}
V(\theta_{ac},\theta_{cb})=-1 \quad (\forall \, \theta_{ac}, \theta_{cb} > 0).
\label{eq:sec6_single_V_const}
\end{eqnarray}
Equivalently, $F_{\rm cl}^{\rm(path)}=\tfrac{1}{2}$ and the achieved Fisher information exceeds the classical causal-path benchmark such that:
\begin{eqnarray}
\frac{F(\theta_{ab})}{F_{\rm cl}^{\rm(path)}(\theta_{ac},\theta_{cb})} = 2.
\label{eq:sec6_single_gain}
\end{eqnarray}
\end{result}

The deterministic point makes the causal content unmistakable. A classical causal-path narrative demands that information about $\theta_{ab}$ must traverse an intermediate classical mediator $C$ and therefore pay a series penalty. The coherent qubit rotation refuses: it yields twice the Fisher information permitted by any such bottleneck.

This advantage is not merely a bound-level statement. At the deterministic point, the endpoint model is a simple binary fringe $p_0(\theta)=\cos^2((\theta-\vartheta)/2)$.
With $N$ i.i.d. samples, a maximum-likelihood estimate is obtained by matching the frequency $\widehat{p}_0=n_0/N$ and inverting
\begin{eqnarray}
\widehat{\theta} = \vartheta + 2\arccos\sqrt{\widehat{p}_0},
\label{eq:sec6_single_MLE}
\end{eqnarray}
on an interval where the mapping is one-to-one. Asymptotically, the estimator achieves $\Delta\theta \simeq 1/\sqrt{N}$, whereas the classical causal-path frontier predicts $\Delta\theta_{\rm cl} \ge \sqrt{2}/\sqrt{N}$.

\subsection{Generic regimes, numerical landscapes, and adversarial benchmarks}\label{subsec:sec6_numerics}

The deterministic result above might appear as a special symmetric point. Its true role is conceptual: at that point the Fisher information is constant, and the classical series law fails uniformly for every split. The natural question is what happens away from such a symmetry. The answer is that the uniform collapse deforms into structured regions of collapse rather than disappearing. Thus, here we show: (i) a representative landscape where $V<0$ and the metrological advantage $G<0$ persist broadly, (ii) estimator-level achievability of the advantage, and (iii) robustness against an ``adversarial'' classical benchmark that optimizes the split time to help the classical model as much as possible.

\subsubsection*{\emph{(i)} Representative landscape: $V<0$ and $G<0$ in a generic single-qubit setting} 

\begin{figure}[t]
\centering
\includegraphics[width=0.80\textwidth]{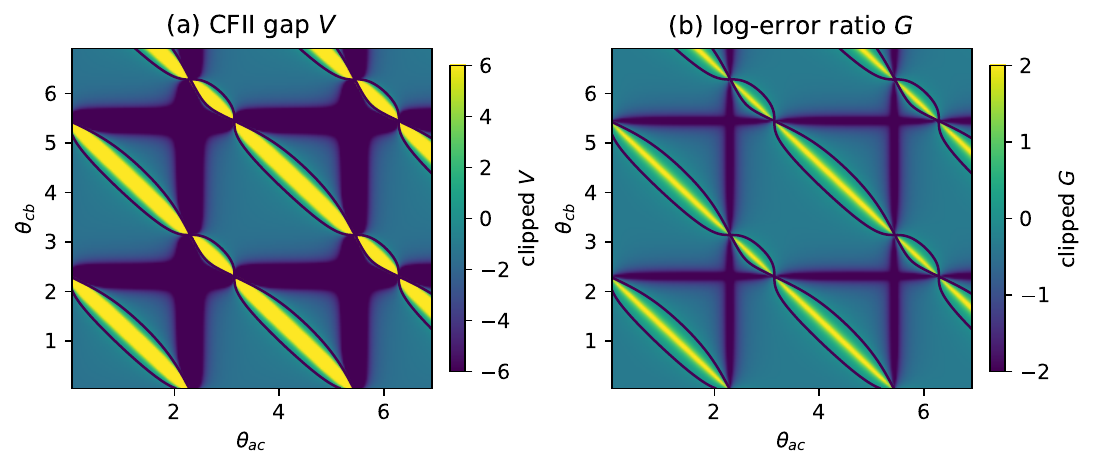}
\caption{Representative single-qubit landscapes for the generic setting $(\vartheta, \varphi)=(0.7\pi, 0.3\pi)$ with $\hat{\sigma}_z$ readout. (a) $V(\theta_{ac},\theta_{cb})$ in Eq.~(\ref{eq:sec6_V_def}) (clipped). (b) $G(\theta_{ac},\theta_{cb})$ in Eq.~(\ref{eq:sec6_G_def}) (clipped), where $G<0$ certifies an operational precision advantage over the classical causal-path benchmark.}
\label{fig:sec6_landscapes}
\end{figure}

The CFII gap $V$ and the gain indicator $G$ for a representative nontrivial single-qubit setting are visualized in Fig.~\ref{fig:sec6_landscapes}: a generic probe $(\vartheta,\varphi)=(0.7\pi,0.3\pi)$ with $\hat{\sigma}_z$ readout.

{\bf By Theorem~\ref{thm:sec6_witness_resource_equiv}}, the negative region in the $V$ landscape is exactly the negative region in the $G$ landscape. Hence, the figure simultaneously shows where the classical causal-path explanation fails and where the metrological advantage over that explanation is guaranteed.

\subsubsection*{\emph{(ii)} Achievability: the gain is not only a bound, but an estimator-level reality}  

\begin{figure}[t]
\centering
\includegraphics[width=0.43\textwidth]{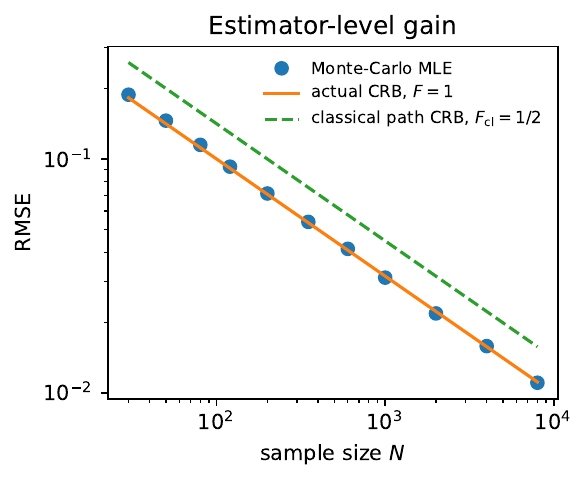}
\caption{Achievability of the advantage for the deterministic single-qubit point. Monte-Carlo RMSE of a simple MLE (markers) versus sample size $N$ (log--log), compared to the quantum CRB $1/\sqrt{N F(\theta_{ab})}$ (solid) and the classical causal-path bound $1/\sqrt{N F_{\rm cl}^{\rm(path)}}$ (dashed). Here $F=1$ and $F_{\rm cl}=1/2$.}
\label{fig:sec6_rmse}
\end{figure}

Fig.~\ref{fig:sec6_rmse} demonstrates that the gain is operationally attainable. For the deterministic point of {\bf Result~\ref{result:sec6_single_deterministic}}, the endpoint likelihood is a simple binary fringe, and a maximum-likelihood estimator saturates the Cram\'er--Rao bound asymptotically. The Monte-Carlo RMSE scaling shows the RMSE approaching $1/\sqrt{N F(\theta_{ab})}$ while staying strictly below the classical causal-path bound $1/\sqrt{N F_{\rm cl}^{\rm(path)}}$ by the predicted factor $\sqrt{2}$ in standard deviation.

\subsubsection*{\emph{(iii)} Adversarial classical benchmark: optimizing the split time}

\begin{figure}[t]
\centering
\includegraphics[width=0.50\textwidth]{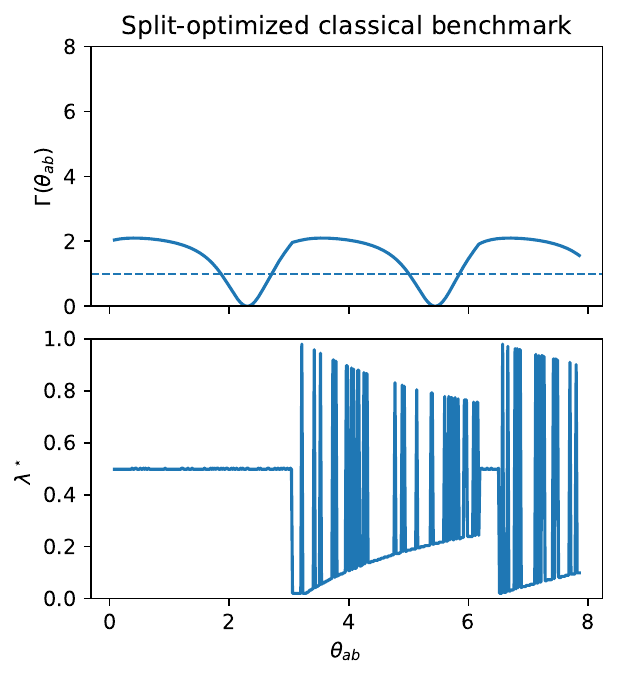}
\caption{Adversarial (split-optimized) classical benchmark for the representative single-qubit setting used in Fig.~\ref{fig:sec6_landscapes}. Top: the ratio $\Gamma(\theta_{ab})=F(\theta_{ab})/F_{\rm cl}^{\rm(opt)}(\theta_{ab})$ from Eq.~(\ref{eq:sec6_Gamma}). The dashed line $\Gamma=1$ is the most generous classical causal-path frontier once the split is optimized; $\Gamma>1$ certifies the split-robust causal-model collapse ({\bf Result~\ref{result:sec6_opt_robust}}). Bottom: the optimal split fraction $\lambda^\star(\theta_{ab})=\theta_{ac}^\star/\theta_{ab}$ from Eq.~(\ref{eq:sec6_lambda_star}).}
\label{fig:sec6_opt_benchmark}
\end{figure}

A skeptical classicalist might attempt the following escape: ``Perhaps the CFII violation occurs only because we chose an unfortunate split time $t_c$. If we pick a different intermediate time, maybe the inequality recovers.'' Within the causal-path hypothesis, this intuition is already inconsistent: a trajectory-like classical mediator should exist at any intermediate time, so the CFII must hold for every split. Nevertheless, to make the comparison maximally conservative, we now give the classical causal-path model the power to choose the split that benefits it the most.

For a fixed total interval $\theta_{ab}$, define the split-optimized classical benchmark
\begin{eqnarray}
F_{\rm cl}^{\rm(opt)}(\theta_{ab}) := \max_{0<\theta_{ac}<\theta_{ab}} \left( F(\theta_{ac})^{-1} + F(\theta_{ab}-\theta_{ac})^{-1} \right)^{-1}.
\label{eq:sec6_Fcl_opt}
\end{eqnarray}
We also define the optimized gain ratio
\begin{eqnarray}
\Gamma(\theta_{ab}) := \frac{F(\theta_{ab})}{F_{\rm cl}^{\rm(opt)}(\theta_{ab})},
\label{eq:sec6_Gamma}
\end{eqnarray}
and the corresponding optimal split fraction
\begin{eqnarray}
\lambda^\star(\theta_{ab}) := \frac{\theta_{ac}^\star(\theta_{ab})}{\theta_{ab}},
\label{eq:sec6_lambda_star}
\end{eqnarray}
where 
\begin{eqnarray}
\theta_{ac}^\star(\theta_{ab}) \in \arg\max_{0<\theta_{ac}<\theta_{ab}} \left( F(\theta_{ac})^{-1}+F(\theta_{ab}-\theta_{ac})^{-1} \right)^{-1}.
\end{eqnarray}
The line $\Gamma(\theta_{ab})=1$ is the most generous classical causal-path frontier once the split is optimized. Here, we can prove the following result:
\begin{result}[Split-robust collapse under the optimized classical benchmark]
\label{result:sec6_opt_robust}
Fix $\theta_{ab}$ and define $F_{\rm cl}^{\rm(opt)}(\theta_{ab})$ by Eq.~(\ref{eq:sec6_Fcl_opt}). If the following holds
\begin{eqnarray}
\Gamma(\theta_{ab})>1,
\label{eq:sec6_Gamma_gt1}
\end{eqnarray}
then the causal-path CFII is violated for every admissible split $\theta_{ab}=\theta_{ac}+\theta_{cb}$:
\begin{eqnarray}
V(\theta_{ac},\theta_{cb}) < 0
\quad
\forall \,\theta_{ac} \in (0,\theta_{ab}) ,\ \theta_{cb}=\theta_{ab}-\theta_{ac}.
\label{eq:sec6_violate_all_splits}
\end{eqnarray}
In particular, no choice of intermediate time $t_c$ can restore a classical causal-path explanation for that $\theta_{ab}$.
\end{result}

\begin{proof}---For each admissible split, let us define the split-dependent benchmark
\begin{eqnarray}
F_{\rm cl}^{\rm(path)}(\theta_{ac},\theta_{cb}) = \left( F(\theta_{ac})^{-1} + F(\theta_{cb})^{-1} \right)^{-1}.
\end{eqnarray}
By definition of the maximum in Eq.~(\ref{eq:sec6_Fcl_opt}),
\begin{eqnarray}
F_{\rm cl}^{\rm(path)}(\theta_{ac},\theta_{cb}) \le F_{\rm cl}^{\rm(opt)}(\theta_{ab}) \quad \mbox{for all admissible splits}.
\label{eq:sec6_path_le_opt}
\end{eqnarray}
If $\Gamma(\theta_{ab}) > 1$, then $F(\theta_{ab}) > F_{\rm cl}^{\rm(opt)}(\theta_{ab})$. By combining with Eq.~(\ref{eq:sec6_path_le_opt}), we obtain
\begin{eqnarray}
F(\theta_{ab}) > F_{\rm cl}^{\rm(path)}(\theta_{ac},\theta_{cb}) \quad \mbox{for all admissible splits}.
\end{eqnarray}
This is equivalent to $V(\theta_{ac},\theta_{cb}) < 0$ by the definition of Eq.~(\ref{eq:sec6_V_def}). Thus, the proof is completed.
\end{proof}

Fig.~\ref{fig:sec6_opt_benchmark} shows $\Gamma(\theta_{ab})$ and $\lambda^\star(\theta_{ab})$ for the generic single-qubit setting used in Fig.~\ref{fig:sec6_landscapes}. The top panel shows that $\Gamma(\theta_{ab})$ exceeds unity over wide ranges of $\theta_{ab}$ even after the split is optimized. This is precisely the ``even an adversarial classical benchmark cannot catch up'' regime: the causal-path explanation collapses regardless of how one tries to place the intermediate classical mediator.

\subsection{Chain-amplified gain in long causal-path decompositions}

The classical causal-path narrative is, in fact, far more committal than the existence of one intermediate node. If the evolution is truly describable by a classical trajectory-like mediation, then every refinement of the path into more intermediate bottlenecks must also admit a consistent classical causal decomposition. This observation leads to a stronger and more dramatic test (as already discussed in Sec.~\ref{subsec:scaling_long_chain}): instead of a single split $\theta_{ab}=\theta_{ac}+\theta_{cb}$, we impose an entire discrete chain of $K$ segments, and ask whether the classical ``series law'' can survive such a refined causal claim.

\subsubsection*{From a single split to a $K$-step causal chain} 

Fix a total interval $\theta_{ab}>0$ and choose intermediate times $t_0=t_a<t_1<\cdots<t_{K-1}<t_K=t_b$, so that the total parameter decomposes additively as
\begin{eqnarray}
\theta_{ab} = \sum_{j=1}^{K}\theta_{j-1,j}, \quad \theta_{j-1,j} := t_j - t_{j-1} > 0.
\end{eqnarray}
For the same measurement used in Sec.~\ref{subsec:sec6_single_qubit}, let $F(\theta)$ denote the (classical) Fisher information of the endpoint measurement statistics after evolution duration $\theta$. The $K$-step causal-path hypothesis asserts that the information about $\theta_{ab}$ must traverse $K-1$ intermediate classical bottlenecks in series, with modular parameter dependence on each link. By the $N$-step chain CFII ({\bf Corollary~\ref{cor:CFII_chain}}), this implies the multi-split series law
\begin{eqnarray}
F(\theta_{ab})^{-1} \ge \sum_{j=1}^{K} F(\theta_{j-1,j})^{-1}.
\end{eqnarray}
It is convenient to package this into a $K$-step violation statistic,
\begin{eqnarray}
V_K(\theta_{0,1},\ldots,\theta_{K-1,K}) := F(\theta_{ab})^{-1} - \sum_{j=1}^{K}F(\theta_{j-1,j})^{-1},
\end{eqnarray}
so that the classical causal-path model class requires $V_K \ge 0$ for all admissible partitions of $\theta_{ab}$. The corresponding $K$-step classical benchmark Fisher information is the harmonic-series composition
\begin{eqnarray}
F^{(K)}_{\mathrm{cl}}(\theta_{0,1},\ldots,\theta_{K-1,K}) := \left(\sum_{j=1}^{K}F(\theta_{j-1,j})^{-1}\right)^{-1}.
\end{eqnarray}
Hence, $V_K < 0$ is again a witness-resource certificate: it simultaneously falsifies the $K$-step classical causal-path class and certifies a precision gap relative to that class. 

Here we summarize our result:
\begin{result}[Chain-amplified collapse and gain for the single-qubit coherent example]
\label{result:long_chain_example}
Assume that, for the fixed readout employed in Sec.~\ref{subsec:sec6_single_qubit}, the Fisher information is constant,
\begin{eqnarray}
F(\theta) = F_0 > 0 \quad (\forall \, \theta > 0).
\end{eqnarray}
Then, for every integer $K \ge 2$ and for every admissible partition $\theta_{ab}=\sum_{j=1}^{K}\theta_{j-1,j}$ with $\theta_{j-1,j} > 0$, the $K$-step causal-path CFII is violated uniformly:
\begin{eqnarray}
V_K(\theta_{0,1}, \ldots, \theta_{K-1,K}) = -\frac{K-1}{F_0} < 0.
\end{eqnarray}
Equivalently, the $K$-step classical benchmark equals
\begin{eqnarray}
F^{(K)}_{\mathrm{cl}}(\theta_{0,1}, \ldots, \theta_{K-1,K}) = \frac{F_0}{K},
\end{eqnarray}
and the certified gain ratio is exactly linear in the chain depth,
\begin{eqnarray}
\Gamma_K(\theta_{ab}) = \frac{F(\theta_{ab})}{F^{(K)}_{\mathrm{cl}}(\theta_{0,1}, \ldots, \theta_{K-1,K})} = K.
\end{eqnarray}
In particular, for $N_{\mathrm{s}}$ i.i.d. repetitions of the endpoint experiment, the classical causal-path model predicts an error floor
\begin{eqnarray}
\Delta\theta_{\mathrm{cl}}^{(K)} \ge \frac{1}{\sqrt{N_{\mathrm{s}}F^{(K)}_{\mathrm{cl}}}} = \sqrt{\frac{K}{N_{\mathrm{s}}F_0}},
\end{eqnarray}
whereas the actual coherent-dynamics readout achieves asymptotically
\begin{eqnarray}
\Delta\theta\simeq \frac{1}{\sqrt{N_{\mathrm{s}}F(\theta_{ab})}} = \frac{1}{\sqrt{N_{\mathrm{s}}F_0}},
\end{eqnarray}
thus beating the $K$-step classical causal-path frontier by a factor $\sqrt{K}$ in standard deviation.
\end{result}

\begin{proof}---Under the assumption $F(\theta)=F_0$, one has $F(\theta_{ab})=F_0$ and $F(\theta_{j-1,j})=F_0$ for all segments. By substituting into the definition of $V_K$, we find
\begin{eqnarray}
V_K = F_0^{-1} - \sum_{j=1}^{K} F_0^{-1} = \frac{1-K}{F_0} = -\frac{K-1}{F_0} < 0,
\end{eqnarray}
proving the uniform violation. Similarly,
\begin{eqnarray}
F^{(K)}_{\mathrm{cl}} = \left(\sum_{j=1}^{K}F_0^{-1}\right)^{-1} = \left(\frac{K}{F_0}\right)^{-1} = \frac{F_0}{K},
\end{eqnarray}
and therefore $\Gamma_K=F_0/(F_0/K)=K$. The error statements follow from the Cram\'er--Rao scaling $\Delta\theta \simeq 1/\sqrt{N_{\mathrm{s}}F}$ applied to the endpoint Fisher information and to the classical benchmark Fisher information.
\end{proof}

{\bf Result~\ref{result:long_chain_example}} applies immediately to the deterministic coherent point established earlier. For the single-qubit setting of Sec.~\ref{subsec:sec6_single_qubit} at $\varphi=\pi/2$, {\bf Result~\ref{result:sec6_single_deterministic}} yields $F(\theta)=1$ for all $\theta$, hence $F_0=1$ and the chain-amplified violation reads
\begin{eqnarray}
V_K=-(K-1), \quad F^{(K)}_{\mathrm{cl}}=\frac{1}{K}, \quad \Gamma_K=K.
\end{eqnarray}
The metrological meaning is clear: the endpoint statistics behave as if the total parameter were estimated without suffering the classical series dilution, whereas any classical $K$-step causal-path explanation must pay a $1/K$ penalty in Fisher information.

In the causal language of this work, increasing $K$ does not correspond to increasing experimental resources. Rather, it corresponds to tightening the classical causal claim. A $K$-step causal-path model asserts the existence of $K-1$ intermediate classical mediators such that the parameter dependence modularizes across the chain. Each additional mediator is a new bottleneck and therefore contributes an additional information-resistance term $F(\theta_{j-1,j})^{-1}$ that must add in series. The coherent dynamics violates the modularity that enforces this orthogonality of score contributions, and the consequence is as dramatic as it is quantitative: the more finely one insists on discretizing a classical trajectory, the more severely the classical series law is contradicted.

\subsubsection*{Adversarially optimized multi-split benchmarks} 

One may extend the ``adversarial benchmark'' logic of Sec.~\ref{subsec:sec6_numerics}(iii) from a single split to an entire partition. Let us define the $K$-split-optimized classical benchmark by
\begin{eqnarray}
F^{(K,\mathrm{opt})}_{\mathrm{cl}}(\theta_{ab}) := \max_{\theta_{0,1},\ldots,\theta_{K-1,K}>0\atop \sum_{j=1}^{K}\theta_{j-1,j}=\theta_{ab}} \left(\sum_{j=1}^{K}F(\theta_{j-1,j})^{-1}\right)^{-1},
\end{eqnarray}
and the corresponding optimized gain ratio
\begin{eqnarray}
\Gamma_K^{(\mathrm{opt})}(\theta_{ab}) := \frac{F(\theta_{ab})}{F^{(K,\mathrm{opt})}_{\mathrm{cl}}(\theta_{ab})}.
\end{eqnarray}
In the constant Fisher information coherent point of {\bf Result~\ref{result:sec6_single_deterministic}}, the optimization is powerless: since $F(\theta)=F_0$ for all segment lengths, every partition yields the same harmonic value $F_0/K$, and therefore
\begin{eqnarray}
F^{(K,\mathrm{opt})}_{\mathrm{cl}}(\theta_{ab})=\frac{F_0}{K},
\quad
\Gamma_K^{(\mathrm{opt})}(\theta_{ab})=K.
\end{eqnarray}
Thus, the chain-amplified collapse is not only split-robust; it is partition-robust.

More generally, whenever the chosen readout yields a bounded Fisher information $F(\theta)\le F_{\max}$ over the relevant domain of segment lengths, the classicalist cannot evade the series penalty even by optimizing the entire partition:
\begin{eqnarray}
F^{(K,\mathrm{opt})}_{\mathrm{cl}}(\theta_{ab}) 
= \max \left(\sum_{j=1}^{K}F(\theta_{j-1,j})^{-1}\right)^{-1}
\le \left(\sum_{j=1}^{K}F_{\max}^{-1}\right)^{-1}
= \frac{F_{\max}}{K}.
\end{eqnarray}
Hence, a $1/K$ dilution of the classical benchmark is an unavoidable consequence of insisting on $K$ passive classical bottlenecks. Whenever the coherent dynamics maintains an $O(1)$ endpoint Fisher information for the total parameter (as it does at the deterministic point, and approximately does in broad regions of the generic landscape), the gain relative to the $K$-step causal-path frontier is naturally amplified with chain depth.

\suppnote{AI-assisted adversarial finite-data stress test}\label{sec:ai_adversarial_stress}

The examples in Sec.~\ref{sec:examples_coherence_entanglement} establish the CFII violation analytically and show that the resulting gain is estimator-achievable. In this section, we add a numerical stress test designed to address a more practical question: can the negative witness be certified from noisy finite data even when a flexible classical causal adversary is allowed to optimize its hidden mediator and stochastic kernels? The answer is affirmative. The purpose of the simulation is not to replace the theorems, but to make their operational content harder to dismiss: we deliberately give the classical causal-path model a strong, differentiable, AI-optimized adversary and verify that it can saturate, but not cross, the CFII frontier.

\subsection{Noisy coherent-fringe data generator}

We use the same binary readout structure as in the coherent single-qubit example, but include two elementary imperfections: visibility loss and symmetric readout error. The simulated endpoint probabilities are
\begin{eqnarray}
p_0^{(\gamma)}(\theta)=\frac{1+z_\gamma(\theta)}{2},\quad
p_1^{(\gamma)}(\theta)=\frac{1-z_\gamma(\theta)}{2},
\label{eq:ai_noisy_probs}
\end{eqnarray}
where
\begin{eqnarray}
z_\gamma(\theta)=\eta_{\rm r}e^{-\gamma\theta}\cos(\theta-\vartheta_0),
\quad
\eta_{\rm r}=1-2\epsilon_{\rm r}.
\label{eq:ai_noisy_z}
\end{eqnarray}
Here $\gamma$ is a dimensionless dephasing-rate parameter and $\epsilon_{\rm r}$ is a symmetric readout-error probability. The derivative is
\begin{eqnarray}
\dot z_\gamma(\theta)
=
\frac{\partial z_\gamma(\theta)}{\partial\theta}
=
-\eta_{\rm r}e^{-\gamma\theta}
\left[
\gamma\cos(\theta-\vartheta_0)+\sin(\theta-\vartheta_0)
\right],
\label{eq:ai_noisy_dz}
\end{eqnarray}
and the binary Fisher information is
\begin{eqnarray}
F_\gamma(\theta)=
\frac{\dot z_\gamma(\theta)^2}{1-z_\gamma(\theta)^2}.
\label{eq:ai_noisy_F}
\end{eqnarray}
The numerical results below use
\begin{eqnarray}
T=\frac{\pi}{2},\quad K=4,\quad \vartheta_0=0,\quad \epsilon_{\rm r}=0.02,
\label{eq:ai_parameters}
\end{eqnarray}
and compare the endpoint experiment at $T$ with an equal $K$-segment classical causal-chain benchmark. The equal-segment witness and gain ratio are
\begin{eqnarray}
V_K(\gamma)
&=&
F_\gamma(T)^{-1}
-
K F_\gamma(T/K)^{-1},
\label{eq:ai_VK_gamma}
\\
\Gamma_K(\gamma)
&=&
\frac{F_\gamma(T)}
{F_\gamma(T/K)/K}.
\label{eq:ai_GammaK_gamma}
\end{eqnarray}
For example, at $\gamma=0.25$ one obtains
\begin{eqnarray}
F_\gamma(T)=0.4202,\quad
F_\gamma(T/K)=0.8065,\quad
V_K=-2.5799,\quad
\Gamma_K=2.0841.
\label{eq:ai_numeric_gamma025}
\end{eqnarray}
The optimized coherent advantage disappears only when $\Gamma_K$ reaches unity; for the parameters above this crossing occurs near $\gamma\simeq0.444$.

\subsection{Finite-shot FI estimation and delta-method certification}

For the binary model in Eq.~(\ref{eq:ai_noisy_probs}), the score values are explicit:
\begin{eqnarray}
s_0(\theta)=\frac{\dot z_\gamma(\theta)}{1+z_\gamma(\theta)},
\quad
s_1(\theta)=-\frac{\dot z_\gamma(\theta)}{1-z_\gamma(\theta)}.
\label{eq:ai_scores}
\end{eqnarray}
Given $N_s$ independent samples in a context $\theta_s$, the plug-in FI estimator is
\begin{eqnarray}
\widehat F_s
=
\frac{1}{N_s}\sum_{i=1}^{N_s}s_{x_i}(\theta_s)^2.
\label{eq:ai_Fhat}
\end{eqnarray}
This is the concrete version of the score estimator discussed in Sec.~\ref{subsec:finite_data}. Let
\begin{eqnarray}
\mu_{4,s}
=
\sum_{x=0,1}
p_x^{(\gamma)}(\theta_s)s_x(\theta_s)^4.
\label{eq:ai_mu4}
\end{eqnarray}
Then
\begin{eqnarray}
{\rm Var}(\widehat F_s)=\frac{\mu_{4,s}-F_s^2}{N_s}.
\label{eq:ai_var_Fhat}
\end{eqnarray}
Applying the delta method to $R_s=F_s^{-1}$ gives
\begin{eqnarray}
{\rm Var}(\widehat R_s)
\simeq
\frac{\mu_{4,s}-F_s^2}{N_s F_s^4}.
\label{eq:ai_var_Rhat}
\end{eqnarray}
For the equal-segment $K$-chain test with independent data in each context, the standard error of
\begin{eqnarray}
\widehat V_K
=
\widehat F(T)^{-1}
-
\sum_{j=1}^{K}\widehat F(T/K)^{-1}
\label{eq:ai_Vhat}
\end{eqnarray}
is therefore
\begin{eqnarray}
{\rm SE}(\widehat V_K)^2
\simeq
\frac{\mu_{4,T}-F_T^2}{N_s F_T^4}
+
K
\frac{\mu_{4,T/K}-F_{T/K}^2}{N_s F_{T/K}^4}.
\label{eq:ai_SE}
\end{eqnarray}
We report the certification significance as
\begin{eqnarray}
Z=-\frac{V_K}{{\rm SE}(\widehat V_K)}.
\label{eq:ai_Zscore}
\end{eqnarray}
For $\gamma=0.25$ and $N_s=10^3$ shots per context, Eq.~(\ref{eq:ai_SE}) gives ${\rm SE}=0.2121$ and hence $Z=12.17$. A direct finite-shot Monte-Carlo simulation gives a $95\%$ interval approximately $[-3.06,-2.22]$ for $\widehat V_K$, well separated from the classical boundary at zero.

\subsection{Classifier score estimation}

The previous subsection used the analytic score only to make the certification transparent. In an actual complex experiment, the likelihood and score may not be available in closed form. We therefore also implement a classifier likelihood-ratio estimator, a standard simulation-based inference idea~\cite{Cranmer2020SBI}. Suppose that one can generate or collect samples at two nearby parameters $\theta+\delta$ and $\theta-\delta$. A binary classifier trained with equal class priors to distinguish these two sample sets has an optimal logit
\begin{eqnarray}
g^\star(x)
=
\log\frac{p(x|\theta+\delta)}{p(x|\theta-\delta)}.
\label{eq:ai_classifier_logit}
\end{eqnarray}
For small $\delta$,
\begin{eqnarray}
\frac{g^\star(x)}{2\delta}
=
\frac{\partial}{\partial\theta}\log p(x|\theta)
+
O(\delta^2),
\label{eq:ai_classifier_score}
\end{eqnarray}
so the FI can be estimated by averaging the squared classifier score. For the binary readout considered here, a saturated neural classifier is equivalent to a two-logit model over $x=0,1$. The closed-form finite-sample estimate used in the code is
\begin{eqnarray}
\widehat s_{\rm cls}(x)
=
\frac{1}{2\delta}
\log
\frac{(n_{+,x}+\alpha)/(N_++2\alpha)}
{(n_{-,x}+\alpha)/(N_-+2\alpha)},
\label{eq:ai_classifier_estimator}
\end{eqnarray}
where $n_{\pm,x}$ are the counts of outcome $x$ in the $\theta\pm\delta$ training samples and $\alpha$ is a smoothing parameter. In the reported figure we used $\delta=0.10$ and $\alpha=5$. The same formula is replaced by a neural network logit when $x$ is high-dimensional.

\subsection{Differentiable modular classical adversary}

We now describe the AI-optimized classical causal adversary. The adversary is not allowed to abandon the causal assumptions that define the CFII. It must remain a modular causal path,
\begin{eqnarray}
p_\phi(c,b|\theta_1,\theta_2)
=
\alpha_\phi(c|\theta_1)\,
\beta_\phi(b|c,\theta_2),
\label{eq:ai_modular_path}
\end{eqnarray}
with $\theta_1=\theta_{ac}$ and $\theta_2=\theta_{cb}$. However, within this constraint it is made deliberately flexible: $c$ is a latent mediator with $L$ possible values, $b$ has $M$ possible endpoint outcomes, and the local stochastic kernels are optimized by gradient descent.

Because a CFII is a local Fisher-information statement, it is enough to parameterize the local values and local derivatives of the kernels. We use softmax tangent models,
\begin{eqnarray}
\alpha_\phi(c|\theta_1)
&=&
{\rm softmax}_c\left[a_c+\dot a_c(\theta_1-\theta_1^0)\right],
\label{eq:ai_alpha_softmax}
\\
\beta_\phi(b|c,\theta_2)
&=&
{\rm softmax}_b\left[d_{cb}+\dot d_{cb}(\theta_2-\theta_2^0)\right].
\label{eq:ai_beta_softmax}
\end{eqnarray}
All logits and logit-derivatives are trainable variables. At the expansion point, let $\alpha_c=\alpha(c|\theta_1^0)$, $\dot\alpha_c=\partial_{\theta_1}\alpha(c|\theta_1)|_{\theta_1^0}$, $\beta_{bc}=\beta(b|c,\theta_2^0)$, and $\dot\beta_{bc}=\partial_{\theta_2}\beta(b|c,\theta_2)|_{\theta_2^0}$. The local module FIs are
\begin{eqnarray}
F_{ac}
=
\sum_c\frac{\dot\alpha_c^2}{\alpha_c},
\quad
F_{cb}
=
\sum_c\alpha_c
\sum_b\frac{\dot\beta_{bc}^2}{\beta_{bc}}.
\label{eq:ai_module_FIs}
\end{eqnarray}
The endpoint marginal and its derivatives are
\begin{eqnarray}
p_b=\sum_c\alpha_c\beta_{bc},
\quad
\partial_1 p_b=\sum_c\dot\alpha_c\beta_{bc},
\quad
\partial_2 p_b=\sum_c\alpha_c\dot\beta_{bc}.
\label{eq:ai_endpoint_derivatives}
\end{eqnarray}
Thus the endpoint Fisher matrix is
\begin{eqnarray}
[F_B]_{ij}
=
\sum_b\frac{(\partial_i p_b)(\partial_j p_b)}{p_b},
\quad i,j\in\{1,2\}.
\label{eq:ai_endpoint_FIM}
\end{eqnarray}
The effective endpoint FI for the sum direction $u=(1,1)^T$ is
\begin{eqnarray}
F_B^{(u)}
=
\left(u^T F_B^{-1}u\right)^{-1},
\label{eq:ai_Feff_B}
\end{eqnarray}
with the Moore--Penrose inverse used in singular cases. The adversary attempts to maximize
\begin{eqnarray}
\Gamma_{\rm adv}
=
\frac{F_B^{(u)}}
{\left(F_{ac}^{-1}+F_{cb}^{-1}\right)^{-1}}.
\label{eq:ai_Gamma_adv}
\end{eqnarray}
The causal-path CFII proves that every modular adversary must satisfy $\Gamma_{\rm adv}\leq1$. Numerically, we set $L=M=5$ and ran 36 independent Adam optimizations~\cite{KingmaBa2015}. The largest value obtained was
\begin{eqnarray}
\max_{\rm restarts}\Gamma_{\rm adv}=0.9999999998,
\label{eq:ai_adv_max}
\end{eqnarray}
with mean $0.99988$ and minimum $0.99869$. Thus the adversary learns to saturate the causal frontier, but it does not enter the forbidden region where the noisy coherent data lie.

\subsection{Numerical results and interpretation}

\begin{figure}[t]
\centering
\includegraphics[width=0.80\textwidth]{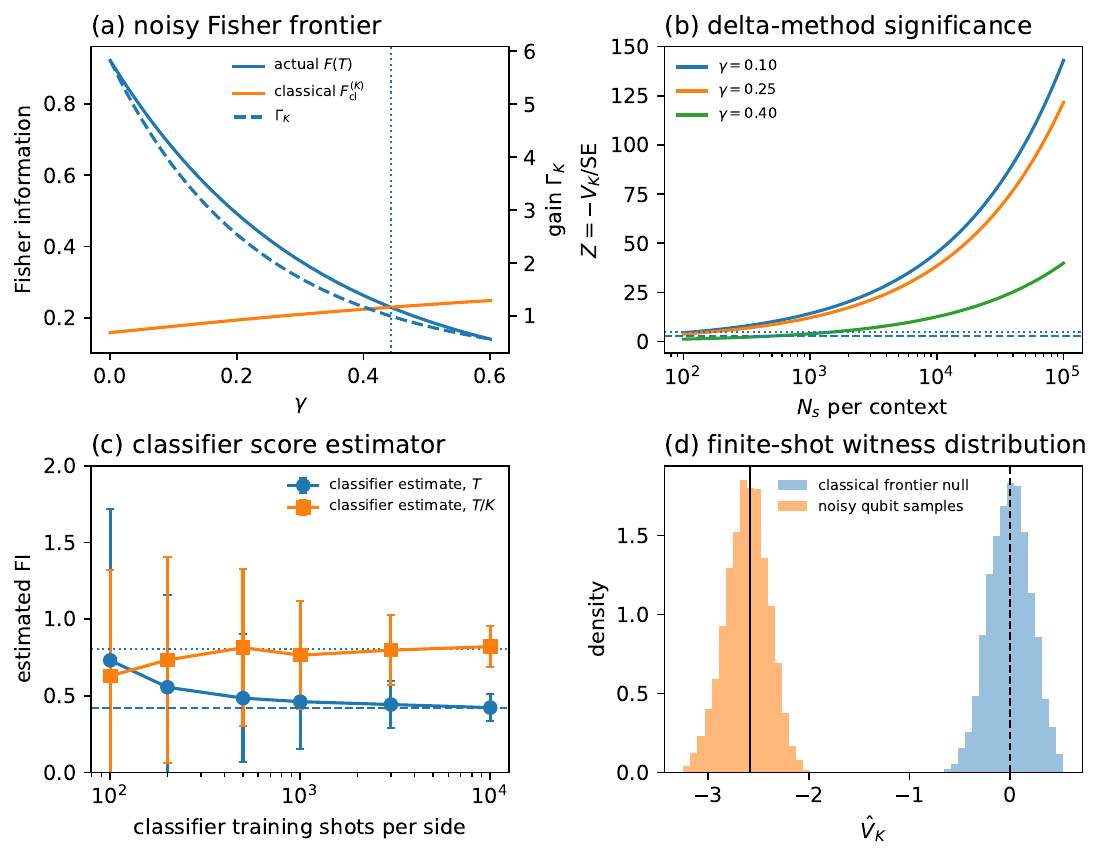}
\caption{Details of the AI-assisted finite-data stress test. (a) Actual noisy endpoint FI $F(T)$, equal-segment classical benchmark $F_{\rm cl}^{(K)}=F(T/K)/K$, and gain ratio $\Gamma_K$ as functions of dephasing rate $\gamma$. The vertical dotted line marks the $\Gamma_K=1$ crossing. (b) Delta-method significance $Z=-V_K/{\rm SE}$ versus shots per context for representative dephasing rates; horizontal lines mark $Z=3$ and $Z=5$. (c) Classifier score-estimator calibration at $\gamma=0.25$. Dashed and dotted horizontal lines indicate the analytic FIs for $T$ and $T/K$, respectively. (d) Finite-shot distribution of $\widehat V_K$ for the noisy coherent data at $\gamma=0.25$ and $N_s=10^3$, compared with a classical frontier null centered at zero.}
\label{fig:ai_stress_details}
\end{figure}

Fig.~\ref{fig:ai_stress_details} gives the detailed numerical checks supporting the main-text stress-test figure. Panel (a) shows that increasing $\gamma$ reduces the endpoint FI faster than the segment FI, eventually closing the advantage. Nevertheless, a finite noise window remains where $F(T)>F(T/K)/K$. Panel (b) shows that this region can be certified with modest sample sizes. At $\gamma=0.40$, the violation is weaker but becomes a multi-sigma effect once $N_s$ is increased to the $10^3$--$10^4$ range. Panel (c) verifies that the classifier score estimator converges to the analytic FI values as its training sample size grows. Panel (d) displays the finite-shot separation between the noisy coherent witness and the classical frontier null.

The stress test should be read in the following way. The noisy coherent data are not idealized: visibility loss, readout error, finite samples, and likelihood-free score estimation are included. The classical comparator is not weak: it is a high-capacity differentiable modular path with latent mediators and optimized local kernels. Yet the two objects occupy different sides of the same CFII frontier. This is precisely the desired robustness statement for a PRL-scale claim: the negative witness is not an artifact of an analytic toy model, nor is it rescued by a flexible hidden-variable path as long as the defining modular causal assumptions remain in force.

\suppnote{Discussion, outlook, and conclusion}\label{sec:discussion_outlook_conclusion}

This work began from a deceptively simple observation: Fisher information is not merely a figure of merit for parameter estimation. It is also a geometric object encoding how probability models deform under infinitesimal parameter changes. Once a probability model is constrained by a classical causal structure, its admissible deformations are constrained as well. The resulting constraints are Fisher-information inequalities of a fundamentally different kind: they are causal Fisher-information inequalities (CFIIs). Their violation is therefore not a numerical surprise, but a logical collapse in the underlying classical causal narrative.

Figure~\ref{fig:cfii_schematic} is intended as a compressed map of the entire witness-to-resource program developed in this SI: a classical causal assumption fixes a precision frontier; a CFII violation certifies that the assumed modular causal mediation is impossible; and the same negative witness directly quantifies a metrological gain. The schematic also makes explicit why the inverse Fisher information is the natural quantity to read as an information resistance, because it is the object that adds along classical causal paths and whose failure to add signals the breakdown of classical mediation.

\begin{figure}[t]
\centering
\includegraphics[width=0.95\textwidth]{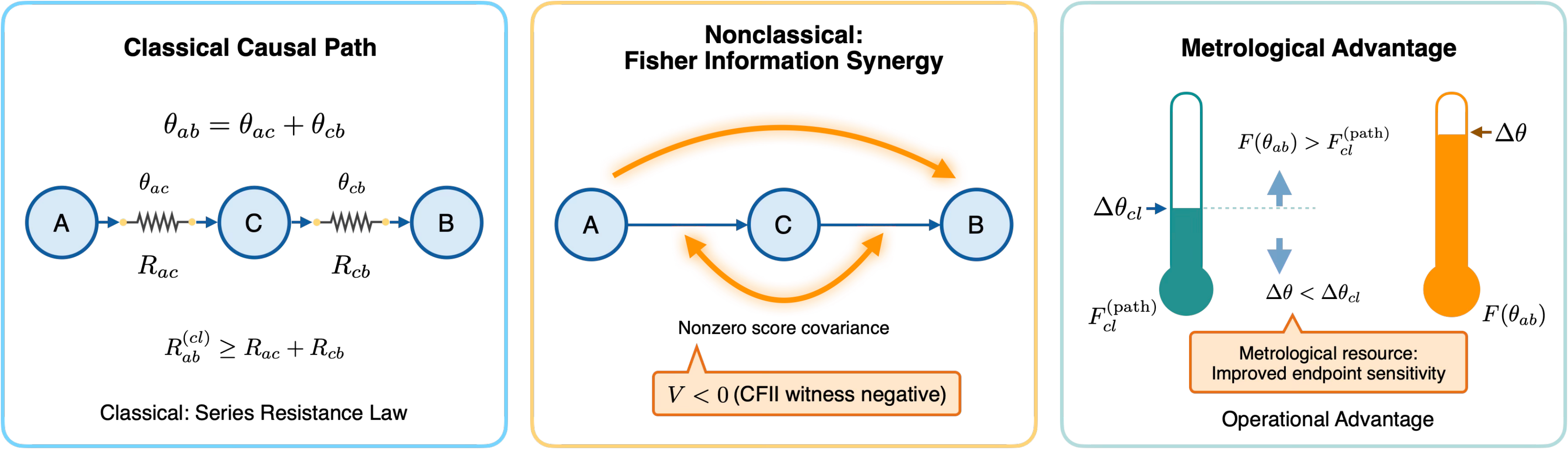}
\caption{Schematic summary of the logical flow of the paper. Left: under a classical causal-path description, the total parameter $\theta_{ab}=\theta_{ac}+\theta_{cb}$ must be mediated through an intermediate classical node $C$, so inverse Fisher information plays the role of an information resistance $R:=F^{-1}$ and obeys the series law $R_{ab}^{\rm(cl)}\ge R_{ac}+R_{cb}$. Center: coherent or otherwise nonclassical dynamics generate Fisher-information synergy (nonzero score covariance between the would-be modules), so the modular classical decomposition fails and the CFII witness becomes negative, $V<0$. Right: the same violation is operationally a metrological resource, because it implies $F(\theta_{ab})>F_{\rm cl}^{\rm(path)}$ and therefore $\Delta\theta<\Delta\theta_{\rm cl}$. For longer decompositions, the classical frontier suffers the $1/K$ series dilution, whereas engineered synergy can maintain an $O(1)$ endpoint sensitivity.}
\label{fig:cfii_schematic}
\end{figure}

\subsection{Discussions and remarks}

\subsubsection*{From Fisher-information inequalities to causal-model criteria} 

The central conceptual step of this work is to elevate Fisher-information inequalities from a metrological statement to a causal statement. In Sec.~\ref{sec:framework}--\ref{sec:cfii}, we showed how a classical causal model class (specified by a Bayesian network, conditional independences, and modular parameterizations) implies the nontrivial inequalities among Fisher informations associated with the different parameter segments or experimental conditions. In Sec.~\ref{sec:falsification}--\ref{sec:resource}, we established a sharpened interpretation: A CFII is a necessary condition for the existence of the assumed classical causal model. Therefore, if the CFII is violated, the appropriate conclusion is ``the assumed classical causal model class is impossible for the observed statistics.'' This is what we call causal-model collapse.

At the same time, the Fisher-information form of these constraints makes the collapse immediately operational. A Fisher information inequality is an inequality about precision frontiers. Thus, when a classical causal model collapses, it collapses in a way that can be quantified as a precision gap: there exists a benchmark error floor enforced by the classical model class, and the experiment achieves a strictly smaller error. This witness-resource equivalence is formalized in {\bf Theorem~\ref{thm:sec6_witness_resource_equiv}} and operationalized in the adversarial-benchmark example in Sec.~\ref{subsec:sec6_numerics}.

\subsubsection*{The physical meaning of CFII violation: breaking the series law of classical mediation} 

The causal-path CFII, expressed in the ``series'' form, i.e., as highlighted in Fig.~\ref{fig:cfii_schematic},
\begin{eqnarray}
F(\theta_{ab})^{-1}\ge F(\theta_{ac})^{-1}+F(\theta_{cb})^{-1}, \quad (\theta_{ab}=\theta_{ac}+\theta_{cb}),
\label{eq:disc_series_law}
\end{eqnarray}
is the prototypical example of the general framework. Its physical meaning is vivid. A classical causal-path explanation asserts that information about the total parameter $\theta_{ab}$ must traverse an intermediate classical mediator $C$. That assertion induces a modular decomposition into two independent estimation segments. In such a modularization, $F^{-1}$ behaves as an information resistance, and the resistances in series must add. The inequality Eq.~(\ref{eq:disc_series_law}) is therefore the unavoidable accounting rule of classical mediation.

The example in Sec.~\ref{sec:examples_coherence_entanglement} shows why coherent quantum dynamics violates this rule. In the single-qubit coherent setting, the parameter is imprinted in a way that is not compatible with a classical intermediate mediator that would render the two segments independent in the sense required by the model. When the experiment violates Eq.~(\ref{eq:disc_series_law}), it is not that the data misbehave; it is that the classical causal-path story cannot be sustained without contradicting the observed local information geometry. In this sense, the CFII violation can be a diagnostic of causal nonclassicality: the incompatibility is not merely with a classical state space, but with a classical causal modularization.

\subsubsection*{Relation to earlier trajectory-testing Fisher-information inequalities} 

Historically, Fisher-information inequalities of the form in Eq.~(\ref{eq:disc_series_law}) were motivated as the tests of discrete evolution trajectories and intermediate-state hypotheses. That viewpoint already contained the core intuition: inserting the intermediate states amounts to imposing a conditional-independence-like structure on the statistical description of dynamics. Our framework clarifies and generalizes this intuition in two decisive ways.

First, it disentangles the logical content of the inequality from the interpretational language of ``trajectory.'' The inequality is not intrinsically about whether the quantum state ``really passes through'' intermediate states. It is about whether the observed family of probability distributions is compatible with a classical causal model class whose defining feature is a modular, conditionally independent causal mediation. The trajectories constitute one special narrative within that class, but the causal logic is more general.

Second, the framework is systematic. Rather than proposing a single inequality for a single hypothesized structure, we provide a general route from a chosen causal model class (DAG plus independence/modularity assumptions) to a family of testable Fisher-information inequalities. This shifts the task from ad hoc witness construction to principled causal-model engineering.

\subsubsection*{Practical implications: designing metrology by causal-model engineering} 

A distinctive practical implication of the present framework is that it suggests a design principle for quantum sensing that is not limited to ``use entanglement'' or ``optimize measurements.'' One may instead proceed as follows:
\begin{itemize}
\item First, identify a classical causal model class that captures the desired operational notion of classicality in the target platform (for instance, a trajectory-like mediation, a context-free constraint, or a Markovian modular decomposition).
\item Second, derive the corresponding CFII constraints.
\item Third, design probe states, dynamics, and readout schemes that maximize the violation of those constraints while respecting experimental limitations.
\item Finally, turn the violation into a resource by explicitly constructing estimators and benchmarks, as done in Sec.~\ref{sec:examples_coherence_entanglement} via the $V$--$G$ equivalence and the adversarial split-optimised benchmark.
\end{itemize}
The crucial point is that this design principle is model-relative and therefore operationally meaningful: it does not claim an absolute advantage over all conceivable strategies, but a certified advantage over an entire classical causal narrative.

\subsubsection*{Limitations and methodological remarks} 

Several caveats are also worth emphasizing.

\begin{itemize}
\item  First, Fisher information is a local quantity. Our inequalities are therefore statements about local estimation geometry around a parameter point, and their operational meaning is clearest in asymptotic, regular regimes where maximum-likelihood estimators saturate Cram\'er--Rao bounds.
\item Second, a failure to observe a violation does not prove the classical causal model. It only establishes consistency. This is a fundamental asymmetry of model falsification.
\item  Third, finite-sample implementations require careful statistical treatment. Estimating Fisher information from data introduces bias and variance, and one must propagate uncertainty to obtain confidence intervals for CFII violation. The Monte-Carlo achievability demonstrations in Sec.~\ref{sec:examples_coherence_entanglement} are a simple operational check, and developing a full hypothesis-testing toolkit for CFII violation is an important direction.
\item Finally, the causal assumptions are explicit and therefore falsifiable, but also context-dependent. The physical meaning of a causal-model collapse depends on the chosen model class. This is a feature rather than a bug: it forces clarity about what is meant by ``classical explanation'' in each platform.
\end{itemize}

\subsection{Outlook}

The present work opens several concrete extensions.

A first direction is algorithmic. Given a DAG with latent variables and a specified parameter modularization, can one systematically enumerate all independent CFIIs, analogously to how entropic inequalities can be algorithmically generated for graphical models. Such a program would turn CFII derivation into a push-button tool for causal metrology.

A second direction is multiparameter estimation. The Fisher information matrix carries richer geometric content than the scalar Fisher information. Generalizing CFIIs to matrix inequalities may reveal new causal constraints and new resource regimes, especially in scenarios where the tradeoffs among parameters are central.

A third direction concerns noise and open dynamics. Realistic sensors operate under decoherence, and the relevant ``classical'' causal narratives may involve additional latent environment nodes or non-Markovian memory. Embedding such effects into causal graphs and deriving corresponding CFIIs could provide a new route to diagnosing non-Markovianity and to harvesting memory effects as metrological resources.

A fourth direction is to interface with broader causal notions in quantum theory. The classical causal models used here are intentionally conservative; one may ask how CFII-like statements extend to quantum causal models, process matrices, or generalized intervention frameworks~\cite{Allen2017QuantumCommonCauses,Barrett2021CyclicQuantumCausal,Rozema2024IndefiniteCausalOrder}. This would sharpen the boundary between classical and quantum causality in information-geometric terms.

\subsection{Conclusion}

We have presented a framework in which Fisher-information inequalities become causal-model criteria. For a chosen classical causal model class, the framework yields testable causal Fisher-information inequalities (CFIIs). Their violation has a dual meaning: it falsifies the classical causal model class and simultaneously certifies a metrological resource, in the concrete sense of a precision gap relative to the best performance compatible with that class.

The examples show that this causal reading is not abstract. Even minimal coherent dynamics can cross the classical causal-path frontier, and the crossing can be made robust against adversarial classical benchmarks. The coQM scheme, discussed in the appendix, fits naturally into this narrative: NSIT is a classical causal constraint, its violation is a causal-model collapse, and the collapsed regime can be harvested as a Fisher-information resource.

As summarized schematically in Fig.~\ref{fig:cfii_schematic}, the framework yields a single causal-to-operational pipeline:
\begin{eqnarray}
\mbox{classical causal hypothesis} \;\Longrightarrow\; \mbox{CFII frontier} \;\Longrightarrow\; \mbox{violation / falsification} \;\Longrightarrow\; \mbox{certified metrological gain}.
\end{eqnarray}
In short, CFIIs unify three themes that are often treated separately: causal inference, nonclassicality, and precision measurement. They provide a principled route to reinterpreting the nonclassical advantage as the operational signature of classical causal-model impossibility, and to designing metrological protocols that deliberately exploit that impossibility.


\appendix


\suppappendix{Contextual quantum metrology as an NSIT-based causal-model collapse and its scope}\label{app:coqm}

The main text develops a general principle: Fisher-information (FI) constraints become causal constraints once one commits to a classical causal model class, and the violation of such constraints simultaneously (i) falsifies that model class and (ii) certifies a metrological resource. The contextual quantum metrology (coQM) scheme, recently studied in Ref.~\cite{Jae2024CoQM}, provides a complementary success story in which the ``classicality'' assumption is not the existence of an intermediate causal bottleneck, but a context-free causal description encoded by the condition of no-signaling in time (NSIT). This appendix makes the connection, in a form compatible with the causal language of our work.

The discussion proceeds in the following steps. At first, we introduce the NSIT as a conditional-independence (CI) constraint induced by a classical causal model. We then show that coQM reduces to conventional metrology when NSIT holds, hence no enhancement is possible in the context-free regime. Finally, we prove that an NSIT-based framework, by itself, cannot subsume the general CFII program developed in our work.

\subsection{Preliminaries: two contexts and the NSIT constraint}\label{app:coqm_setting}

We consider a parameter estimation problem with a single real parameter $\theta$. A probe state $\hat{\rho}(\theta)$ is prepared, and the experimenter can choose between two ``contexts'' as below:
\begin{eqnarray}
S=0: && \mbox{perform measurement $\hat{B}$ only}, \nonumber \\[2pt]
S=1: && \mbox{perform measurement $\hat{A}$ and then $\hat{B}$}.
\label{eq:app_contexts}
\end{eqnarray}
Let $b \in {\cal B}$ denote the outcome of $\hat{B}$, and let $a \in {\cal A}$ denote the outcome of $\hat{A}$. The observable data are:
\begin{eqnarray}
p(b | S=0,\theta) \equiv p(b | B,\theta),
\end{eqnarray}
and
\begin{eqnarray}
p(a,b | S=1,\theta) \equiv p(a,b | A,B,\theta), \quad p(b | S=1,\theta) = \sum_a p(a,b | S=1,\theta).
\label{eq:app_pAB_def}
\end{eqnarray}

The NSIT condition~\cite{Halliwell2017MacrorealismNSIT} is the requirement that the marginal statistics of $B$ are invariant under the context of whether $A$ is performed beforehand:
\begin{eqnarray}
p(b | S=0,\theta) = p(b | S=1,\theta) \quad (\forall \, b,\theta.)
\label{eq:app_nsit_def}
\end{eqnarray}
Note that Eq.~(\ref{eq:app_nsit_def}) is a constraint directly testable from the measured frequencies. Conceptually, it expresses the absence of a causal influence of the ``selection of $A$'' on the marginal prediction of $B$.

\subsection{NSIT as a classical causal-model constraint}\label{app:coqm_nsit_causal}

We now formulate NSIT as a CI constraint, and thereby, as a causal-model class in the sense of our work.
\begin{proposition}[NSIT is a conditional-independence constraint]
\label{prop:app_nsit_CI}
Fix $\theta$ and consider the joint distribution of $(S,B)$ defined by the operational conditional distributions $p(b | S,\theta)$ and an exogenous distribution $p(s)$ of the context choice. Then, the NSIT condition in Eq.~(\ref{eq:app_nsit_def}) holds for all $b$ if and only if
\begin{eqnarray}
B \perp S \,|\, \theta.
\label{eq:app_CI_nsit}
\end{eqnarray}
\end{proposition}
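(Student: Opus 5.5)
The plan is to unpack the conditional-independence statement directly at fixed $\theta$, using the definition of CI in Eq.~(\ref{eq:CI_def}) with $U=B$, $V=S$ and $W$ empty. Because $\theta$ is a fixed label rather than a random variable, ``$B\perp S\,|\,\theta$'' means that the joint distribution $p(s,b|\theta)=p(s)\,p(b|s,\theta)$ factorizes as $p(s)\,p(b|\theta)$. Here $p(b|\theta)=\sum_{s}p(s)\,p(b|s,\theta)$ is the context-averaged marginal. I will assume throughout that both contexts have positive weight, $p(S=0),p(S=1)>0$. This is the natural reading of ``exogenous context choice'', since otherwise the conditional $p(b|S=s,\theta)$ is undefined for the unused context. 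I would state this assumption explicitly, because it is the only point where the equivalence could fail.

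For the forward direction (NSIT $\Rightarrow$ CI), suppose Eq.~(\ref{eq:app_nsit_def}) holds. Then $p(b|S=0,\theta)=p(b|S=1,\theta)=:q(b|\theta)$. Averaging over $p(s)$ gives $p(b|\theta)=q(b|\theta)$, and hence $p(s,b|\theta)=p(s)\,q(b|\theta)=p(s)\,p(b|\theta)$ for every $s,b$. This is exactly $B\perp S\,|\,\theta$.

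For the converse (CI $\Rightarrow$ NSIT), suppose $p(s,b|\theta)=p(s)\,p(b|\theta)$. For each $s$ with $p(s)>0$, divide by $p(s)$ to get $p(b|s,\theta)=p(b|\theta)$. So both context-conditionals equal the common marginal, and they therefore equal each other. This is Eq.~(\ref{eq:app_nsit_def}) for this $\theta$. Since $\theta$ was arbitrary, the equivalence holds for all $\theta$.

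There is no real technical obstacle; the only delicate step is the positivity of the context weights. If one context has zero probability, CI holds trivially while NSIT can fail. I would therefore add a remark that NSIT is equivalent to CI under every full-support choice of $p(s)$, and that it is independent of which such $p(s)$ is chosen. Together, these justify treating NSIT as a causal-model constraint, namely the DAG with no arrow $S\to B$, in the sense of Definition~\ref{def:compatibility}.
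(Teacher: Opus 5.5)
Your proposal is correct and follows essentially the same direct route as the paper: NSIT makes $p(b|s,\theta)$ independent of $s$, so $p(s,b|\theta)=p(s)\,p(b|\theta)$, and conversely the factorization gives $p(b|s,\theta)=p(b|\theta)$ for both contexts. Your explicit full-support assumption $p(S=0),p(S=1)>0$ is a worthwhile addition, since the paper's converse step silently divides by $p(s)$ and the equivalence genuinely fails when one context has zero weight.
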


\begin{proof}---Assume NSIT holds. Then $p(b | S=0,\theta) = p(b | S=1,\theta)$ for all $b$, hence $p(b | s,\theta)$ is independent of $s$. Therefore
\begin{eqnarray}
p(b,s | \theta) = p(s) p(b | s,\theta) = p(s)p(b | \theta),
\end{eqnarray}
which is exactly $B \perp S | \theta$. Conversely, if $B \perp S | \theta$, then $p(b | s,\theta) = p(b | \theta)$ for all $s$, in particular for $s=0$ and $s=1$. This is NSIT. 
\end{proof}

{\bf Proposition~\ref{prop:app_nsit_CI}} shows that NSIT is not a specifically quantum notion; it is the operational face of a CI constraint. In the causal language of Sec.~\ref{sec:framework}--\ref{sec:falsification}, it corresponds to a model class ${\cal M}_{\rm NSIT}$ whose defining feature is that the exogenous context variable $S$ has no causal arrow into $B$.

A standard latent-variable embedding makes this explicit. Let $\Lambda$ denote a latent variable (a ``classical state'' prior to measurement), distributed as $p(\lambda | \theta)$. A minimal noncontextual (context-free) causal model class ${\cal M}_{\rm NSIT}$ can be defined by the DAG constraints
\begin{eqnarray}
S \rightarrow A,
\quad
\Lambda \rightarrow A,
\quad
\Lambda \rightarrow B,
\quad
S \not\rightarrow B,
\label{eq:app_DAG_nsit}
\end{eqnarray}
together with the factorization
\begin{eqnarray}
p(a,b,\lambda|s,\theta) = p(\lambda|\theta) \, p(a | s,\lambda,\theta) \, p(b | \lambda,\theta).
\label{eq:app_factorization_nsit}
\end{eqnarray}
By marginalizing over $\lambda$, we yield
\begin{eqnarray}
p(b | s,\theta) = \sum_\lambda p(\lambda | \theta) p(b | \lambda,\theta),
\label{eq:app_marginal_nsit}
\end{eqnarray}
which is manifestly independent of $s$ and hence implies NSIT. Thus, NSIT is a necessary consequence of the classical causal story in Eq.~(\ref{eq:app_DAG_nsit}) and Eq.~(\ref{eq:app_factorization_nsit}). Within the viewpoint of our work, one may take Eq.~(\ref{eq:app_nsit_def}) itself as the defining operational signature of ${\cal M}_{\rm NSIT}$.

We can now state that:
\begin{theorem}[NSIT violation implies impossibility of the context-free causal model]
\label{thm:app_nsit_falsification}
If there exist $\theta$ and $b$ such that $p(b | S=0,\theta) \neq p(b | S=1,\theta)$, then no model in the class ${\cal M}_{\rm NSIT}$ defined by Eq.~(\ref{eq:app_factorization_nsit}) can reproduce the observed operational statistics.
\end{theorem}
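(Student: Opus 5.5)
The plan is a contrapositive argument built directly on the marginalization identity in Eq.~(\ref{eq:app_marginal_nsit}). I will assume that some member of ${\cal M}_{\rm NSIT}$ reproduces the observed statistics. Concretely, this means there exist a latent variable $\Lambda$ and kernels $p(\lambda|\theta)$, $p(a|s,\lambda,\theta)$, $p(b|\lambda,\theta)$. Their joint distribution factorizes as in Eq.~(\ref{eq:app_factorization_nsit}), and its marginals coincide with the operational data for both contexts $S=0$ and $S=1$, in the sense of Definition~\ref{def:compatibility}.

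The first step is to compute the $B$-marginal in each context. For $S=1$ I will sum the factorized joint over $a$ and $\lambda$. The sum over $a$ uses the normalization $\sum_a p(a|s,\lambda,\theta)=1$, and this yields $p(b|S=1,\theta)=\sum_\lambda p(\lambda|\theta)p(b|\lambda,\theta)$. For $S=0$ no $\hat A$ outcome is recorded. I will treat this context either by embedding it in the same DAG with $A$ taking a trivial value, or simply by noting that the only factor carrying an $s$-dependence is the $A$-kernel. Either way I obtain the same expression, $p(b|S=0,\theta)=\sum_\lambda p(\lambda|\theta)p(b|\lambda,\theta)$.

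The second step is to compare the two marginals. Since both equal the same $s$-independent quantity, NSIT in Eq.~(\ref{eq:app_nsit_def}) holds for every $b$ and every $\theta$. Equivalently, by Proposition~\ref{prop:app_nsit_CI}, we have $B\perp S\,|\,\theta$. This contradicts the hypothesis that $p(b|S=0,\theta)\neq p(b|S=1,\theta)$ for some pair $(\theta,b)$, so no such model can exist. Structurally, this is again an instance of Theorem~\ref{thm:falsification_general}. Here the ``inequality'' is the equality constraint Eq.~(\ref{eq:app_nsit_def}), which every member of the class must satisfy, and the observed point lies outside the feasible set.

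The only delicate step is the modelling of the $S=0$ context. I must make precise that, within ${\cal M}_{\rm NSIT}$, skipping $\hat A$ changes only the $A$-node kernel and leaves $p(\lambda|\theta)$ and $p(b|\lambda,\theta)$ untouched. This is exactly the missing arrow $S\not\rightarrow B$, together with $S$ being exogenous and independent of $\Lambda$. I will state this explicitly as part of the definition of the class, so that the marginal computation above is legitimate for both contexts.
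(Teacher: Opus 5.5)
Your proposal is correct and matches the paper's proof: both argue that any member of ${\cal M}_{\rm NSIT}$ yields the $s$-independent marginal of Eq.~(\ref{eq:app_marginal_nsit}) once the normalized $A$-kernel is summed out, so NSIT must hold, which contradicts the observed $p(b|S=0,\theta)\neq p(b|S=1,\theta)$. Your explicit handling of the $S=0$ context (trivial $A$, with the exogenous $S$ entering only the $A$-kernel) spells out a point the paper leaves implicit in the factorization.
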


\begin{proof}---Any model in ${\cal M}_{\rm NSIT}$ implies Eq.~(\ref{eq:app_marginal_nsit}), hence $p(b | s,\theta)$ must be independent of $s$ for all $(b,\theta)$. This is exactly NSIT. Therefore, if NSIT is violated by the observed statistics, those statistics cannot arise from any member of ${\cal M}_{\rm NSIT}$. For more details, see Ref.~\cite{Jae2024CoQM}.
\end{proof}

{\bf Theorem~\ref{thm:app_nsit_falsification}} is the precise sense in which NSIT violation is a causal-model collapse: it falsifies the entire class of context-free classical causal explanations that forbid the arrow $S \to B$.

\subsection{Operational quasiprobability and the reduction under NSIT}\label{app:coqm_OQ}

The central operational move in coQM is to integrate the two contexts into a single statistical model, the operational quasiprobability (OQ)~\cite{Ryu2013OQudits,Jae2017OQCV,Jae2024CoQM}. The OQ is constructed as
\begin{eqnarray}
w(a,b|\theta) = p(a,b | S=1,\theta) + \frac{1}{2}\bigl[ p(b | S=0,\theta) - p(b | S=1,\theta) \bigr].
\label{eq:app_w_def2}
\end{eqnarray}
Note that OQ is constructed purely from observable probabilities. When NSIT holds, the correction term vanishes and OQ collapses to an ordinary joint probability.

We then have the following proposition:
\begin{proposition}[Reduction of OQ under NSIT]
\label{prop:app_w_reduction}
If NSIT in Eq.~(\ref{eq:app_nsit_def}) holds, then
\begin{eqnarray}
w(a,b | \theta) = p(a,b | S=1,\theta) \quad \forall \, a,b,\theta.
\label{eq:app_w_equals_pab}
\end{eqnarray}
Conversely, if Eq.~(\ref{eq:app_w_equals_pab}) holds for all $a,b,\theta$, then NSIT holds.
\end{proposition}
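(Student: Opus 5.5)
The plan is to read both directions directly off the definition of the operational quasiprobability in Eq.~(\ref{eq:app_w_def2}). The key observation is that $w(a,b|\theta)$ differs from $p(a,b|S=1,\theta)$ by exactly one term. That term is the correction $\frac{1}{2}[p(b|S=0,\theta)-p(b|S=1,\theta)]$, which depends only on $b$ and $\theta$ and not on $a$. I would first define the NSIT defect
\begin{eqnarray}
\Delta(b|\theta) := p(b|S=0,\theta)-p(b|S=1,\theta),
\end{eqnarray}
so that Eq.~(\ref{eq:app_w_def2}) reads $w(a,b|\theta)-p(a,b|S=1,\theta)=\tfrac{1}{2}\Delta(b|\theta)$ identically in $(a,b,\theta)$. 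Both directions then reduce to the statement that $\Delta\equiv 0$ is equivalent to the difference $w - p(\cdot,\cdot|S=1,\theta)$ vanishing.

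\emph{Forward direction.} Assume NSIT, Eq.~(\ref{eq:app_nsit_def}). Then $\Delta(b|\theta)=0$ for every $b$ and $\theta$, so the correction term vanishes. Substituting into the identity above gives Eq.~(\ref{eq:app_w_equals_pab}) for all $a,b,\theta$. No further argument is needed.

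\emph{Converse.} Assume Eq.~(\ref{eq:app_w_equals_pab}) holds for all $a,b,\theta$. The identity then gives $\tfrac{1}{2}\Delta(b|\theta)=0$ for every pair $(b,\theta)$. The only point to check is that each such pair is actually tested. This holds because the outcome set ${\cal A}$ is nonempty: a measurement $\hat{A}$ has at least one outcome, so for each $(b,\theta)$ we can pick some $a\in{\cal A}$ and evaluate the identity there. Hence $p(b|S=0,\theta)=p(b|S=1,\theta)$ for all $b,\theta$, which is NSIT.

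Optionally, I would add a consistency remark. Summing Eq.~(\ref{eq:app_w_def2}) over $a$ gives $\sum_a w(a,b|\theta) = p(b|S=1,\theta)+\tfrac{|{\cal A}|}{2}\Delta(b|\theta)$. For a binary $\hat{A}$ this marginal equals $p(b|S=0,\theta)$. This confirms that the OQ reproduces the $S=0$ statistics of $B$, and that it becomes an ordinary joint probability exactly when $\Delta$ vanishes.

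I do not expect a genuine obstacle here, since the argument is purely definitional. The only care needed is to quantify over all $a$, $b$ and $\theta$ correctly in the converse, and to note that ${\cal A}\neq\emptyset$.
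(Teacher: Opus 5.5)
Your proof is correct. The forward direction matches the paper's. The converse takes a slightly different route.

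The paper does not evaluate the identity pointwise in $a$. Instead, it sums both the assumed identity $w(a,b|\theta)=p(a,b|S=1,\theta)$ and the defining Eq.~(\ref{eq:app_w_def2}) over $a$, then compares the two resulting expressions for $\sum_a w(a,b|\theta)$.

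Your pointwise evaluation at a single $a\in{\cal A}$ is more direct. It also makes the only hypothesis actually needed, ${\cal A}\neq\emptyset$, explicit. The summation route has the side benefit of exhibiting the $B$-marginal of the operational quasiprobability.

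Your optional remark computes that marginal correctly as $p(b|S=1,\theta)+\tfrac{|{\cal A}|}{2}\bigl[p(b|S=0,\theta)-p(b|S=1,\theta)\bigr]$, which equals $p(b|S=0,\theta)$ for binary $\hat{A}$. The paper's displayed marginal instead carries the coefficient $\tfrac{1}{2}$ in place of $\tfrac{|{\cal A}|}{2}$. This slip is harmless for the conclusion, since any nonzero multiple of $p(b|S=0,\theta)-p(b|S=1,\theta)$ vanishing forces NSIT. Your pointwise argument sidesteps the issue entirely.
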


\begin{proof}
If NSIT holds, then $p(b | S=0,\theta) - p(b | S=1,\theta)=0$ for all $b,\theta$, and Eq.~(\ref{eq:app_w_def2}) reduces to Eq.~(\ref{eq:app_w_equals_pab}). Conversely, summing Eq.~(\ref{eq:app_w_equals_pab}) over $a$ yields
\begin{eqnarray}
\sum_a w(a,b | \theta) = p(b | S=1,\theta).
\end{eqnarray}
But summing Eq.~(\ref{eq:app_w_def2}) over $a$ gives
\begin{eqnarray}
\sum_a w(a,b | \theta) = p(b | S=1,\theta) +\frac{1}{2}\bigl[ p(b | S=0,\theta) - p(b | S=1,\theta) \bigr].
\end{eqnarray}
The two expressions imply $p(b | S=0,\theta) = p(b | S=1,\theta)$ for all $b,\theta$, i.e., NSIT.
\end{proof}

\subsection{From witness to resource: Fisher-information consequences}\label{app:coqm_resource}

{\bf Proposition~\ref{prop:app_w_reduction}} already exposes the metrological logic. If NSIT holds, OQ does not introduce a new statistical structure; it merely reproduces the probability of the consecutive measurement. Thus, in the context-free causal regime, coQM cannot create an advantage. To make this statement quantitative, we now relate the OQ to Fisher information.

In Ref.~\cite{Jae2024CoQM}, coQM defines the contextual Fisher information (coFI) by treating $w(a,b | \theta)$ as a statistical model in the parameter regime where $w(a,b | \theta)>0$:
\begin{eqnarray}
F_{\rm co}(\theta) = \sum_{a,b} w(a,b|\theta)\left(\frac{\partial}{\partial\theta}\ln w(a,b|\theta)\right)^2.
\label{eq:app_Fco_def2}
\end{eqnarray}
For comparison, let $F_q(\theta)$ denote the quantum Fisher information (QFI) of the probe family $\hat{\rho}(\theta)$, defined as the supremum of the classical FI over all POVMs on the probe.

Then we have the following theorem:
\begin{theorem}[No advantage in the NSIT (context-free) regime]
\label{thm:app_no_adv_nsit}
Assume NSIT holds and $w(a,b | \theta) > 0$. Then,
\begin{eqnarray}
F_{\rm co}(\theta) = F_{AB}(\theta) \le F_q(\theta),
\label{eq:app_Fco_le_Fq}
\end{eqnarray}
where $F_{AB}(\theta)$ is the classical FI of the consecutive measurement statistics $p(a,b | S=1,\theta)$. Consequently, when coQM is compared to a conventional strategy using the same total number of samples, no precision enhancement beyond the conventional QFI benchmark is possible in the NSIT regime.
\end{theorem}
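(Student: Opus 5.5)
The plan has three steps. First, Proposition~\ref{prop:app_w_reduction} turns the coQM model into an ordinary one. Second, the quantum data-processing bound for a fixed measurement handles the inequality. Third, the additivity law of Theorem~\ref{thm:FI_additivity} settles the claim about equal sample budgets.

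\textbf{Step 1 (equality).} Under NSIT, Proposition~\ref{prop:app_w_reduction} gives $w(a,b|\theta)=p(a,b|S=1,\theta)$ for all $a,b$ and for all $\theta$ in the NSIT regime. NSIT is assumed on a neighbourhood of the working point, as Eq.~(\ref{eq:app_nsit_def}) requires, so we may differentiate the identity in $\theta$:
\[
\partial_\theta \ln w(a,b|\theta)=\partial_\theta \ln p(a,b|S=1,\theta).
\]
Substituting this into Eq.~(\ref{eq:app_Fco_def2}) yields $F_{\rm co}(\theta)=F_{AB}(\theta)$ term by term. The positivity assumption $w>0$ makes the logarithms well defined.

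\textbf{Step 2 (inequality).} The consecutive measurement ``$\hat A$ then $\hat B$'' acts on the probe $\hat\rho(\theta)$ through a fixed, $\theta$-independent instrument. Its joint outcome statistics are therefore those of a single POVM $\{\hat E_{a,b}\}$ on the probe, obtained by composing the Kraus operators of $\hat A$ with the effects of $\hat B$, so that $p(a,b|S=1,\theta)=\tr[\hat E_{a,b}\hat\rho(\theta)]$. This is exactly the embedding described in Sec.~\ref{subsec:quantum_realization}. Since $F_q$ is defined as the supremum of the classical FI over all POVMs, we get $F_{AB}(\theta)\le F_q(\theta)$.

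The one delicate point is the construction of $\hat E_{a,b}$. It must be independent of $\theta$, which requires that $\theta$ enters only through $\hat\rho(\theta)$ and not through the measurements. I would state this assumption explicitly; it is the coQM setting.

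\textbf{Step 3 (consequence).} Take $N$ total samples, split between the contexts $S=0$ and $S=1$. By Theorem~\ref{thm:FI_additivity}, the information is additive across samples. Each context is a fixed POVM on the probe, so each contributes at most $F_q$ per sample: the $S=0$ context gives $F_B\le F_q$ by the same argument as Step 2. The per-sample information is therefore bounded by $F_q$, and the Cram\'er--Rao bound (Theorem~\ref{thm:CRB_linear}, one-parameter case) gives ${\rm Var}\ge 1/(NF_q)$. This is the conventional QFI benchmark.

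For the coFI-based estimator itself, Step 1 shows that the model being used is literally $p(a,b|S=1,\theta)$. Its precision is therefore $1/(NF_{AB})\ge 1/(NF_q)$, so no enhancement is possible.

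I expect the main obstacle to be interpretive rather than technical: deciding how samples are counted when OQ combines two contexts. Step 1 removes this issue, because under NSIT the correction term in Eq.~(\ref{eq:app_w_def2}) vanishes identically. Only $S=1$ data are then effectively used, and the sample count matches a conventional strategy.
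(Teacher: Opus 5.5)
Your proposal is correct and follows the paper's proof. Proposition~\ref{prop:app_w_reduction} gives $w=p(a,b|S=1,\theta)$ and hence $F_{\rm co}=F_{AB}$, and $F_{AB}\le F_q$ holds because the consecutive measurement is one of the POVMs in the supremum defining $F_q$; your Step~3 (additivity across contexts plus the Cram\'er--Rao bound, giving ${\rm Var}\ge 1/(NF_q)$ for any unbiased estimator on the pooled data) is a more explicit version of the paper's comparison with the benchmark $1/\sqrt{2N_sF_q}$ for $2N_s$ probe uses.
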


\begin{proof}---By {\bf Proposition~\ref{prop:app_w_reduction}}, NSIT implies $w(a,b|\theta)=p(a,b|S=1,\theta)$. By substituting this into Eq.~(\ref{eq:app_Fco_def2}), we obtain $F_{\rm co}(\theta)=F_{AB}(\theta)$. The inequality $F_{AB}(\theta) \le F_q(\theta)$ follows from the definition of QFI as the maximum FI achievable by any measurement on $\hat{\rho}(\theta)$. The consecutive measurement defines a valid measurement scheme on the probe and is therefore included in the maximization defining $F_q(\theta)$. Thus, Eq.~(\ref{eq:app_Fco_le_Fq}) holds. Finally, if a total of $2N_s$ probe uses are available, any conventional strategy satisfies the Cram\'er--Rao benchmark $\Delta\theta \ge 1/\sqrt{2N_s F_q(\theta)}$. In the NSIT regime, $F_{\rm co} \le F_q$ and the effective model underlying coQM reduces to the consecutive-measurement model; hence coQM cannot produce an error scaling that beats the conventional benchmark.
\end{proof}

{\bf Theorem~\ref{thm:app_no_adv_nsit}} is the rigorous content behind the informal statement ``if the causal model does not collapse, there is no resource to harvest.'' In the NSIT regime, the OQ carries no additional information beyond the ordinary probability model, and the Fisher information cannot exceed the QFI benchmark.

The decisive point is therefore the violation of NSIT. When $p(b | S=0,\theta) \neq p(b|S=1,\theta)$, the correction term in Eq.~(\ref{eq:app_w_def2}) becomes nonzero. Then, $w$ becomes a new effective statistical model that integrates two contexts. This is precisely where coQM can exhibit a precision enhancement.

\subsection{NSIT does not subsume the CFII program}\label{app:coqm_scope_limit}

The NSIT condition is an extremely sharp and operationally meaningful constraint, but it constrains only one specific kind of conditional independence, namely the absence of a causal arrow from a context choice into a marginal prediction. In the coQM/NSIT setting, one introduces a context variable $S$ (e.g., whether a preliminary measurement $A$ is selected before measuring $B$) and imposes the constraint
\begin{eqnarray}
p(b | S=0,\theta) = p(b | S=1,\theta)\quad(\forall \, b,\theta),
\label{eq:app_nsit_repeat}
\end{eqnarray}
which is nothing but the CI relation $B \perp S | \theta$ in a particular DAG. In other words, NSIT decides whether the marginal statistics of $B$ are invariant under the context choice $S$.

Our CFII program addresses a different question. Given an arbitrary Bayesian network (DAG, possibly with latent variables), one extracts its CI relations and then derives Fisher-information inequalities implied by the entire causal structure. The causal-path CFII in our work is not a constraint about context dependence at fixed $\theta$, but a constraint about modularity and additivity along a causal chain. In particular, the causal-path CFII
\begin{eqnarray}
F(\theta_{ab})^{-1} \ge F(\theta_{ac})^{-1} + F(\theta_{cb})^{-1}, \quad (\theta_{ab}=\theta_{ac}+\theta_{cb}),
\label{eq:app_path_CFII_repeat}
\end{eqnarray}
encodes the series law of information that follows when a causal-path model enforces (i) a modular decomposition into two independent estimation modules and (ii) an additive parameter split. NSIT, by itself, does not impose either (i) or (ii), and therefore cannot be expected to reproduce Eq.~(\ref{eq:app_path_CFII_repeat}).

The separation can be made completely explicit. We now exhibit an operational model that is perfectly NSIT (hence fully compatible with the NSIT-based causal class ${\cal M}_{\rm NSIT}$), yet violates the causal-path CFII everywhere once one attempts to apply the causal-path modularity logic.
\begin{proposition}[NSIT-compatible models can be maximally CFII-violating]
\label{prop:app_nsit_not_generalize_strong}
There exist operational models for which the NSIT condition in Eq.~(\ref{eq:app_nsit_repeat}) holds identically for all $\theta$ (hence the model is compatible with ${\cal M}_{\rm NSIT}$), but for which the causal-path CFII in Eq.~(\ref{eq:app_path_CFII_repeat}) is violated for \emph{every} nontrivial split $\theta_{ab}=\theta_{ac}+\theta_{cb}$ with $\theta_{ac}>0$ and $\theta_{cb}>0$. Therefore, an NSIT-only framework cannot subsume the causal-path CFII framework.
\end{proposition}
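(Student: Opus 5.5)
The proof is by explicit construction, reusing the deterministic single-qubit point of Result~\ref{result:sec6_single_deterministic}. We take the probe $\ket{\psi(\vartheta,\pi/2)}$ evolved by $\hat{U}(\theta)$ as in Eq.~(\ref{eq:sec6_single_U}). We define the two contexts of Eq.~(\ref{eq:app_contexts}) with $\hat{B}$ the $\hat{\sigma}_z$ readout of Eq.~(\ref{eq:sec6_single_readout}). For $\hat{A}$ we pick a measurement that does not disturb the $\hat{B}$ marginal. The cleanest choice is $\hat{A}=\hat{\sigma}_z$ itself, performed immediately before $\hat{B}$ (a nonselective projective measurement in the same basis). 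Alternatively, $\hat{A}$ can be a trivial (outcome-independent, $\theta$-independent) instrument, e.g.\ a coin flip $a$ with identity Kraus action. Either way, the nonselective channel of $\hat{A}$ commutes with the $\hat{B}$ POVM, so $\sum_a p(a,b|S=1,\theta)=\tr[\hat{M}_b\hat{\rho}(\theta)]=p(b|S=0,\theta)$ for all $b,\theta$. This gives Eq.~(\ref{eq:app_nsit_repeat}) identically. By Proposition~\ref{prop:app_nsit_CI} and the embedding in Eqs.~(\ref{eq:app_DAG_nsit})--(\ref{eq:app_factorization_nsit}), the model then lies in ${\cal M}_{\rm NSIT}$. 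For an explicit witness, set $\lambda=b$ with $p(\lambda|\theta)=p(b|S=0,\theta)$ and $p(b|\lambda,\theta)=\delta_{b\lambda}$. Then let $p(a|s,\lambda,\theta)$ reproduce the $A$ record; for $\hat{A}=\hat{\sigma}_z$ this is simply $a=\lambda$ when $s=1$.

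Next we evaluate the causal-path CFII on the $B$ record, which is the same in both contexts. Result~\ref{result:sec6_single_deterministic} gives $F(\theta)=1$ for all $\theta$. Hence, for every split with $\theta_{ac},\theta_{cb}>0$, Eq.~(\ref{eq:sec6_V_def}) gives $V=1-1-1=-1<0$, and the ratio to the classical benchmark equals $2$. This is the sense of \emph{maximally} violating we adopt: the violation holds uniformly in the split, and it saturates the synergy bound of Theorem~\ref{thm:synergy_formula}. That bound gives $\sup_J F^{(u)}=\min\{F_1,F_2\}=1$, which equals the endpoint FI here. We may also remark that Result~\ref{result:long_chain_example} extends this to $V_K=-(K-1)$ for every $K$-partition.

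The concluding sentence follows logically. NSIT holds, so ${\cal M}_{\rm NSIT}$ (via Theorem~\ref{thm:app_nsit_falsification}) raises no objection. Yet Theorem~\ref{thm:path_falsification} rules out the causal-path class. Hence membership in ${\cal M}_{\rm NSIT}$ does not imply the path CFII, and no NSIT-only criterion can reproduce Eq.~(\ref{eq:app_path_CFII_repeat}).

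The main obstacle is conceptual rather than computational. We must make precise what ``the causal-path CFII is violated'' means for an operational model whose only record is $B$, without circularly assuming a path model. We would follow the convention of Sec.~\ref{subsec:ex_unified_benchmark}: $F(\theta)$ is the endpoint FI after duration $\theta$, and the segment FIs are those of the same readout at durations $\theta_{ac},\theta_{cb}$. We would then note that Theorem~\ref{thm:CFII_path} shows that any path-model realization would force $V\ge0$. We also need to check that the chosen $\hat{A}$ is nontrivial enough to count as a genuine two-context experiment. If a degenerate $\hat{A}$ is deemed unsatisfactory, we would use $\hat{A}=\hat{\sigma}_z$, for which NSIT still holds exactly because the measurement is diagonal in the readout basis.
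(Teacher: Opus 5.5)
Your construction is the same as the paper's: the paper also uses the constant-FI fringe $p(0|\theta)=\cos^2(\theta/2)$, which is your $\varphi=\pi/2$ family at $\vartheta=0$. It likewise enforces NSIT by letting $\hat{A}$ coincide or commute with $\hat{B}$ with no parameter encoding in between, and evaluates $V=1-1-1=-1$ for every split.

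Your explicit latent model ($\lambda=b$, $p(b|\lambda,\theta)=\delta_{b\lambda}$) is a useful extra, because it makes membership in ${\cal M}_{\rm NSIT}$ concrete, whereas the paper simply takes NSIT as the defining signature of that class. The synergy-saturation gloss on ``maximally'' is optional and holds only in a limiting sense: the endpoint Fisher matrix in $(\theta_{ac},\theta_{cb})$ has all entries equal to $F(\theta_{ab})$, so it is singular and lies outside the positive-definite hypothesis of the synergy theorem.
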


\begin{proof}---Consider the binary one-parameter family
\begin{eqnarray}
p(0 | \theta)=\cos^2\left(\tfrac{\theta}{2}\right), \quad p(1 | \theta)=\sin^2\left(\tfrac{\theta}{2}\right).
\label{eq:app_binary_family}
\end{eqnarray}
This family is regular and has constant (classical) Fisher information. Since $\partial_\theta p(0 | \theta) = -\frac{1}{2}\sin\theta$ and $p(0|\theta)\bigl(1-p(0 | \theta)\bigr)=\tfrac{1}{4}\sin^2\theta$, the Fisher information
\begin{eqnarray}
F(\theta) = \sum_{b=0,1} p(b|\theta)\left(\frac{\partial}{\partial\theta}\log p(b|\theta)\right)^2 = \frac{\bigl(\partial_\theta p(0|\theta)\bigr)^2}{p(0|\theta)\bigl(1-p(0|\theta)\bigr)} = 1
\quad(\forall\,\theta),
\label{eq:app_FI_const_1}
\end{eqnarray}
where the apparent $0/0$ points at $\theta=0,\pi,2\pi,\ldots$ are removable and the continuous extension yields the same constant.

\medskip\noindent
{\bf Step 1: enforce NSIT.} Introduce a context label $S \in \{0,1\}$ and define the marginal model of $B$ to be context-independent:
\begin{eqnarray}
p(b | S=0,\theta) = p(b | S=1,\theta) = p(b|\theta) \quad (\forall \, b,\theta),
\label{eq:app_nsit_extension}
\end{eqnarray}
with $p(b | \theta)$ given by Eq.~(\ref{eq:app_binary_family}). Then, Eq.~(\ref{eq:app_nsit_repeat}) holds identically, i.e., the model is perfectly NSIT, and therefore, compatible with ${\cal M}_{\rm NSIT}$. Operationally, one may realize such an NSIT situation, for example, by letting the ``optional'' measurement ${A}$ commute with (or even coincide with) ${B}$ and ensuring that no parameter encoding occurs between them; then selecting ${A}$ does not disturb the marginal prediction of ${B}$.

\medskip\noindent
{\bf Step 2: apply the causal-path CFII to an additive split.} Now consider an additive decomposition $\theta_{ab}=\theta_{ac}+\theta_{cb}$ with $\theta_{ac}, \theta_{cb}>0$ and evaluate the CFII gap
\begin{eqnarray}
V(\theta_{ac},\theta_{cb}) := F(\theta_{ab})^{-1} - F(\theta_{ac})^{-1} - F(\theta_{cb})^{-1}.
\label{eq:app_V_def}
\end{eqnarray}
Using Eq.~(\ref{eq:app_FI_const_1}), we obtain
\begin{eqnarray}
V(\theta_{ac},\theta_{cb}) = 1-1-1 = -1 < 0,
\label{eq:app_V_minus1}
\end{eqnarray}
for every nontrivial split. Hence, the causal-path CFII in Eq.~(\ref{eq:app_path_CFII_repeat}) is violated everywhere. This proves the claim.
\end{proof}

{\bf Proposition~\ref{prop:app_nsit_not_generalize_strong}} sharpens the scope statement in a causal-model language. NSIT constrains a single arrow--prohibition ($S \not\to B$) and is silent about whether a process admits a modular causal mediation $A \to C \to B$ with an additive parameter split. The causal-path CFII, by contrast, is precisely the quantitative footprint of that modular mediation: it is derived from the path factorization and from the causal data-processing law (Sec.~\ref{sec:cfii}). Therefore, an NSIT-based viewpoint cannot be a universal ``generalization principle'' for the CFII program.

This also clarifies the narrative hierarchy. coQM (and NSIT) fits naturally as one special instance of our broader causal perspective: NSIT is one CI constraint, its violation is one type of causal-model collapse, and coQM turns that collapse into a Fisher-information gain, exactly in the sense of the witness-to-resource transition of Sec.~\ref{sec:resource}. Our CFII framework, however, is strictly broader: it provides a systematic route from arbitrary classical causal model classes (DAGs + modularity) to testable Fisher-information inequalities and to resource interpretations of their violations.


%

\end{document}